\newif\ifabstract
\newif\iffull
\newcounter{section-preserve}
\newcounter{theorem-preserve}
\newcounter{lemma-preserve}
\newcommand{\blank}[1]{}
\newtoks\magicAppendix
\newtoks\magictoks
\newif\iflater
\long\def\later#1{\magictoks={#1}%
  \edef\magictodo{\noexpand\magicAppendix={\the\magicAppendix \par
    \the\magictoks%
  }}
  \magictodo}
\long\def\both#1{\magictoks={#1}%
  \edef\magictodo{\noexpand\magicAppendix={\the\magicAppendix \par
    \noexpand\setcounter{theorem-preserve}{\noexpand\arabic{theorem}}%
    \noexpand\setcounter{theorem}{\arabic{theorem}}%
    \noexpand\setcounter{lemma-preserve}{\noexpand\arabic{lemma}}%
    \noexpand\setcounter{lemma}{\arabic{lemma}}%
    \noexpand\setcounter{section-preserve}{\noexpand\arabic{section}}%
    \noexpand\setcounter{section}{\arabic{section}}%
	\noexpand\let\noexpand\oldsection=\noexpand\thesection
	\noexpand\def\noexpand\thesection{\thesection}
	\noexpand\let\noexpand\oldlabel=\noexpand\label
	\noexpand\let\noexpand\label=\noexpand\blank
    \the\magictoks%
    \noexpand\setcounter{theorem}{\noexpand\arabic{theorem-preserve}}%
    \noexpand\setcounter{section}{\noexpand\arabic{section-preserve}}%
	\noexpand\let\noexpand\thesection=\noexpand\oldsection
	\noexpand\let\noexpand\label=\noexpand\oldlabel
  }}
  \magictodo
  \the\magictoks}
\long\def\later#1{#1}
\long\def\both#1{#1}
\long\def\magicappendix{
	\latertrue%
	\the\magicAppendix%
}
\begin{document}

\title{\textbf{Scaled pier fractals do not strictly self-assemble}\footnote{This work is an extension of the conference paper titled, ``Scaled tree fractals do not strictly self-assemble'' \cite{ScaledTreeFractals}.}}

\author{%
David Furcy\thanks{Department of Computer Science
  University of Wisconsin--Oshkosh, Oshkosh, WI 54901, USA.
  \url{furcyd@uwosh.edu}}
\and
Scott M. Summers\thanks{Department of Computer Science
  University of Wisconsin--Oshkosh, Oshkosh, WI 54901, USA. \url{summerss@uwosh.edu}}
}

\date{}
\maketitle

\begin{abstract}
A \emph{pier fractal} is a discrete self-similar fractal whose
generator contains at least one \emph{pier}, that is, a member of the
generator with exactly one adjacent point. Tree fractals and
pinch-point fractals are special cases of pier fractals. In this
paper, we study \emph{scaled pier fractals}, where a \emph{scaled
  fractal} is the shape obtained by replacing each point in the
original fractal by a $c \times c$ block of points, for some $c \in
\Z^+$.  We prove that no scaled discrete self-similar pier fractal
strictly self-assembles, at any temperature, in Winfree's abstract
Tile Assembly Model.
\end{abstract} 

\section{Introduction}
The stunning, often mysterious complexities of the natural world, from nanoscale crystalline structures to unthinkably massive galaxies, all arise from the same elemental process known as \emph{self-assembly}. In the absence of a mathematically rigorous definition, self-assembly is colloquially thought of as the process through which simple, unorganized components spontaneously combine, according to local interaction rules, to form some kind of organized final structure. A major objective of nanotechnology is to harness the power of self-assembly, perhaps for the purpose of engineering atomically precise medical, digital and mechanical components at the nanoscale. One strategy for doing so, developed by Nadrian Seeman, is \emph{DNA tile self-assembly} \cite{Seem82, Seem90}.

In DNA tile self-assembly, the fundamental components are ``tiles'', which are comprised of interconnected DNA strands. Remarkably, these DNA tiles can be ``programmed'', via the careful configuration of their constituent DNA strands, to automatically coalesce into a desired target structure, the characteristics of which are completely determined by the ``programming'' of the DNA tiles. In order to fully realize the power of DNA tile self-assembly, we must study the algorithmic and mathematical underpinnings of tile self-assembly.

Perhaps the simplest mathematical model of algorithmic tile
self-assembly is Erik Winfree's abstract Tile Assembly Model (aTAM)
\cite{Winf98}. The aTAM is a deliberately over-simplified,
combinatorial model of nanoscale (DNA) tile self-assembly that
``effectivizes'' classical Wang tiling \cite{Wang61} in the sense that
the former augments the latter with a mechanism for sequential
``growth'' of a tile assembly. Very briefly, in the aTAM, the
fundamental components are un-rotatable, translatable square ``tile
types'' whose sides are labeled with (alpha-numeric) glue ``colors''
and (integer) ``strengths''. Two tiles that are placed next to each
other \emph{bind} if the glues on their abutting sides match in both
color and strength, and the common strength is at least a certain
(integer) ``temperature''.
Self-assembly starts from a ``seed'' tile
type, typically assumed to be placed at the origin of the coordinate
system, and proceeds nondeterministically and asynchronously as tiles
bind to the seed-containing assembly one at a time.

Despite its deliberate over-simplification, the aTAM is a
computationally expressive model. For example, Winfree \cite{Winf98}
proved that the model is Turing universal, which means that, in
principle, the process of self-assembly can be directed by any
algorithm. In this paper, we study the extent to which tile sets in
the aTAM can be algorithmically directed to ``strictly'' self-assemble
(i.e., place tiles at and only at locations that belong to) shapes
that are self-similar ``pier fractals''.

Intuitively, a ``pier fractal'' is a just-barely connected, self-similar fractal that contains the origin, as well as infinitely many, arbitrarily-large subsets of specially-positioned points that either lie at or ``on the far side'' from the origin of a pinch-point location (we make this notion precise in Section~\ref{sec:pier-fractals}). Note that ``tree'' and ``pinch-point'' fractals constitute notable, previously-studied sub-classes of pier fractals (e.g., see \cite{jSSADST, LutzShutters12, jSADSSF} for definitions).

There are examples of prior results related to the strict self-assembly of fractals in the aTAM. For example, Theorem 3.2 of \cite{jSSADST} bounds from below the size of the smallest tile set in which an arbitrary shape $X$ strictly self-assembles, by the depth of $X$'s largest finite sub-tree. Although not stated explicitly, an immediate corollary of this result is that no tree-fractal strictly self-assembles in the aTAM. In \cite{LutzShutters12}, Lutz and Shutters prove that a notable example of a tree fractal, the Sierpinski triangle, does not even ``approximately'' strictly self-assemble, in the sense that the discrete fractal dimension (see \cite{ZD}) of the symmetric difference of any set that strictly self-assembles and the Sierpinski triangle is at least that of the latter, which is approximately $\log_2 3$. Theorem 3.12 of \cite{jSADSSF}, the only prior result related to (the impossibility of) the strict self-assembly of pinch-point fractals, is essentially a qualitative generalization of Theorem 3.2 of \cite{jSSADST}.

While the strict self-assembly of certain classes of fractals in the
aTAM has been studied previously, \emph{nothing} is known about the
strict self-assembly in the aTAM of scaled-up versions of fractals,
where ``scaled-up'' means that each point in the original shape is
replaced by a $c \times c$ block of points, for some $c \in
\Z^+$. After all, certain classes of fractals defined by intricate
geometric properties, such as the existence of ``pinch-points'' or
``tree-ness'', are not closed under the scaling operation.
To see this, consider the full
connectivity graph of any shape in which each point in the shape is
represented by one vertex and edges exist between vertices that
represent adjacent points in the shape. If this graph is a tree and/or
contains one or more pinch-points, then the scaled-up version of the
original shape (with $c>1$) is not a tree and does not contain any
pinch points. This means that prior proof techniques that exploit
similar subtle geometric sub-structures of fractals (e.g.,
\cite{jSADSSF, jSSADST, LutzShutters12}) simply cannot be applied to
scaled-up versions of fractals. Thus, in this paper, we ask if it is
possible for a scaled-up version of a pier fractal to strictly
self-assemble in the aTAM.

The main contribution of this paper provides an answer to the previous
question, perhaps not too surprisingly to readers familiar with the
aTAM, in the negative: we prove that there is no pier fractal that
strictly self-assembles in the aTAM at any positive scale
factor. Furthermore, our definition of pier fractal includes, as a
strict subset, the set of all pinch-point fractals from
\cite{jSADSSF}. Our proof makes crucial use of a (modified version of
a) recent technical lemma developed by Meunier, Patitz, Summers,
Theyssier, Winslow and Woods \cite{WindowMovieLemma}, known as the
``Window Movie Lemma'' (WML). This (standard) WML is a kind of pumping
lemma for self-assembly since it gives a sufficient condition for
taking any pair of tile assemblies, at any temperature, and
``splicing'' them together to create a new valid tile assembly. Our
modified version of the WML, which we call the ``Closed Window Movie
Lemma'' (see Section~\ref{sec:window-movie-lemma} for a formal
statement and proof), allows one to replace a portion of a tile
assembly with another portion of the same assembly, assuming a certain
extra ``containment'' condition is met. Moreover, unlike in the
standard WML that lacks the extra containment assumptions, the
replacement of one part of the tile assembly with another in our
Closed WML only goes ``one way'', i.e., the part of the tile assembly
being used to replace another part cannot itself be replaced by the
part of the tile assembly it is replacing.


\section{Definitions}\label{sec-definitions}
In this section, we give a formal definition of Erik Winfree's
abstract Tile Assembly Model (aTAM), define ``pier fractals'' and
develop a ``Closed'' Window Movie Lemma.  

\subsection{Formal description of the abstract Tile Assembly Model}
\label{sec:tam-formal}
This section gives a formal definition of the abstract Tile Assembly Model (aTAM)~\cite{Winf98}. For readers unfamiliar with the aTAM,~\cite{Roth01} gives an excellent introduction to the model.

Fix an alphabet $\Sigma$.
$\Sigma^*$ is the set of finite strings over $\Sigma$. Let $\Z$, $\Z^+$, and $\N$ denote the set of integers, positive integers, and nonnegative integers, respectively. Given $V \subseteq \Z^2$, the \emph{full grid graph} of $V$ is the undirected graph $\fullgridgraph_V=(V,E)$,
such that, for all $\vec{x}, \vec{y}\in V$, $\left\{\vec{x},\vec{y}\right\} \in E \iff \| \vec{x} - \vec{y}\| = 1$, i.e., if and only if $\vec{x}$ and $\vec{y}$ are adjacent in the $2$-dimensional integer Cartesian space.

A \emph{tile type} is a tuple $t \in (\Sigma^* \times \N)^{4}$, e.g., a unit square, with four sides, listed in some standardized order, and each side having a \emph{glue} $g \in \Sigma^* \times \N$ consisting of a finite string \emph{label} and a nonnegative integer \emph{strength}.

We assume a finite set of tile types, but an infinite number of copies of each tile type, each copy referred to as a \emph{tile}. A tile set is a set of tile types and is usually denoted as $T$.

A {\em configuration} is a (possibly empty) arrangement of tiles on
the integer lattice $\Z^2$, i.e., a partial function $\alpha:\Z^2
\dashrightarrow T$.  Two adjacent tiles in a configuration
\emph{interact}, or are \emph{attached}, if the glues on their
abutting sides are equal (in both label and strength) and have
positive strength.  Each configuration $\alpha$ induces a
\emph{binding graph} $\bindinggraph_\alpha$, a grid graph whose
vertices are positions occupied by tiles, according to $\alpha$, with
an edge between two vertices if the tiles at those vertices
bind. An \emph{assembly} is a connected, non-empty configuration,
i.e., a partial function $\alpha:\Z^2 \dashrightarrow T$ such that
$\fullgridgraph_{\dom \alpha}$ is connected and $\dom \alpha \neq
\emptyset$.

Given $\tau\in\Z^+$, $\alpha$ is \emph{$\tau$-stable} if every cut-set
of~$\bindinggraph_\alpha$ has weight at least $\tau$, where the weight
of an edge is the strength of the glue it represents.\footnote{A
  \emph{cut-set} is a subset of edges in a graph which, when removed
  from the graph, produces two or more disconnected subgraphs. The
  \emph{weight} of a cut-set is the sum of the weights of all of the
  edges in the cut-set.} When $\tau$ is clear from context, we say
$\alpha$ is \emph{stable}.  Given two assemblies $\alpha,\beta$, we
say $\alpha$ is a \emph{subassembly} of $\beta$, and we write $\alpha
\sqsubseteq \beta$, if $\dom\alpha \subseteq \dom\beta$ and, for all
points $\vec{p} \in \dom\alpha$, $\alpha(\vec{p}) = \beta(\vec{p})$. For two
non-overlapping assemblies $\alpha$ and $\beta$, $\alpha \cup \beta$
is defined as the unique assembly $\gamma$ satisfying, for all
$\vec{x} \in \dom{\alpha}$, $\gamma(\vec{x}) = \alpha(\vec{x})$, for
all $\vec{x} \in \dom{\beta}$, $\gamma(\vec{x}) = \beta(\vec{x})$, and
$\gamma(\vec{x})$ is undefined at any point $\vec{x} \in \Z^2
\backslash \left( \dom{\alpha} \cup \dom{\beta} \right)$.

A \emph{tile assembly system} (TAS) is a triple $\mathcal{T} =
(T,\sigma,\tau)$, where $T$ is a tile set, $\sigma:\Z^2
\dashrightarrow T$ is the finite, $\tau$-stable, \emph{seed assembly},
and $\tau\in\Z^+$ is the \emph{temperature}.

Given two $\tau$-stable assemblies $\alpha,\beta$, we write $\alpha
\to_1^{\mathcal{T}} \beta$ if $\alpha \sqsubseteq \beta$ and
$|\dom\beta \setminus \dom\alpha| = 1$. In this case we say $\alpha$
\emph{$\mathcal{T}$-produces $\beta$ in one step}. If $\alpha
\to_1^{\mathcal{T}} \beta$, $ \dom\beta \setminus
\dom\alpha=\{\vec{p}\}$, and $t=\beta(\vec{p})$, we write $\beta =
\alpha + (\vec{p} \mapsto t)$.  The \emph{$\mathcal{T}$-frontier} of
$\alpha$ is the set $\partial^\mathcal{T} \alpha = \bigcup_{\alpha
  \to_1^\mathcal{T} \beta} (\dom\beta \setminus \dom\alpha$), i.e.,
the set of empty locations at which a tile could stably attach to
$\alpha$. The \emph{$t$-frontier} of $\alpha$, denoted
$\partial^\mathcal{T}_t \alpha$, is the subset of
$\partial^\mathcal{T} \alpha$ defined as
$\setr{\vec{p}\in\partial^\mathcal{T} \alpha}{\alpha \to_1^\mathcal{T}
  \beta \text{ and } \beta(\vec{p})=t}.$

Let $\mathcal{A}^T$ denote the set of all assemblies of tiles from $T$, and let $\mathcal{A}^T_{< \infty}$ denote the set of finite assemblies of tiles from $T$.
A sequence of $k\in\Z^+ \cup \{\infty\}$ assemblies $\alpha_0,\alpha_1,\ldots$ over $\mathcal{A}^T$ is a \emph{$\mathcal{T}$-assembly sequence} if, for all $1 \leq i < k$, $\alpha_{i-1} \to_1^\mathcal{T} \alpha_{i}$.
The {\em result} of an assembly sequence $\vec{\alpha}$, denoted as $\textmd{res}(\vec{\alpha})$, is the unique limiting assembly (for a finite sequence, this is the final assembly in the sequence).

We write $\alpha \to^\mathcal{T} \beta$, and we say $\alpha$ \emph{$\mathcal{T}$-produces} $\beta$ (in 0 or more steps), if there is a $\mathcal{T}$-assembly sequence $\alpha_0,\alpha_1,\ldots$ of length $k = |\dom\beta \setminus \dom\alpha| + 1$ such that
(1) $\alpha = \alpha_0$,
(2) $\dom\beta = \bigcup_{0 \leq i < k} \dom{\alpha_i}$, and
(3) for all $0 \leq i < k$, $\alpha_{i} \sqsubseteq \beta$.
If $k$ is finite then it is routine to verify that $\beta = \alpha_{k-1}$.

We say $\alpha$ is \emph{$\mathcal{T}$-producible} if $\sigma \to^\mathcal{T} \alpha$, and we write $\prodasm{\mathcal{T}}$ to denote the set of $\mathcal{T}$-producible assemblies. The relation $\to^\mathcal{T}$ is a partial order on $\prodasm{\mathcal{T}}$ \cite{Roth01,jSSADST}.

An assembly $\alpha$ is \emph{$\mathcal{T}$-terminal} if $\alpha$ is $\tau$-stable and $\partial^\mathcal{T} \alpha=\emptyset$.
We write $\termasm{\mathcal{T}} \subseteq \prodasm{\mathcal{T}}$ to denote the set of $\mathcal{T}$-producible, $\mathcal{T}$-terminal assemblies. If $|\termasm{\mathcal{T}}| = 1$ then  $\mathcal{T}$ is said to be {\em directed}.

We say that a TAS $\mathcal{T}$ \emph{strictly (or uniquely)
  self-assembles} $X \subseteq \Z^2$ if, for all $\alpha \in
\termasm{\mathcal{T}}$, $\dom\alpha = X$, i.e., if every terminal
assembly produced by $\mathcal{T}$ places a tile on every point in $X$
and does not place any tiles on points in $\Z^2 \backslash\, X$.

In this paper, we consider scaled-up versions of subsets of
$\Z^2$. Formally, if $X$ is a subset of $\Z^2$ and $c \in \Z^+$, then
a $c$-\emph{scaling} of $X$ is defined as the set $X^c = \left\{ (x,y)
\in \mathbb{Z}^2 \; \left| \; \left( \left\lfloor \frac{x}{c}
\right\rfloor, \left\lfloor \frac{y}{c} \right\rfloor \right) \in X
\right.\right\}$. Intuitively, $X^c$ is the subset of $\Z^2$ obtained
by replacing each point in $X$ with a $c \times c$ block of points. We
refer to the natural number $c$ as the \emph{scaling factor} or
\emph{resolution loss}.

\subsection{Pier fractals}\label{sec:pier-fractals}

In this section, we first introduce some terminology and then define a
class of fractals called ``pier fractals'' that is the focus of this
paper.

\begin{notation}
We use $\mathbb{N}_g$ to denote the subset $\{0, \ldots, g-1\}$ of
$\mathbb{N}$.
\end{notation}

\begin{notation}
If $A$ and $B$ are subsets of $\N^2$ and $k\in \N$, then $A+kB = \{\vec{m}+k\vec{n}~|~\vec{m}\in A$ and $\vec{n}\in B\}$.
\end{notation}

The following definition is a modification of Definition $2.11$ in \cite{jSADSSF}.

\begin{definition}\label{def:dssf}
Let $1<g \in \mathbb{N}$ and $\mathbf{X} \subset \mathbb{N}^2$. We say that
$\mathbf{X}$ is a \emph{$g$-discrete self-similar fractal} (or \emph{$g$-dssf}
for short), if there is a set $\{(0,0)\} \subset G \subset
\mathbb{N}_g^2$ with at least one point in every row and column, such
that $\displaystyle \mathbf{X} = \bigcup_{i=1}^{\infty}X_i$, where $X_i$, the
$i^{th}$ \emph{stage} of $\mathbf{X}$, is defined by $X_1= G$ and
$X_{i+1}= X_i\ +\ g^iG$. We say that $G$ is the
\emph{generator} of $\mathbf{X}$.
\end{definition}

Intuitively, a $g$-dssf is built as follows. Start by selecting points
in $\N_g^2$ satisfying the constraints listed in
Definition~\ref{def:dssf}. This first stage of the fractal is the
generator. Then, each subsequent stage of the fractal is obtained by
adding a full copy of the previous stage for every point in the
generator and translating these copies so that their relative
positions are identical to the relative positions of the individual
points in the gnerator.

\begin{definition}
Let $S$ be any finite subset of $\Z^2$. Let $l$, $r$, $b$, and
$t$ denote the following integers:

\centerline{$\displaystyle l_S = \min_{(x,y) \in S} x\qquad r_S =
  \max_{(x,y) \in S} x\qquad b_S = \min_{(x,y) \in S} y\qquad t_S =
  \max_{(x,y) \in S} y$}

An \emph{h-bridge} of $S$ is any subset of $S$ of the form
$hb_S(y) = \{(l_S,y),(r_S,y)\}$. Similarly, a \emph{v-bridge} of $S$ is any
subset of $S$ of the form $vb_S(x) = \{(x,b_S),(x,t_S)\}$. We say that a bridge is \emph{connected} if there is a simple path in $S$ connecting the two bridge points.
\end{definition}

\begin{notation}
Let $S$ be any finite subset of $\Z^2$. We will denote by
$nhb_S$ and $nvb_S$, respectively, the number of h-bridges and
the number of v-bridges of $S$.
\end{notation}

\ifabstract
\later{

\begin{definition}
If $G$ is the generator of any $g$-discrete self-similar
fractal, then the \emph{interior} of $G$ is $G \cap (\N_{g-1}\times \N_{g-1})$.
\end{definition}

\begin{lemma}
\label{lem:north-free-point}
Let $G$ be any finite subset of $\N^2$ that has at least one connected
h-bridge. If $G$ contains a connected component $C \subset G$ such
that $C \cap (\N \times \{t_G\}) \ne \emptyset$ and $C \cap (\{l_G\}
\times \N) = \emptyset$, then there exists a point $\vec{x}_N \in G
\backslash C$ such that $N\left(\vec{x}_N\right) \not \in G$ and
$\vec{x}_N \not \in \N \times \{t_G\}$.
\end{lemma}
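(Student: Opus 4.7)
The plan is to locate the sought point $\vec{x}_N$ inside the connected component $C'$ of $G$ that contains the hypothesized connected h-bridge. Concretely, I would fix a connected h-bridge $hb_G(y_0) = \{(l_G, y_0),(r_G, y_0)\}$ together with a simple path $P \subseteq G$ realizing it, and let $C'$ denote the connected component of $G$ containing $P$. Because $(l_G, y_0) \in C'$ while, by hypothesis, $C$ contains no point with $x$-coordinate $l_G$, the components $C$ and $C'$ are distinct and therefore disjoint, so $C' \subseteq G \setminus C$ and anything found in $C'$ automatically satisfies the ``not in $C$'' requirement.

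The core of the proof is by contradiction: suppose no point $(x,y) \in C'$ with $y < t_G$ satisfies $N((x,y)) \notin G$. Then whenever $(x,y) \in C'$ and $y < t_G$, the north neighbor $(x, y+1)$ lies in $G$, and because it is adjacent to $(x,y) \in C'$ it must lie in the same connected component $C'$. Iterating this upward within each column yields, for every $(x,y) \in C'$ with $y < t_G$, that the entire vertical segment from $(x,y)$ up to $(x,t_G)$ is contained in $C'$; in particular $(x,t_G) \in C'$.

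The crucial geometric fact is then that $P$ passes through every integer column in $[l_G, r_G]$, which follows from a discrete intermediate value argument: the $x$-coordinate along $P$ starts at $l_G$, ends at $r_G$, and changes by at most one at each step. Combined with the previous paragraph, $(x, t_G) \in C'$ for every integer $x \in [l_G, r_G]$, since either $P$ already reaches column $x$ at height $t_G$, or $P$ visits column $x$ at some height below $t_G$ and the ``fill-up'' argument propagates $(x, t_G)$ into $C'$. But the hypothesis $C \cap (\N \times \{t_G\}) \ne \emptyset$ produces a point $(x_C, t_G) \in C$ with $x_C \in [l_G, r_G]$, forcing $(x_C, t_G) \in C \cap C'$ and contradicting the disjointness of $C$ and $C'$. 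Hence the assumption fails, yielding the required $\vec{x}_N \in C' \subseteq G \setminus C$ with $y$-coordinate strictly below $t_G$ and north neighbor outside $G$.

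The main subtlety I anticipate is the bookkeeping around the ``fill-up'' chain and the intermediate value step: one has to check that each point encountered while walking upward through a column really lies in $C'$, rather than in some other component of $G$ that merely touches it, but this follows immediately from the fact that adjacency in $G$ forces membership in a common connected component. No planar-topology argument is needed beyond this simple adjacency observation.
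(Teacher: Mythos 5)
Your proof is correct, but it follows a genuinely different route from the paper's. The paper argues constructively: it takes the component $\pi$ containing the connected h-bridge (your $C'$), notes $C \cap \pi = \emptyset$, asserts that $\pi$ ``must go around (and below) $C$'', picks a bottommost point $(x,b_C)$ of $C$, and exhibits the witness as the topmost point of $\pi$ in column $x$ strictly below $b_C$, checking that it is north-free and lies below row $t_G$. You instead argue by contradiction: if every point of $C'$ strictly below row $t_G$ had its north neighbor in $G$, then adjacency would force each such neighbor into $C'$, so every column visited by the bridge path would fill upward within $C'$ all the way to row $t_G$; since the path's $x$-coordinate sweeps every integer column in $[l_G,r_G]$ (discrete intermediate value), $C'$ would contain $(x_C,t_G)$ for the hypothesized point $(x_C,t_G)\in C$, contradicting the disjointness of the two components, which you correctly derive from $C\cap(\{l_G\}\times\N)=\emptyset$. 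The trade-off is this: the paper's version pinpoints the free point geometrically (directly below $C$, in the column of one of its bottommost points), which matches the picture reused in the symmetric lemmas, but it leans on the informal topological claim that $\pi$ passes beneath $C$ in that particular column, a step that as stated gets less justification than it needs. Your argument uses only that adjacency forces membership in the same component plus the discrete intermediate-value step, so it is more elementary and self-contained, at the price of locating the witness only as ``some point of the bridge component below the top row'' --- which is all the lemma actually requires.
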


\begin{proof}
Let $h$ be a connected h-bridge in $G$ and let $\pi$ be a connected
component in $G$ that contains a path connecting the two points in
$h$. Since $\pi$ connects the leftmost and rightmost columns of $G$
and $C$ does not contain any point in the leftmost column of $G$, $C
\cap \pi = \emptyset$. Since $C$ is a connected component that extends
vertically from row $t_C = t_G$ down to row $b_C$ and $C \cap \pi =
\emptyset$, $\pi$ must go around (and below) $C$. Furthermore, no
point in $C$ is adjacent to any point in $\pi$. Let $\vec{p}$ denote a
bottommost point $(x,b_C)$ in $C$, with $l_G < x \leq r_G$. Let
$\vec{q}$ denote the topmost point $(x,y)$ in $\pi \cap (\{x\}\times
\mathbb{N}_{b_C})$. Note that $\vec{p}$ and $\vec{q}$ are in the same
column and that $\vec{p}$ is above (but not adjacent to) $\vec{q}$, that
is, $y<b_C-1$.
Furthermore, $N(\vec{q}) \notin
G$. Since $\vec{q} \in \pi \subset G$ and $\vec{q} \not\in C$,
$\vec{q} \in G\backslash C$. Furthermore, since $\vec{q}=(x,y)$ and $y
< b_C -1 < b_C \leq t_C = t_G$, $\vec{q} \not \in \mathbb{N} \times
\{t_G\}$. In conclusion, $\vec{q}$ exists and is a candidate for the
role of $x_N$.
\end{proof}

\begin{lemma}
\label{lem:north-east-free-point}
Let $G$ be any finite subset of $\N^2$ that has at least one connected
v-bridge. If $G$ contains a connected component $C \subset G$ such
that $C \cap (\{r_G\} \times \N ) \ne \emptyset$, $C \cap (\N \times
\{t_G\}) \ne \emptyset$ and $C \cap (\N \times \{b_G\}) = \emptyset$,
then there exists a point $\vec{x}_{NE} \in G \backslash C$ such that
$E\left(\vec{x}_{NE}\right) \not \in G$, $\vec{x}_{NE} \in \N
\times \{t_G\}$ and $\vec{x}_{NE} \not \in \{r_G\} \times \N$.
\end{lemma}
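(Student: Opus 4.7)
The plan is to find the point $\vec{x}_{NE}$ as a top-row cell of the v-bridge's connected component sitting just to the left of $C$'s top-row extent, mirroring the strategy of the previous lemma. First I would fix a connected $v$-bridge of $G$ together with a path in $G$ realizing it from $(x_b, b_G)$ to $(x_b, t_G)$, and let $\pi$ denote the connected component of $G$ containing this path. Since the bridge path meets the bottom row while $C$ does not, $\pi \ne C$, and thus $\pi \cap C = \emptyset$. Write $x_R^C := \max\{x : (x, t_G) \in C\}$ for the rightmost $x$-coordinate of $C$'s top-row points.

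The central step is to show that $\pi$ contains a top-row point with $x$-coordinate strictly less than $x_R^C$ --- intuitively, $\pi$ must ``go around $C$ from the left,'' just as in the previous lemma. To formalize this, suppose for contradiction that every top-row point of $\pi$ has $x$-coordinate greater than $x_R^C$; in particular $x_b > x_R^C$, which forces $x_R^C < r_G$ and hence $(r_G, t_G) \notin C$. Then $C$ contains $(r_G, y_C)$ for some $y_C < t_G$, and since $C$ is connected there is a path $P$ within $C$ from $(x_R^C, t_G)$ to $(r_G, y_C)$. Because $P$'s $x$-coordinate changes by at most $1$ per step, $P$ visits every integer column in $[x_R^C, r_G]$, so it has a cell $(x_b, y^*) \in C$ with $b_G < y^* < t_G$. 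The path $P$, together with the vertical ray upward from $(x_R^C, t_G)$ and the horizontal ray rightward from $(r_G, y_C)$ (both lying outside $G$), forms a discrete curve whose two sides contain $(x_b, t_G)$ and $(x_b, b_G)$ respectively. Since the bridge path lies in $G \setminus C$ and cannot cross either $P$ or the exterior rays, no such bridge path can exist --- contradiction.

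Having the structural claim, I would let $\vec{q} = (x^*, t_G)$ be the rightmost top-row point of $\pi$ with $x^* < x_R^C$, and set $\vec{x}_{NE} := \vec{q}$. Then $\vec{x}_{NE} \in \pi \subseteq G \setminus C$; it lies in the top row by construction; and $x^* < x_R^C \le r_G$ shows it is not in the rightmost column. Finally, if $E(\vec{x}_{NE}) = (x^* + 1, t_G)$ were in $G$, adjacency to $\vec{q} \in \pi$ would force it into $\pi$, and then maximality of $x^*$ combined with $x^* + 1 \le x_R^C$ would yield $x^* + 1 = x_R^C$, making $(x_R^C, t_G) \in \pi \cap C$ and contradicting the disjointness established above.

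The main obstacle will be making the discrete planar-separation step fully rigorous: verifying, in the $4$-connected grid, that $P$ together with its two exterior rays genuinely disconnects $(x_b, b_G)$ from $(x_b, t_G)$ in $\N^2 \setminus C$, so that no $4$-adjacent path in $G \setminus C$ can bypass this barrier via some unforeseen wrap-around route.
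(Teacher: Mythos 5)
Your proposal is correct and follows essentially the same route as the paper's proof: take the connected component $\pi$ containing a connected v-bridge path, observe $\pi \cap C = \emptyset$ since $\pi$ meets the bottom row while $C$ does not, use a planar-separation argument to force a top-row point of $\pi$ strictly to the left of $C$'s top-row points, and choose a rightmost such point as $\vec{x}_{NE}$, with east-freeness following from maximality and the disjointness of $\pi$ and $C$. The discrete Jordan-curve step you flag as the remaining gap is precisely the step the paper itself treats informally (the assertion that $C$ ``cuts off'' the north-east portion of $G$ so that $\pi$'s top-row points lie to its left), so your argument matches the paper's in both structure and level of rigor.
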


\begin{proof}
Let $v$ be a connected v-bridge in $G$ and let $\pi$ be a connected
component in $G$ that contains a path connecting the two points in
$v$. Let $\pi_{t}$ denote the set $\pi \cap (\N \times
\{t_G\})$. Since this set cannot be empty, let us call its rightmost
point $\vec{p} = (x_{\pi},t_G)$. Similarly, let $C_{t}$ denote $C \cap (\N
\times \{t_G\})$. Since this set cannot be empty, let us call its leftmost point
$\vec{q} = (x_C,t_G)$.

Since $\pi$ connects the topmost and bottommost rows of $G$ and $C$
does not contain any point in the bottommost row of $G$, $C \cap \pi =
\emptyset$. This, together with the fact that $C$ contains a path from
the topmost row to the rightmost column of $G$ (that is, $C$ ``cuts
off'' the subset of $G$ that lies to the north-east of $C$ from the
rest of $G$), implies that each point in $\pi_{t}$ must appear
to the left of all the points in $C_{t}$, namely $x_{\pi} <
x_C$. In fact, since $\pi$ and $C$ cannot be connected,
$\vec{p}$ and $\vec{q}$ cannot be adjacent, i.e., $x_{\pi} <
x_C-1$. Therefore, $\vec{p}$ and $\vec{q}$ are both in the
topmost row of $G$ (thus $\vec{p} \in \N \times \{t_G\}$) and $\vec{p}$
is to the left of $\vec{q}$ (thus $\vec{p} \not \in \{r_G\} \times
\mathbb{N}$). Finally,  by construction, $\vec{p}\in G \backslash C$ and $E(\vec{p}) \not\in G$. In conclusion, $\vec{p}$ is a candidate for the role of $\vec{x}_{NE}$.
\end{proof}

\begin{lemma}
\label{lem:east-free-point}
Let $G$ be any finite subset of $\N^2$ that has at least one
connected v-bridge. If $G$ contains a connected component $C \subset
G$ such that $C \cap (\{r_G\} \times \N ) \ne \emptyset$ and $C \cap
(\N \times \{b_G\}) = \emptyset$, then there exists a point
$\vec{x}_{E} \in G \backslash C$ such that $E\left(\vec{x}_{E}\right)
\not \in G$ and $\vec{x}_{E} \not \in \{r_G\} \times \mathbb{N}$.
\end{lemma}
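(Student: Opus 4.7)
The plan is to locate $\vec{x}_E$ as the endpoint of a horizontal eastward walk in $G$ that starts from a point of a connected v-bridge. Fix a connected v-bridge of $G$ and let $\pi\subseteq G$ be a simple path realizing it, so the endpoints of $\pi$ lie in rows $b_G$ and $t_G$. Since $C$ is a connected component of $G$ disjoint from $\N\times\{b_G\}$, while $\pi$ contains a point in row $b_G$, we must have $\pi\cap C=\emptyset$.

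Next, set $y^{*}=\min\{y:(r_G,y)\in C\}$, which exists by the hypothesis on $C$ and satisfies $y^{*}>b_G$. The point $(r_G,y^{*}-1)$ cannot belong to $G$, for otherwise it would be adjacent to $(r_G,y^{*})\in C$ and hence forced into $C$, contradicting the minimality of $y^{*}$. Since $b_G\leq y^{*}-1<t_G$ and $\pi$ moves by unit-length steps between rows $b_G$ and $t_G$, it must visit row $y^{*}-1$; fix any point $\vec{p}=(x_\pi,y^{*}-1)\in\pi$.

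Now let $k\geq 0$ be the largest integer for which $(x_\pi+j,y^{*}-1)\in G$ for every $0\leq j\leq k$ (finite, since $G$ is finite), and set $\vec{x}_E=(x_\pi+k,y^{*}-1)$. Then $E(\vec{x}_E)\notin G$ by maximality of $k$; since $(r_G,y^{*}-1)\notin G$ while $\vec{x}_E\in G$, we have $x_\pi+k<r_G$, so $\vec{x}_E\notin\{r_G\}\times\N$; and since consecutive points of the walk are adjacent in $G$, all of them (including $\vec{x}_E$) lie in the same connected component of $G$ as $\vec{p}\in\pi$, which is disjoint from $C$, giving $\vec{x}_E\in G\setminus C$. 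The only real pitfall is the naive attempt to take ``the rightmost point of $G$ in row $y^{*}-1$'' as $\vec{x}_E$, which can itself lie in $C$ whenever $C$ dips below row $y^{*}$ in columns strictly to the left of $r_G$; anchoring the walk at a point of $\pi$ and using adjacency to propagate component membership is precisely what sidesteps this difficulty.
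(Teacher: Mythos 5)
Your proof is correct, but it takes a genuinely different route from the paper's. The paper also starts from a path $\pi$ realizing a connected v-bridge and observes $\pi \cap C = \emptyset$; it then anchors at a \emph{leftmost} point $(l_C,y)$ of $C$ and takes $\vec{x}_E$ to be the rightmost point of $\pi$'s component in row $y$ among columns $< l_C$. There, the existence of such a point rests on the informally stated claim that $\pi$ ``must go around (and to the left of) $C$,'' east-freeness comes from rightmost-ness plus maximality of the component, and $\vec{x}_E \notin \{r_G\}\times\N$ comes from $x < l_C - 1 < r_C = r_G$. You instead anchor at the \emph{bottom} of $C$'s occupancy of column $r_G$: minimality of $y^*$ together with maximality of the component $C$ vacates the cell $(r_G, y^*-1)$; a discrete intermediate-value argument places a point of $\pi$ in row $y^*-1$; and the maximal eastward walk, whose points all stay in $\pi$'s component by adjacency, ends at a point that is east-free by maximality, not in column $r_G$ because that cell is empty, and not in $C$ because distinct components are disjoint. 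What your approach buys is that the one step the paper leaves informal --- the topological ``go around'' claim needed to guarantee that $\pi$ meets row $y$ west of $C$ --- is replaced by two fully elementary facts (the discrete IVT along the bridge path and propagation of component membership along adjacent cells), so every existence claim is explicitly justified; the paper's choice, once its geometric claim is granted, reaches the free point with slightly less bookkeeping. Your closing remark about why the naive ``rightmost point of $G$ in the row'' can fail is accurate and mirrors exactly the role the component argument plays in both proofs.
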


\begin{proof}
Let $v$ be a connected v-bridge in $G$ and let $\pi$ be a connected
component in $G$ that contains a path connecting the two points in
$v$. Since $\pi$ connects the topmost and bottommost rows of $G$ and
$C$ does not contain any point in the bottommost row of $G$, $C \cap
\pi = \emptyset$. Since $C$ is a connected component that extends
horizontally from column $l_C$ to column $r_C=r_G$ and $C \cap \pi =
\emptyset$, $\pi$ must go around (and to the left of)
$C$. Furthermore, no point in $C$ is adjacent to any point in $\pi$.
Let $\vec{p}$ denote a leftmost point $(l_C,y)$ in $C$, with $b_G < y
\leq t_G$. Let $\vec{q}$ denote the rightmost point $(x,y)$ in $\pi
\cap (\N_{l_C}\times \{y\})$. Note that $\vec{p}$ and $\vec{q}$ are in
the same row and that $\vec{q}$ is to the left of (but not adjacent to)
$\vec{p}$, that is, $x<l_C-1$. Furthermore, $E(\vec{q})
\notin G$. Since $\vec{q} \in \pi \subset G$ and $\vec{q} \not\in C$,
$\vec{q} \in G\backslash C$. Furthermore, since $\vec{q}=(x,y)$ and $x
< l_C -1 < l_C \leq r_C = r_G$, $\vec{q} \not \in \{r_G\}\times
\N$. In conclusion, $\vec{q}$ exists and is a candidate for the role
of $x_E$.
\end{proof}

\begin{lemma}
\label{lem:connected-implies-v-and-h-bridges}
Let $\mathbf{X} = \bigcup_{i=1}^{\infty}{X_i}$ be a $g$-discrete
self-similar fractal with generator $G$. If $\mathbf{X}$ is a tree,
then $G$ must have at least one connected h-bridge and at least one
connected v-bridge.
\end{lemma}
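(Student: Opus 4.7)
The plan is to prove the contrapositive: assuming $\mathbf{X}$ is a tree, I will show $G$ has a connected h-bridge (the connected v-bridge then follows by a symmetric argument exchanging the roles of rows and columns).

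\textbf{Step 1 (an h-bridge exists).} Fix any $(g-1, y^*) \in G$. Iterating the self-similar recursion yields $(g^n - 1,\, y^*(g^n-1)/(g-1)) \in X_n \subseteq \mathbf{X}$ for every $n \ge 1$, so $\mathbf{X}$ is unbounded to the right. Since $\mathbf{X}$ is connected, a path from $(0,0)$ to any such far-right point must contain a horizontal edge $\{(g-1, y), (g, y)\}$ crossing the scale-$g$ block boundary $x = g-1 \to g$. Using $\mathbf{X} \cap [0, g-1]^2 = G$ together with the unique base-$g$ decomposition $(g, y) = g \cdot (1, 0) + (0, y \bmod g)$, we conclude that $(g-1, y \bmod g)$ and $(0, y \bmod g)$ both lie in $G$, so $y \bmod g$ is an h-bridge row of $G$.

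\textbf{Step 2 ($G$ is connected).} Suppose for contradiction that $G$ has two or more components. For each $p = \sum_{j \ge 0} g^j g_j \in \mathbf{X}$ with $g_j \in G$ the unique base-$g$ digits, define the component signature $\sigma(p) = (\mathrm{comp}_G(g_j))_j$. If every h- and v-bridge of $G$ has both endpoints in the same component of $G$, then traversing any edge of $\mathbf{X}$'s grid graph either shifts $g_0$ to a grid-adjacent $G$-point (same component) or performs a carry through a bridge row whose endpoints share a component by hypothesis; in either case $\sigma$ is preserved along edges. Yet origin-block points with leading digits in different components of $G$ carry different signatures, hence lie in different components of $\mathbf{X}$, contradicting connectedness. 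Otherwise some bridge of $G$ is disconnected, in which case I close a cycle in $X_2$ by combining two parallel disconnected-bridge edges between a pair of block-adjacent copies of $G$ with component-internal paths on either side, contradicting acyclicity. When the required configuration is not directly available at stage $2$, the Step 1 path-crossing argument applied at a higher scale supplies an analogous pair of block-adjacent copies of $X_n$ for the same construction.

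\textbf{Step 3 (conclusion).} By Steps 1 and 2, $G$ is connected and contains an h-bridge; since $G$ is connected, the h-bridge's two endpoints (both elements of $G$) are joined by a path in $G$, so the h-bridge is connected. The connected v-bridge is obtained from the symmetric argument with rows and columns exchanged. The most delicate part of the plan is the cycle construction in the second sub-case of Step 2: when $G$ has only a single disconnected bridge, producing a cycle at a finite stage $X_n$ requires careful case analysis on the presence of neighbor block positions such as $(1,0)$ and $(0,1)$ in $G$, and in the worst case one must lift the construction to some $n \ge 3$ rather than work at stage $2$ directly.
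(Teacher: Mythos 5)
Your Steps 1 and 3, and the signature-invariant argument in the first sub-case of Step 2, are sound (and are a genuinely different, digit-based route from the paper's proof, which never establishes connectivity of $G$ at all). The genuine gap is the second sub-case of Step 2, and it is not a matter of ``careful case analysis'': the construction you describe does not exist in general. Between two horizontally block-adjacent copies of $G$ inside $X_2$ (and likewise between block-adjacent copies of $X_n$ inside $X_{n+1}$), the connecting edges occur exactly at the rows all of whose base-$g$ digits are h-bridge rows of $G$. So if $G$ has only one h-bridge and one v-bridge --- which is not excluded by your sub-case hypothesis and is in fact the typical situation for the fractals at issue --- every pair of block-adjacent copies is joined by \emph{exactly one} edge, at \emph{every} stage; the ``two parallel disconnected-bridge edges'' you want never appear, and lifting the construction to a higher stage $n$ does not create them. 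The ``component-internal paths on either side'' are equally unavailable, since each copy is a copy of the disconnected set $G$. Worse, the implication you are relying on --- ``some bridge of $G$ has its endpoints in different components of $G$, hence $\mathbf{X}$ contains a cycle'' --- is simply false: for $g=3$ and $G=\{(0,0),(1,0),(2,1),(2,2),(0,2)\}$, the generator is disconnected, its unique h-bridge $\{(0,2),(2,2)\}$ and unique v-bridge $\{(0,0),(0,2)\}$ both have endpoints in different components, yet every stage $X_n$ is a forest (block-adjacent copies are joined by single edges following the forest structure of $G$), so $\mathbf{X}$ is acyclic; the failure of tree-ness there is disconnection. Hence in this sub-case the contradiction must be wrung out of the \emph{connectedness} of $\mathbf{X}$ (as your own signature argument does in the other sub-case), and your proposal supplies no such argument. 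Since this sub-case is exactly where the real content of the lemma sits, the proof is incomplete.

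For comparison, the paper avoids proving $G$ connected altogether: it assumes no connected h-bridge exists and splits on whether the leftmost column of $G$ is full, using acyclicity of $\mathbf{X}$ to forbid vertical adjacencies between the first two columns and then connectedness to show that no point with $x\geq 2g$ can be reached, a contradiction. Your Step 1 (existence of a bridge via a path crossing the line $x=g-1$ to $x=g$ and the digit characterization of membership in $\mathbf{X}$) and your sub-case-A invariant are attractive and correct --- though note the displayed identity $(g,y)=g\cdot(1,0)+(0,y \bmod g)$ should be stated digit-wise to be literally true, and what you are doing is a direct proof under the hypothesis, not a contrapositive --- but to complete Step 2 you would need an argument in the spirit of the paper's, or an extension of your invariant that tracks which side of a disconnected bridge a path has used, rather than the cycle construction as proposed.
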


\begin{proof}
In this proof, we assume only that $G$ does not have
a connected h-bridge and reach a contradiction.  We omit the symmetric
reasoning that would allow us to prove that $G$ must contain at least
one connected v-bridge. Together, these two subproofs establish the
fact that $G$ must ontain at least one connected h-bridge and at least
one connected v-bridge.

Assume that $G$ does not have a connected h-bridge. We consider two
cases characterized  by the number of points in the leftmost column of $G$.

Case 1: $\left| G \cap \left(\{0\} \times \N\right)\right| = g$. Then,
the following three propositions hold:

(a) For every point $(1,y) \in G$, $N(1,y) \not \in G$. Indeed, if
$(1,y) \in G$ and $N(1,y) \in G$, then
$\{(0,y),N(0,y),N(1,y),(1,y)\} \subset \mathbf{X}$ would
constitute a cycle in $\mathbf{X}$, which contradicts the fact that
$\mathbf{X}$ is a tree.

(b) For every point $(1,y) \in G$, $S(1,y) \not \in G$. The justification is similar to the one for (a) above.

(c) $(1,g-1) \in G \Rightarrow (1,0) \not \in G$. Indeed, if $(1,0)
\in G$ and $(1,g-1) \in G$, then
$\{(0,g-1),N(0,g-1),N(1,g-1),(1,g-1)\} \subset \mathbf{X}$ would
constitute a cycle in $\mathbf{X}$, which contradicts the fact that
$\mathbf{X}$ is a tree.

We will now prove that there is no path in $\mathbf{X}$ from the
origin to any point $(x,y) \in \mathbf{X}$ with $x\geq 2g$. If there
were such a path $\pi$, it would include at least one pair of
consecutive points $(2g-1,y')$ and $(2g,y')$. Let us consider the
first such pair in $\pi$ and let $\lfloor \frac{y'}{g} \rfloor =
a$. Then $(2g-1,y') \in G+(g,ag)$. Since this copy of $G$ belongs to
the second column of copies of $G$ in $X_2$, we can use the
conjunction of propositions (a), (b) and (c) above to infer that
$\mathbf{X} \cap (G+(g,(a-1)g)=\emptyset$ and $\mathbf{X} \cap
(G+(g,(a+1)g)=\emptyset$. Therefore, $\pi$ must contain a sub-path
$\pi'$ from the leftmost column of $G+(g,ag)$ to $(2g-1,y')$, that is,
a path from $(g,y'')$ to $(2g-1,y')$, for $ag \leq y'' <(a+1)g$.  But
since the leftmost column of $G+(g,ag)$ contains $g$ points, there
must be a (vertical) path from $(g,y')$ to $(g,y'')$ fully contained
in the leftmost column of $G+(g,ag)$. Therefore, by concatenation of
this path to $\pi'$, $G+(g,ag)$ must contain a path from $(g,y')$ to
$(2g-1,y')$. But this path would be a connected h-bridge of $G
+(g,ag)$, which would imply that $G$ contains a connected h-bridge. So
we can conclude that there is no path in $\mathbf{X}$ from the origin
to any point east of the line $x=2g-1$. Since $\mathbf{X}$ contains an
infinite number of points in this region of $\N^2$, $\mathbf{X}$
cannot be connected, which is impossible since $\mathbf{X}$ is a
tree. This contradiction implies that $G$ must contain at least one
connected h-bridge.

Case 2: $\left| G \cap \left(\{0\} \times \N\right)\right| < g$. Since
$G$ does not have a connected h-bridge, one can show via a case
analysis that either $\mathbf{X}$ is disconnected or contains a
cycle. However, both of these scenarios are impossible since
$\mathbf{X}$ is a tree.

\end{proof}

\begin{notation}
Let $c, s\in \Z^+$ and $1<g \in \mathbb{N}$. Let $e, f \in
\N_g$. We use $S_s^c(e,f)$ to denote
$\{0,1,\ldots,cg^{s-1}-1\}^2+ cg^{s-1}(e,f)$.
\end{notation}

\begin{notation}
Let $1<g \in \N$. Let $\mathbf{X} =
\bigcup_{i=1}^{\infty}{X_i}$ be a $g$-discrete self-similar
fractal. If $s \in \Z^+$, we use $P_\mathbf{X}(s)$ to denote the
property: `` $X_s$ is a tree and $nhb_{X_s} = nvb_{X_s} = 1$''.

\end{notation}

\begin{lemma}\label{lem:inductive-step}
Let $1<g \in \mathbb{N}$. If $\mathbf{X}$ is a $g$-discrete self-similar fractal, then
$P_{\mathbf{X}}(i) \Rightarrow P_{\mathbf{X}}(i+1)$ for  $i \in
\mathbb{Z}^+$.
\end{lemma}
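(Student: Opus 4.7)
The plan is to decompose $X_{i+1} = X_i + g^i G$ as $|G|$ translated copies of $X_i$, each fitting in a $g^i \times g^i$ bounding box and placed at positions $g^i(e,f)$ for $(e,f) \in G$, and then combine the tree hypothesis on $X_i$ with structural facts about $G$ that $P_\mathbf{X}(i)$ forces. First I would establish that $G$ itself is a tree with $nhb_G = nvb_G = 1$. A direct count on the decomposition gives the multiplicative identity $nhb_{X_j} = nhb_G \cdot nhb_{X_{j-1}}$ for $j \geq 2$ (and the v-bridge analogue), because an h-bridge row $y$ of $X_j$ must satisfy $\lfloor y/g^{j-1}\rfloor \in $ an h-bridge row of $G$ and $y \bmod g^{j-1} \in$ an h-bridge row of $X_{j-1}$. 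Unrolling yields $nhb_{X_i} = (nhb_G)^i$, and $nhb_{X_i} = 1$ then forces $nhb_G = nvb_G = 1$. Since $(0,0) \in G$ embeds $G$ as an induced subgraph of $X_i$ and $X_i$ has no cycles, $G$ has no cycles either. For connectivity: if $G$ split into two components of its full grid graph, the copies of $X_{i-1}$ sitting at cells in different components would occupy $g^{i-1} \times g^{i-1}$ boxes that are not edge-adjacent in $\mathbb{Z}^2$ (two points in different components of $G$ cannot be at $\ell_1$-distance $1$), and hence contribute disconnected pieces to $X_i$, contradicting the hypothesis.

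Next I would compute bridge counts of $X_{i+1}$ and its tree property directly from the decomposition. The same row-by-row analysis as above gives $nhb_{X_{i+1}} = nhb_G \cdot nhb_{X_i} = 1$, and symmetrically $nvb_{X_{i+1}} = 1$. For the tree property, connectivity of $X_{i+1}$ follows because each copy of $X_i$ is a connected tree, any two copies at generator-adjacent cells $(e_1,f_1),(e_2,f_2) \in G$ are joined by the unique h- or v-bridge of $X_i$, and $G$ is connected. For acyclicity, I would contract each $X_i$-copy inside $X_{i+1}$ to a single vertex; since each copy is itself a tree and any two adjacent copies share exactly one bridging edge (using $nhb_{X_i} = nvb_{X_i} = 1$), the contracted graph is exactly $\fullgridgraph_G$, which is acyclic because $G$ is a tree. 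Contracting subtrees preserves cyclomatic number, so $X_{i+1}$ is acyclic, hence a tree.

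The main obstacle is making the cycle-projection argument in the previous step airtight: given a hypothetical cycle $C$ in $X_{i+1}$, one must trace $C$ through the macrocell decomposition and exploit the uniqueness of the h- and v-bridges of $X_i$ to argue that $C$ enters and exits each macrocell only through designated bridge positions, and that the induced sequence of distinct visited macrocells is a non-trivial simple cycle in $\fullgridgraph_G$, contradicting $G$ being a tree. The subcase where $C$ lies entirely inside one macrocell contradicts $X_i$ being a tree, but ruling out more intricate closed walks that visit a macrocell several times requires the single-bridge uniqueness. A secondary but essential piece is the direction ``$X_i$ connected $\Rightarrow G$ connected'' in the first paragraph, which hinges on the geometric observation that non-edge-adjacent generator cells yield non-touching sub-copies.
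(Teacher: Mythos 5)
Your proposal is correct, but it takes a genuinely different route from the paper. The paper never analyzes the generator $G$ itself: it works purely with the stage-to-stage decomposition, proving connectivity of $X_{i+1}$ by lifting a path between the relevant sub-squares of $X_i$ to a chain of adjacent copies of $X_i$, localizing the h-bridges of $X_{i+1}$ to the two extreme copies to get $nhb_{X_{i+1}}=1$, and proving acyclicity by projecting a hypothetical simple cycle of $X_{i+1}$ down to the corresponding sub-squares of $X_i$ to exhibit a cycle there. You instead first pull the hypothesis back to $G$ — via the exact multiplicative identity $nhb_{X_j}=nhb_G\cdot nhb_{X_{j-1}}$ (which is correct, since an extreme column point of $X_j$ forces an extreme column point of both $G$ and $X_{j-1}$ in the corresponding rows), plus the observations that $G\subseteq X_i$ kills cycles in $G$ and that non-adjacent generator cells yield non-touching copies, so disconnection of $G$ would disconnect $X_i$ — and then push forward: the identity gives $nhb_{X_{i+1}}=nvb_{X_{i+1}}=1$ immediately, and acyclicity follows from the contraction/edge-count argument (each copy is a tree, exactly one connecting edge per adjacent pair of generator cells because the bridge of $X_i$ is unique, so $|E|=|V|-1$ together with connectivity). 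This buys rigor exactly where the paper is informal: the paper's bridge-localization and especially its cycle-projection step (asserting that the union of the visited sub-squares "must contain at least one simple cycle") are the hand-wavy parts, whereas your counting argument is airtight and needs no cycle tracing. Two small points: the "main obstacle" you flag in your last paragraph is moot — your contraction argument already disposes of acyclicity, so you do not need to make the macrocell cycle-tracing precise (that is the paper's route, not yours); and you should note the degenerate case $i=1$, where $X_{i-1}$ is undefined, by observing that there $G=X_1$ and $P_{\mathbf{X}}(1)$ gives the properties of $G$ directly.
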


\begin{proof}
Let $\mathbf{X}$ be any $g$-discrete self-similar fractal. Let $i \in
\mathbb{Z}^+$. We will abbreviate $X_{i} \cap S^1_{i}(x,y)$
and $X_{i+1} \cap S^1_{i+1}(x,y)$ to $U(x,y)$ and $V(x,y)$,
respectively, where $x,y \in \mathbb{N}_g$. The definition of
$\mathbf{X}$ implies that the following proposition, which we refer to
as $(*)$, is true: ``Every non-empty $V$ subset of $X_{i+1}$
is a translated copy of $X_{i}$''.

Assume that $P_{\mathbf{X}}(i)$ holds.

First, we prove that $X_{i+1}$ is connected. Pick any two distinct
points $\vec{p}$ and $\vec{q}$ in $X_{i+1}$. If $\vec{p}$ and
$\vec{q}$ belong to the same $V$ subset of $X_{i+1}$, then there is a
simple path from $\vec{p}$ to $\vec{q}$ (because of $(*)$ and the fact
that $X_i$ is connected, by $P_{\mathbf{X}}(i)$). If $\vec{p}$ and $\vec{q}$
belong to two distinct $V$ subsets of $X_{i+1}$, say, $V(x_0,y_0)$ and
$V(x_k,y_k)$, then consider the corresponding two $U$ subsets
$U(x_0,y_0)$ and $U(x_k,y_k)$ of $X_i$, neither of which can be
empty. $P_{\mathbf{X}}(i)$ implies that there exists a simple path
from any point in $U(x_0,y_0)$ to any point in $U(x_k,y_k)$. Assume
that this path goes through the following sequence $P_i$ of $U$
subsets of $X_i$: $U(x_0,y_0), U(x_1,y_1), \ldots, U(x_{k-1},y_{k-1}),
U(x_k,y_k)$. $P_{\mathbf{X}}(i)$ and $(*)$
together imply that each one of the corresponding $V$ subsets of
$X_{i+1}$, i.e., $V(x_0,y_0)$, \ldots, $V(x_k,y_k)$, is connected and
contains a connected h-bridge and a connected v-bridge. Furthermore,
since any pair of consecutive $U$ subsets in $P_i$ are adjacent in
$X_i$, the same is true of the $V$ subsets of $X_{i+1}$ in the
sequence $P_{i+1}$: $V(x_0,y_0), V(x_1,y_1), \ldots,
V(x_{k-1},y_{k-1}), V(x_k,y_k)$. Since, for $i \in \mathbb{N}_{k}$,
$V(x_i,y_i)$ is adjacent to $V(x_{i+1},y_{i+1})$ and each one of these
subsets is connected and has at least one horizontal bridge and one
vertical bridge, there must be at least one simple path from any point
in $V(x_0,y_0)$ to any point in $V(x_k,y_k)$. Therefore, there exists
a simple path between $\vec{p} \in V(x_0,y_0)$ and $\vec{q} \in
V(x_k,y_k)$. Since this is true for any two distinct points $\vec{p}$ and
$\vec{q}$ in $X_{i+1}$, $X_{i+1}$ is connected.

Second, we prove that $nhb_{X_{i+1}} = nvb_{X_{i+1}}
= 1$.  Since the reasoning is similar for both horizontal and vertical
bridges, we only deal with $nhb_{X_{i+1}}$ here. By
$P_{\mathbf{X}}(i)$, $X_i$ contains exactly one horizontal
bridge. Therefore, there are exactly two subsets of $X_i$ of
the form $U(0,y)$ and $U(g-1,y)$, for some $y$ in $\mathbb{N}_g$, such
that there exist exactly two points $\vec{p}=(x_p,y_p)$ in $U(0,y)$ and
$\vec{q}=(x_q,y_q)$ in $U(g-1,y)$ with $y_p=y_q$. Now consider $V(0,y)$ and
$V(g-1,y)$. Since each one of these subsets of $X_{i+1}$ is a
translated copy of $X_i$, the westmost column of $V(0,y)$ is
identical to the westmost column of $X_i$ and the eastmost
column of $V(g-1,y)$ is identical to the eastmost column of
$X_i$. Therefore, the number of horizontal bridges in
$X_{i+1}$ that belong to $V(0,y) \cup V(g-1,y)$ is equal to
$nhb_{X_i}=1$. In other words,
$nhb_{X_{i+1}}\geq1$. Since both $X_{i}$ and
$X_{i+1}$ are built out of copies of their preceding stage
according to the same pattern (namely the generator of $\mathbf{X}$)
and we argued above that the only horizontal bridges in $X_i$
belong to $U(0,y) \cup U(g-1,y)$, the horizontal bridges
in $X_{i+1}$ can only belong to the subsets $V(0,y)$ and
$V(g-1,y)$. In other words, $nhb_{X_{i+1}}\leq 1$. Finally,
$nhb_{X_{i+1}}=1$.

Third, we prove that $X_{i+1}$ is acyclic. For the sake of obtaining a
contradiction, assume that there exists a simple cycle $C$ in
$X_{i+1}$. Let the sequence $P_{i+1}$ of adjacent $V$ subsets that $C$
traverses be $V(x_0,y_0)$, \ldots, $V(x_k,y_k)$. If $P_{i+1}$ has
length one, then $C$ is contained in a single (translated) copy of
$X_{i}$ (by $(*)$), which contradicts the fact that $X_{i}$ is acyclic
(by $P_{\mathbf{X}}(i)$). Otherwise, $C$ traverses all of the $V$
subsets in $P_{i+1}$, whose length is at least two.  Following the
same reasoning as above, there must exist a corresponding sequence
$P_i$, namely $U(x_0,y_0)$, \ldots, $U(x_k,y_k)$, of $U$ subsets in
$X_i$.  Since each subset in this sequence is connected, contains one
horizontal bridge and one vertical bridge (by $P_{\mathbf{X}}(i)$),
and is adjacent to its neighbors in the sequence, the union of these
subsets forms a connected component that must contain at least one
simple cycle, which contradicts the fact that $X_i$ is a tree (by
$P_{\mathbf{X}}(i)$). In all cases, we reached a
contradiction. Therefore, $X_{i+1}$ is acyclic.

Finally, since $X_{i+1}$ is a tree and
$nhb_{X_{i+1}}=nvb_{X_{i+1}}=1$, $P_{\mathbf{X}}(i+1)$ holds.

\end{proof}

\setcounter{theorem}{1}
\begin{theorem}
$\displaystyle\mathbf{T} = \bigcup_{i=1}^{\infty}{T_i}$ is a $g$-discrete self-similar tree fractal, for some $g > 1$, with generator $G$ if and only if \\
\hspace*{1cm}a. $G$ is a tree, and\\
\hspace*{1cm}b. $nhb_G = nvb_G =1$
\end{theorem}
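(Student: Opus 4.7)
The plan is to prove each direction separately, with Lemma~\ref{lem:inductive-step} handling the backward direction and Lemma~\ref{lem:connected-implies-v-and-h-bridges} together with a $T_2$ cycle count handling the forward direction. For the ``if'' direction, note that the hypothesis ``$G$ is a tree with $nhb_G = nvb_G = 1$'' is precisely the property $P_{\mathbf{T}}(1)$, since $T_1 = G$. Iterating Lemma~\ref{lem:inductive-step} gives $P_{\mathbf{T}}(i)$ for every $i \in \Z^+$, so every stage $T_i$ is a finite tree. Because $(0,0)\in G$, we have $T_i \subseteq T_{i+1}$, so $(T_i)_{i \geq 1}$ is an increasing chain of finite trees with union $\mathbf{T}$; any two points of $\mathbf{T}$ lie in some common $T_i$ (forcing connectedness) and any cycle in $\mathbf{T}$ would be finite and lie in some acyclic $T_i$ (forcing acyclicity). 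Hence $\mathbf{T}$ is a tree.

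For the ``only if'' direction I argue by contrapositive, splitting on which requirement on $G$ fails. If $G$ has a cycle, that cycle already lies in $\mathbf{T}$. If $G$ has no connected h-bridge or no connected v-bridge, Lemma~\ref{lem:connected-implies-v-and-h-bridges} (in contrapositive form) gives that $\mathbf{T}$ is not a tree. Otherwise let $C_1$ denote the component of $G$ that contains both connected bridges, which lie in the same component by a discrete Jordan-curve argument on $\N_g^2$ (any west-east path and any south-north path share a vertex). If $G = C_1$ and $nhb_G \geq 2$, a connected h-bridge path in $G$ contains a horizontally adjacent pair $(a,b),(a+1,b)$, so $T_2$ contains two spatially adjacent copies of $G$ joined across their shared vertical boundary by $nhb_G \geq 2$ edges; the edge count $2(|G|-1) + nhb_G > 2|G|-1$ on the two internally connected copies exceeds the tree bound and produces a cycle in $T_2 \subseteq \mathbf{T}$. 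The symmetric argument rules out $nvb_G \geq 2$. If instead $G$ is disconnected, let $C_2$ be a second component and $w \in C_2$. When $w$ is interior to $\N_g^2$ or on the south or west boundary of the generator, $w$ has no neighbor in $G$ and no potential outer-copy neighbor at any stage, so $w$ is isolated in $\mathbf{T}$, contradicting connectedness; when $w=(g-1,y^*)$ lies on the east boundary, its outer neighbor $(g,y^*)\in T_2$ exists only if $(0,y^*) \in G$, in which case $\{(0,y^*),(g-1,y^*)\}$ is an h-bridge distinct from the one supplied by $C_1$, and iterating the cycle count at $T_2$ or $T_3$ (where the two adjacent copies eventually become internally connected via the outer-copy detour) produces a cycle; a symmetric argument handles north-boundary $C_2$ points.

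The main obstacle I anticipate is the disconnected-$G$ case: the naive ``$G$ disconnected implies $\mathbf{T}$ disconnected'' intuition fails because a boundary-lying orphan of $G$ can latch onto an outer copy at a later stage, so the argument has to combine the location analysis on $C_2$ with the just-obtained second bridge to extract a cycle in an appropriately chosen $T_k$. The remaining ingredients -- acyclicity of $G$, the bridge lower bound from Lemma~\ref{lem:connected-implies-v-and-h-bridges}, the $T_2$ cycle count in the connected case, and the backward induction -- are all routine once Lemmas~\ref{lem:inductive-step} and \ref{lem:connected-implies-v-and-h-bridges} are in hand.
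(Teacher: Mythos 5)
Your backward direction coincides with the paper's (it is exactly the induction via Lemma~\ref{lem:inductive-step}), and your $T_2$ edge count ruling out $nhb_G\geq 2$ when $G$ is connected is sound --- in fact more explicit than the paper's one-sentence assertion of that step. The genuine gap is the disconnected-generator case, which is precisely where the paper spends nearly all of its effort. First, your pointwise claim that an interior (or south/west-boundary) point $w\in C_2$ ``has no neighbor in $G$'' and is therefore isolated in $\mathbf{T}$ is false whenever $C_2$ contains more than one point; the correct statement must be made at the level of the whole component (if no point of $C_2$ on the east or north boundary has a matching point across the cut, then $C_2$ is an entire connected component of $\mathbf{T}$, so $\mathbf{T}$ is disconnected). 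Much more seriously, in the sub-case where a boundary point of $C_2$ does acquire an outer-copy neighbor (which, incidentally, also requires $(1,0)\in G$ in your east-boundary scenario, not just $(0,y^*)\in G$), your conclusion rests entirely on the parenthetical claim that ``the two adjacent copies eventually become internally connected via the outer-copy detour,'' so that the cycle count can be iterated at $T_2$ or $T_3$. Nothing in the proposal justifies this. With $G$ disconnected each copy supplies fewer than $|G|-1$ internal edges, and once detours through other copies are allowed, the vertices of those copies enter the count and the inequality $2(|G|-1)+nhb_G>2|G|-1$ no longer yields anything. You would have to either prove connectivity-via-detours and redo the counting on the enlarged vertex set, or construct a cycle directly; neither is done, and this is exactly the hard step you flagged as the ``main obstacle.''

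For comparison, the paper does not try to extract a cycle here at all. It argues by contradiction with the full tree hypothesis in hand: assuming $\mathbf{T}$ is a tree (hence connected), it invokes Lemma~\ref{lem:connected-implies-v-and-h-bridges} together with the three free-point lemmas (Lemmas~\ref{lem:north-free-point}, \ref{lem:north-east-free-point} and \ref{lem:east-free-point}) and a case analysis on which sides of the generator the stray component $C$ touches (the ``NE'', ``E'', ``EW'' cases and their symmetric variants) to exhibit a translated copy such as $C'=C+g^2\vec{x}_N+g\vec{x}_{NE}$ deep inside $\mathbf{T}$ that is free on all four sides, contradicting connectedness of $\mathbf{T}$. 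If you wish to keep your contrapositive framing, you could first split on whether $\mathbf{T}$ is connected: if it is not, you are done; if it is, you are back in the paper's setting and still need something like its free-point construction (or an honest cycle construction) --- the iterated edge count does not substitute for it.
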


\begin{proof}
Assume that $\mathbf{T}$ is a $g$-discrete self-similar tree fractal
with generator $G$. Thus, $\mathbf{T}$ is acyclic and connected. If
$nhb_G < 1$ or $nvb_G < 1$, then $\mathbf{T}$ is trivially
disconnected. Thus, $nhb_G \geq 1$, $nvb_G \geq 1$.

Since $\mathbf{T}$ is acyclic, $G$ must be acyclic as well, for if $G$ were not acyclic, then $\mathbf{T}$ would not be, as $G \subset \mathbf{T}$.

We will now show that $G$ is connected. To see this, assume that $G$
is disconnected. First, note that, if $G$ has a connected component
contained strictly within its interior, then $\mathbf{T}$ is trivially
disconnected.

Second, if $G$ is disconnected, then $G$ contains a connected
component that touches at most two sides of $G$. To see this, note
that Lemma~\ref{lem:connected-implies-v-and-h-bridges} says that $G$
has at least one connected h-bridge and at least one connected
v-bridge. If $G$ had a connected component, say $C$, that touched
three or more sides of $G$, then due to the existence of at least one
connected h-bridge and at least one connected v-bridge, $G$ would
necessarily have another connected component, say $C'$, that could
only touch at most two sides of $G$.

We now proceed with a case analysis based on the number of sides of
$G$ that the connected component touches (one or two sides) and
the relative positions of these sides (adjacent or opposite sides).

Case 1: Assume that $G$ has a connected component, say $C$, that does
not contain the origin but does contain points in the northmost row
and eastmost column of $G$ (and there is no path in $G$ from the
origin to any point in $C$). We will call this case
``NE''. Lemma~\ref{lem:connected-implies-v-and-h-bridges} says that
$G$ has at least one connected h-bridge and at least one connected
v-bridge. Therefore, Lemma~\ref{lem:north-free-point} says that $G$
has a north-free point not in the northmost row, say $\vec{x}_{N}$,
and Lemma~\ref{lem:north-east-free-point} says that $G$ has an
east-free point in the northmost row but not in the eastmost column,
say $\vec{x}_{NE}$. Let $C' = C + g^2 \vec{x}_N + g
\vec{x}_{NE}$. Since $\vec{x}_N$ is north-free and not in the
northmost row of $G$, $N\left(\vec{x}_N\right) \not \in G$, whence
$\mathbf{T} \cap \left(\left\{0,\ldots, g^2-1\right\}^2 + g^2
N\left(\vec{x}_N\right)\right) = \emptyset$.  Since $\vec{x}_{NE}$ is
in the northmost row, this means the northmost point in every column
of $C'$ is north-free in $\mathbf{T}$. Since $\vec{x}_{NE}$ is not in
the eastmost column of $G$, $E\left(\vec{x}_{NE}\right) \not \in G$,
whence $\mathbf{T} \cap \left( g^2 \vec{x}_N + \left(
\left\{0,\ldots,g-1\right\}^2 + g E\left(\vec{x}_{NE}\right) \right)
\right) = \emptyset$. This means that the eastmost point in every row
of $C'$ is east-free in $\mathbf{T}$. We also know that the westmost
point in every row of $C$ is west-free in $G$ and the southmost point
in every column of $C$ is south-free in $G$, therefore the westmost
point in every row of $C'$ is west-free in $\mathbf{T}$ and the
southmost point in every column of $C'$ is south-free in
$\mathbf{T}$. Thus, there is no path in $\mathbf{T}$ from any point in
$C'$ to the origin, which contradicts the assumption that $\mathbf{T}$
is connected. The ``NW'' and ``SE'' cases can be handled with a
similar argument. Note that, in the ``SW'' case, the connected
component is contained strictly within the interior of the
generator. Such situations were handled above.

Case 2: Assume that $G$ has a connected component, say $C$, that
contains points in the eastmost column but does not contain the origin
nor points in the westmost column of $G$ nor the northmost or
southmost rows of $G$. This is the ``E'' case. In this case,
Lemma~\ref{lem:east-free-point} says that there is an east-free point
in $G$ that is not in the eastmost column of $G$. Call this point
$\vec{x}_{E}$ and define $C' = C+ g\vec{x}_E$. Following directly from
the definition of the ``E'' case, the northmost point in every column
of $C$ is north-free in $G$, the southmost point in every column of
$C$ is south-free in $G$ and the westmost point in every row of $C$ is
west-free in $G$. From the definition of $C'$ and the fact that
$\vec{x_E}$ is east-free, it follows that the eastmost (respectively,
westmost) point in every row of $C'$ is east-free (respectively,
west-free) in $\mathbf{T}$. Similarly, the northmost (respectively,
southmost) point in every column of $C'$ is north-free (respectively,
south-free) in $\mathbf{T}$. Therefore, there is no path in
$\mathbf{T}$ from any point in $C'$ to the origin, which contradicts
the assumption that $\mathbf{T}$ is connected.  The ``N'' case can be
handled with a similar argument. Note that, in the ``W'' and ``S''
cases, the connected component is contained strictly within the
interior of the generator. Such situations were handled above.

Case 3: Assume that $G$ has a connected component, say $C$, that
contains points in both the eastmost and westmost columns of
$G$. This is the ``EW'' case. In this case, since $G$ contains at
least one connected v-bridge, $C$ must contain all connected v-bridges
of $G$ (since $C$ must have a non-empty intersection with each
connected v-bridge in $G$). Therefore, $C$ touches all four sides of
$G$. If $C$ contains the origin, then there must exist another
disconnected component, say $C'$, that does not contain the origin and
$C'$ must belong to one of the previous cases. If $C$ does not contain
the origin, then the origin itself must be part of a connected
component that is not connected to $C$ nor to any other point in
$\mathbf{T}$, which contradicts the assumption that $\mathbf{T}$ is
connected.  The ``NS'' case can be handled with a similar argument.

Therefore, in all cases, $G$ is connected. Since we argued above that
$G$ is acyclic, we may conclude that $G$ is a tree.

Finally, since $G$ is a tree, it must be the case that $nvb_G \leq 1$ and $nhb_G \leq 1$, otherwise $\mathbf{T}$ would contain a cycle, whence $nvb_G = nhb_G = 1$.

Now we prove that if $G$ is a tree and $nhb_G = nvb_G = 1$, then
$\mathbf{T}$ is a tree.

Assume that $G$ is a tree and $nhb_G = nvb_G = 1$. Then
$P_{\mathbf{T}}(1)$ holds (since $G=T_1)$. Furthermore, by
Lemma~\ref{lem:inductive-step}, $P_{\mathbf{T}}(i) \Rightarrow
P_{\mathbf{T}}(i+1)$ for $i \in \Z^+$. Thus, by induction,
$P_{\mathbf{T}}(i)$ holds for $i \in \Z^+$, which implies that
each stage in $\mathbf{T}$ is a tree. We now prove that $\mathbf{T}$ is a tree.

First, $\mathbf{T}$ is connected, since each stage of $\mathbf{T}$ is connected.

Second, we prove that $\mathbf{T}$ cannot contain a cycle. Assume, for
the sake of obtaining a contradiction, that there exist two distinct
points $\vec{p}$ and $\vec{q}$ in $\mathbf{T}$ such that there exist
two distinct simple paths from $\vec{p}$ to $\vec{q}$. Since both of
these paths must be finite, the cycle that they form must also be
finite. Therefore, this cycle must be fully contained in some stage of
$\mathbf{T}$, which contradicts the fact that all stages of
$\mathbf{T}$ are trees.

In conclusion, $\mathbf{T}$ is connected and acyclic, and is thus a tree.
\end{proof}\

}
\fi

\begin{notation}
The directions $\mathcal{D} = \{N,E,S,W\}$ will be used as functions
from $\mathbb{Z}^2$ to $\mathbb{Z}^2$ such that $N(x,y) = (x,y+1)$,
$E(x,y) = (x+1,y)$, $S(x,y) = (x,y-1)$ and $W(x,y) = (x-1,y)$. Note
that $N^{-1} = S$ and $W^{-1}=E$.
\end{notation}

\begin{notation}
Let $X \subseteq \mathbb{Z}^2$. We say that a point $(x,y) \in X$ is
$D$-\emph{free} in $X$, for some direction $D \in \mathcal{D}$, if
$D(x,y) \not \in X$.
\end{notation}

\begin{definition}
Let $G$ be the generator of any $g$-discrete self-similar fractal.  A
\emph{pier} is a point in $G$ that is $D$-free in $G$ for exactly
three of the four directions in $\mathcal{D}$. We say that a pier
$(x,y)$ is \emph{$D$-pointing} (or \emph{points $D$}) if $D^{-1}(x,y)
\in G$. Note that a pier always points in exactly one direction.
\end{definition}

\begin{definition}
Let $G$ be the generator of any $g$-discrete self-similar fractal with
exactly one h-bridge and one v-bridge.  $G$ may contain up to four
distinct types of piers characterized by the number of bridges they belong
to. Each pier may belong to no more than two bridges.  A \emph{real
  pier} is a pier that does not belong to any bridge in $G$. A
\emph{single-bridge pier} belongs to exactly one bridge. A
\emph{double-bridge pier} belongs to exactly two bridges. Finally, we
will distinguish between two sub-types of single-bridge piers. If the
pier is pointing in a direction that is parallel to the direction of
the bridge (i.e., if the pier points north or south and belongs to a
v-bridge, or the pier points east or west and belongs to an h-bridge),
the pier is a \emph{parallel single-bridge pier}. If the pier is
pointing in a direction that is orthogonal to the direction of the
bridge (i.e., if the pier points north or south and belongs to an
h-bridge, or the pier points east or west and belongs to a v-bridge),
the pier is an \emph{orthogonal single-bridge pier}.

\end{definition}

For example, the generator in Figure~\ref{fig:piers} below contains
the h-bridge $\{(0,0),(4,0)\}$ and the v-bridge $\{(4,0),(4,4)\}$. The
point $(1,4)$ is a real pier. The point $(0,0)$
is an orthogonal single-bridge pier. The point $(4,4)$
is a parallel single-bridge pier. The point $(4,0)$ is
a double-bridge pier.

\begin{figure}[h]
\centering
 \includegraphics[width=0.4\textwidth]{./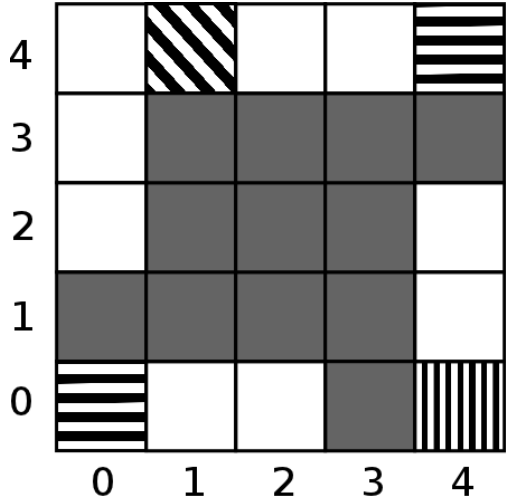}
\caption{A $5\times 5$ generator containing one h-bridge, one
  v-bridge and four piers.}
\label{fig:piers}
\end{figure}

We are now ready to define the class of fractals that is the main
focus of this paper.

\begin{definition}\label{def:pier-fractal}
$\mathbf{P}$ is a \emph{pier fractal} if and only if $\mathbf{P}$ is a discrete self-similar fractal with generator $G$ such that:\\
\hspace*{15pt}a. The full grid graph of $G$ is connected, and\\
\hspace*{15pt}b. $nhb_G = nvb_G =1$, and\\
\hspace*{15pt}c. $G$ contains at least one pier.
\end{definition}

\subsection{The Closed Window Movie Lemma}\label{sec:window-movie-lemma}
In this subsection, we develop a more accommodating (modified) version
of the standard Window Movie Lemma (WML) \cite{WindowMovieLemma}. Our
version of the WML, which we call the ``Closed Window Movie Lemma'',
allows us to replace one portion of a tile assembly with another,
assuming certain extra ``containment'' conditions are met. Moreover,
unlike in the standard WML that lacks the extra containment
assumptions, the replacement of a portion of one tile assembly
with another portion of the same assembly in our
Closed WML only goes ``one way'', i.e., the part of the tile assembly
being used to replace another part cannot itself be replaced by the
part of the tile assembly it is replacing. We must first define some
notation that we will use in our Closed Window Movie Lemma.

A \emph{window} $w$ is a set of edges forming a cut-set of the full grid
graph of $\mathbb{Z}^2$. For the purposes of this paper, we say that a
\emph{closed window} $w$ induces a cut\footnote{A \emph{cut} is a
  partition of the vertices of a graph into two disjoint subsets that
  are joined by at least one edge.} of the full grid graph of
$\mathbb{Z}^2$, written as $C_w = (C_{<\infty},C_\infty)$, where
$C_{\infty}$ is infinite, $C_{<\infty}$ is finite and for all pairs of
points $\vec{x},\vec{y} \in C_{<\infty}$, no simple path connecting
$\vec{x}$ and $\vec{y}$ in the full grid graph of $C_{<\infty}$
crosses the cut $C_w$. We call the set of vertices that make up
$C_{<\infty}$ the \emph{inside} of the window $w$, and write
$inside(w) = C_{<\infty}$ and $outside(w) = \mathbb{Z}^2 \backslash\,inside(w) =
C_\infty$. We say that a window $w$ is \emph{enclosed} in another
window $w'$ if $inside(w) \subseteq inside(w')$.

Given a window $w$ and an assembly $\alpha$, a window that {\em
  intersects} $\alpha$ is a partitioning of $\alpha$ into two
configurations (i.e., after being split into two parts, each part may
or may not be disconnected). In this case we say that the window $w$
cuts the assembly $\alpha$ into two configurations $\alpha_L$
and~$\alpha_R$, where $\alpha = \alpha_L \cup \alpha_R$. If $w$ is a
closed window, for notational convenience, we write $\alpha_I$ for the
configuration inside $w$ and $\alpha_O$ for the configuration outside $w$.
Given a window $w$, its translation by a vector $\vec{c}$, written $ w
+ \vec{c}$ is simply the translation of each of $w$'s elements (edges)
by~$\vec{c}$.


For a window $w$ and an assembly sequence $\vec{\alpha}$, we define a window movie~$M$ to be the order of placement, position and glue type for each glue that appears along the window $w$ in $\vec{\alpha}$. Given an assembly sequence $\vec{\alpha}$ and a window $w$, the associated {\em window movie} is the maximal sequence $M_{\vec{\alpha},w} = (v_{0}, g_{0}) , (v_{1}, g_{1}), (v_{2}, g_{2}), \ldots$ of pairs of grid graph vertices $v_i$ and glues $g_i$, given by the order of the appearance of the glues along window $w$ in the assembly sequence $\vec{\alpha}$.
Furthermore, if $k$ glues appear along $w$ at the same instant (this happens upon placement of a tile which has multiple  sides  touching $w$) then these $k$ glues appear contiguously and are listed in lexicographical order of the unit vectors describing their orientation in $M_{\vec{\alpha},w}$.

Let $w$ be a window and $\vec{\alpha}$ be an assembly sequence and $M = M_{\vec{\alpha},w}$. We use the notation $\mathcal{B}\left(M\right)$ to denote the \emph{bond-forming submovie} of $M$, i.e., a restricted form of $M$, which consists of only those steps of $M$ that place glues that eventually form positive-strength bonds in the assembly $\alpha = \res{\vec{\alpha}}$. Note that every window movie has a unique bond-forming submovie.

\begin{lemma}[Closed Window Movie Lemma]\label{lem:wml}
Let $\vec{\alpha}~=~(\alpha_i~|~0\leq~i<~l)$, with $l \in \mathbb{Z}^+
\cup \{\infty\}$, be an assembly sequence in some TAS $\mathcal{T}$
with result $\alpha$. Let $w$ be a closed window that partitions
$\alpha$ into $\alpha_I$ and $\alpha_O$, and $w'$ be a closed window
that partitions $\alpha$ into $\alpha_I'$ and $\alpha_O'$.  If
$\mathcal{B}(M_{\vec{\alpha},w}) + \vec{c} =
\mathcal{B}(M_{\vec{\alpha},w'})$ for some $\vec{c}\neq(0,0)$ and
the window $w+\vec{c}$ is enclosed in $w'$, then the assembly $\alpha'_O \cup (\alpha_I + \vec{c})$ is
in $\mathcal{A[\mathcal{T}]}$.
\end{lemma}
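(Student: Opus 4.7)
The plan is to explicitly construct an assembly sequence $\vec{\beta}$ for $\mathcal{T}$ whose result is $\beta := \alpha'_O \cup (\alpha_I + \vec{c})$. First I would verify that $\beta$ is a well-defined configuration. The enclosure hypothesis $inside(w + \vec{c}) \subseteq inside(w')$ forces $\dom(\alpha_I + \vec{c}) \subseteq inside(w + \vec{c}) \subseteq inside(w')$, whereas $\dom(\alpha'_O) \subseteq outside(w')$, so the two domains are disjoint. Connectivity of $\beta$ (and hence that it is a genuine assembly, not just a configuration) will fall out of the attachment argument below.

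Next I would build $\vec{\beta}$ by interleaving two subsequences of $\vec{\alpha}$. Let $\vec{\alpha}'_O$ be the subsequence obtained by keeping only those placements of $\vec{\alpha}$ whose positions lie in $outside(w')$; its result is $\alpha'_O$. Let $\vec{\alpha}_I$ be the subsequence of placements at positions in $inside(w)$; its result is $\alpha_I$, and after translating each position by $\vec{c}$ it becomes a sequence with result $\alpha_I + \vec{c}$ supported on $inside(w + \vec{c})$. Merge the two into $\vec{\beta}$ using the relative order induced by $\vec{\alpha}$: at each step, take the next tile from whichever of the two subsequences has its corresponding step in the original $\vec{\alpha}$ earlier.

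The main technical step is to check $\tau$-stability at every placement in $\vec{\beta}$. Consider a tile at some $\vec{p} \in outside(w')$. In $\vec{\alpha}$ its binding came from neighbors in $\alpha$; those in $outside(w')$ are already in place under the interleaving rule, while those in $inside(w')$ correspond to entries of $\mathcal{B}(M_{\vec{\alpha},w'})$. By the hypothesis $\mathcal{B}(M_{\vec{\alpha},w}) + \vec{c} = \mathcal{B}(M_{\vec{\alpha},w'})$, each such entry is the $\vec{c}$-translate of an entry of $\mathcal{B}(M_{\vec{\alpha},w})$; the corresponding neighbor position lies in $inside(w + \vec{c})$, and the tile of $\alpha_I + \vec{c}$ there carries the identical glue. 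A symmetric argument handles tiles placed at positions in $inside(w + \vec{c})$: their cross-bonds originally came from $\alpha_O$ and now find matching partners in $\alpha'_O$ via the same translation correspondence. That same correspondence shows that no cut of the binding graph of $\beta$ has weight below $\tau$, so $\beta$ is $\tau$-stable.

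The main obstacle will be justifying that the interleaving makes every required neighbor available in time. This reduces to the fact that the bond-forming submovies faithfully record the temporal order of cross-window glue appearances, so the relative-order merge preserves causal dependencies within each region, and the translation equality preserves them across regions. A secondary subtlety is the seed: we need $\sigma \sqsubseteq \beta$, which is immediate when $\sigma \in \alpha'_O$ (the natural case in the intended applications), and otherwise requires a brief case analysis on the position of $\sigma$ relative to the two windows. The asymmetry of the statement (the replacement is one-way) is precisely a consequence of the enclosure condition being assumed for $w + \vec{c}$ inside $w'$ but not the reverse.
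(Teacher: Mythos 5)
Your overall architecture---keep the placements of $\vec{\alpha}$ landing in $outside(w')$, keep the placements landing in $inside(w)$ translated by $\vec{c}$, get domain disjointness from the enclosure hypothesis, and check glue availability via the equality of bond-forming submovies---matches the paper's proof, and your disjointness argument is exactly the paper's. The genuine gap is your merge rule. You interleave the two subsequences by their original temporal order in $\vec{\alpha}$ and justify the availability of cross-window partners by asserting that the ``translation equality preserves [causal dependencies] across regions.'' It does not. The hypothesis $\mathcal{B}(M_{\vec{\alpha},w})+\vec{c}=\mathcal{B}(M_{\vec{\alpha},w'})$ is an equality of \emph{sequences} of (position, glue) pairs: it aligns the $k$th bond-forming glue along $w$ with the $k$th along $w'$, but says nothing about when these two glues appear in $\vec{\alpha}$ relative to each other. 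In the intended application $w$ sits in a region assembled early and $w'$ in a much larger region reached only later, so every tile of $\alpha_I$ is placed in $\vec{\alpha}$ long before the tile of $\alpha'_O$ that presents the matching glue across $w'$. Under your merge-by-original-time rule, the tiles of $\alpha_I+\vec{c}$ are scheduled at their original (early) relative positions, at locations inside $w'$ where nothing of $\alpha'_O$ has yet been placed; at the moment of placement they may have no occupied neighbors at all, so the sequence is not a valid assembly sequence. The ``main obstacle'' you flag is therefore not a technicality that the recorded movie order discharges; it is exactly where a different construction is required.

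The fix---and the paper's construction---is to synchronize the two subsequences by the index of the shared bond-forming movie rather than by time in $\vec{\alpha}$: with $M=\mathcal{B}(M_{\vec{\alpha},w})$ and $M'=\mathcal{B}(M_{\vec{\alpha},w'})$, iterate over $k=0,\dots,|M|-1$; if $Pos(M'[k])\in\dom{\alpha'_O}$, play out the $outside(w')$-placements of $\vec{\alpha}$ up through the one producing that glue, and otherwise play out the $inside(w)$-placements (translated by $\vec{c}$) up through the corresponding one; afterwards flush the remaining $\alpha_I$ placements and then the remaining $\alpha'_O$ placements. With this scheduling, whenever a tile needs a glue across the window, the partner's movie step has a strictly smaller index and has already been executed, while within each region the original relative order is preserved; your glue-matching and $\tau$-stability argument then goes through essentially verbatim. (Your seed remark also needs the paper's observation that the hypotheses force the seed to lie in $\dom{\alpha_O}\cap\dom{\alpha'_O}$ or in $\dom{\alpha_I}\cap\dom{\alpha'_I}$, but that is minor next to the scheduling issue.)
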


\begin{proof}
Before we proceed with the proof, the next paragraph introduces some
notation taken directly from \cite{WindowMovieLemma}.

For an assembly sequence $\vec{\alpha} = (\alpha_i \mid 0 \leq i <
l)$, we write $\left| \vec{\alpha} \right| = l$ (note that if
$\vec{\alpha}$ is infinite, then $l = \infty$). We write
$\vec{\alpha}[i]$ to denote $\vec{x} \mapsto t$, where $\vec{x}$
and~$t$ are such that $\alpha_{i+1} = \alpha_i + \left(\vec{x} \mapsto
t\right)$, i.e., $\vec{\alpha}[i]$ is the placement of tile type $t$
at position~$\vec{x}$, assuming that $\vec{x} \in
\partial_{t}\alpha_i$. We write $\vec{\alpha}[i] + \vec{c}$, for some
vector $\vec{c}$, to denote $\left(\vec{x}+\vec{c}\right) \mapsto
t$. We define $\vec{\alpha} = \vec{\alpha} + \left(\vec{x} \mapsto
t\right) = (\alpha_i \mid 0 \leq i < k + 1)$, where $\alpha_{k} =
\alpha_{k-1} + \left(\vec{x} \mapsto t\right)$ if $\vec{x} \in
\partial_{t}\alpha_{k-1}$ and undefined otherwise, assuming $\left|
\vec{\alpha} \right| > 0$. Otherwise, if $\left| \vec{\alpha} \right|
= 0$, then $\vec{\alpha} = \vec{\alpha} + \left(\vec{x} \mapsto t
\right) = (\alpha_0)$, where $\alpha_0$ is the assembly such that
$\alpha_0\left(\vec{x}\right) = t$ and is undefined at all other
positions. This is our notation for appending steps to the assembly
sequence $\vec{\alpha}$: to do so, we must specify a tile type $t$ to
be placed at a given location $\vec{x} \in \partial_t\alpha_{i}$. If
$\alpha_{i+1} = \alpha_i + \left(\vec{x} \mapsto t\right)$, then we
write $Pos\left(\vec{\alpha}[i]\right) = \vec{x}$ and
$Tile\left(\vec{\alpha}[i]\right) = t$. For a window movie $M =
(v_0,g_0), (v_1,g_1), \ldots$, we write $M[k]$ to be the pair
$\left(v_{k},g_{k}\right)$ in the enumeration of $M$ and
$Pos\left(M[k]\right) = v_{k}$, where $v_{k}$ is a vertex of a grid
graph.

We now proceed with the proof, throughout which we will assume that $M = \mathcal{B}\left(M_{\vec{\alpha},w}\right)$ and $M' = \mathcal{B}\left(M_{\vec{\alpha},w'}\right)$. Since $M + \vec{c} =
M'$ for some $\vec{c}\neq(0,0)$ and $w$ and $w'$ are both closed windows, it must be the case that the seed tile of $\alpha$ is in $\dom{\alpha_O} \cap \dom{\alpha'_O}$ or in $\dom{\alpha_I} \cap \dom{\alpha'_I}$. In other words, the seed tile cannot be in $\dom{\alpha_I} \backslash\, \dom{\alpha'_I}$ nor in $\dom{\alpha'_I} \backslash\, \dom{\alpha_I}$. Therefore, assume without loss of generality that the seed tile is in $\dom{\alpha_O} \cap \dom{\alpha'_O}$.

The algorithm in Figure \ref{fig:algo-seq} describes how to produce a new valid assembly sequence $\vec{\gamma}$.

%
%
%
%
%
%
%
%
%
%

\begin{figure}[htp]
\centering
\begin{minipage}{0.5\linewidth}
\begin{algorithm}[H]
\SetAlgoLined
Initialize $i$, $j= 0$ and $\vec{\gamma}$ to be empty

\For{$k = 0$ \KwTo $|M| - 1$}{
  \If{$Pos(M'[k]) \in \dom{\alpha'_O}$}{
    \While{$Pos(\vec{\alpha}[i])\neq Pos(M'[k])$}{
      \If{$Pos(\vec{\alpha}[i]) \in \dom{\alpha'_O}$}{$\vec{\gamma} = \vec{\gamma} + \vec{\alpha}[i]$}
      $i = i + 1$
    }
    $\vec{\gamma} = \vec{\gamma} + \vec{\alpha}[i]$

    $i = i + 1$
  }
  \Else {
    \While{$Pos(\vec{\alpha}[j])\neq Pos(M[k])$}{
      \If{$Pos(\vec{\alpha}[j]) \in \dom{\alpha_I}$}{
        $\vec{\gamma} = \vec{\gamma} + \left(\vec{\alpha}[j] + \vec{c}\right)$}

        $j = j + 1$
    }
    $\vec{\gamma} = \vec{\gamma} + \left(\vec{\alpha}[j] + \vec{c}\right)$

    $j = j + 1$

  }

}

\While{$inside(w) \cap \partial \res{\vec{\gamma}} \ne \emptyset$}{
    \If{$Pos(\vec{\alpha}[j]) \in \dom{\alpha_I}$}{$\vec{\gamma} = \vec{\gamma} + (\vec{\alpha}[j] + \vec{c})$}

    $j = j + 1$
    }

\While {$i < |\vec{\alpha}|$}{
  \If{$Pos(\vec{\alpha}[i]) \in \dom{\alpha'_O}$}{$\vec{\gamma} = \vec{\gamma} + \vec{\alpha}[i]$}

    $i = i + 1$
}

\Return $\vec{\gamma}$

\end{algorithm}
\end{minipage}
\caption{The algorithm to produce a valid assembly sequence $\vec{\gamma}$.}
\label{fig:algo-seq}

\end{figure}

If we assume that  the assembly sequence $\vec{\gamma}$ ultimately produced by the algorithm is valid, then the result of $\vec{\gamma}$ is indeed $\alpha'_O \cup \left( \alpha_I + \vec{c} \right)$. Observe that $\alpha_I$ must be finite, which implies that $M$ is finite. If $|\vec{\alpha}| < \infty$, then all loops will terminate. If $|\vec{\alpha}| = \infty$, then $|\alpha'_O| = \infty$ and the first two loops will terminate and the last loop will run forever. In either case, for every tile in~$\alpha'_O$ and $\alpha_I + \vec{c}$, the algorithm adds a step to the sequence $\vec{\gamma}$ involving the addition of this tile to the assembly. However, we need to prove that the assembly sequence $\vec{\gamma}$ is valid. It may be the case that either: 1. there is insufficient bond strength between the tile to be placed and the existing neighboring tiles, or 2. a tile is already present at this location.

\textbf{Case 1:}
In this case, we claim the following: at each step of the algorithm, the current version of $\vec{\gamma}$ is a valid assembly sequence whose result is a producible subassembly of $\alpha'_O \cup \left(\alpha_I + \vec{c}\right)$. Note that the three loops in the algorithm iterate through all steps of $\vec{\alpha}$, such that, when adding $\vec{\alpha}[i]$ (or $\vec{\alpha}[j] + \vec{c}$) to $\vec{\gamma}$, all steps of the window movie corresponding to the positions/glues of tiles to which $\vec{\alpha}[i]$ (or $\vec{\alpha}[j] + \vec{c}$) initially bind in $\vec{\alpha}$ have occurred. In other words, when adding $\vec{\alpha}[i]$ (or $\vec{\alpha}[j] + \vec{c}$) to $\vec{\gamma}$, the tiles to which $\vec{\alpha}[i]$ (or $\vec{\alpha}[j] + \vec{c}$) initially bind have already been added to $\vec{\gamma}$ by the algorithm. Similarly, all tiles in $\alpha'_O$ (or $\alpha_I + \vec{c}$) added to $\alpha$ before step $i$ (or $j$) in the assembly sequence $\vec{\alpha}$ have already been added to $\vec{\gamma}$.

So, if the tile $Tile\left(\vec{\alpha}[i]\right)$ that is added to the subassembly of $\alpha$ produced after $i-1$ steps can bond at a location in $\alpha'_O$ to form a $\tau$-stable assembly, then the same tile added to the result of $\vec{\gamma}$, which is producible, must also bond to the same location in the result of $\vec{\gamma}$, as the neighboring glues consist of (i) an identical set of glues from tiles in the subassembly of $\alpha'_O$ and (ii) glues on the side of the window movie containing~$\alpha_I + \vec{c}$.  Similarly, the tiles of $\alpha_I + \vec{c}$ must also be able to bind.

\textbf{Case 2:} Since we only assume that $\mathcal{B}\left(M_{\vec{\alpha},w}\right) + \vec{c} = \mathcal{B}\left(M_{\vec{\alpha},w'}\right)$, as opposed to the stronger condition $\mathcal{B}\left(M_{\vec{\alpha},w+\vec{c}}\right) = \mathcal{B}\left(M_{\vec{\alpha},w'}\right)$, which is assumed in the standard WML, we must show that $\dom{\left(\alpha_I + \vec{c}\right)} \cap \dom{\alpha'_O} = \emptyset$. To see this, observe that, by assumption, $w + \vec{c}$ is enclosed in $w'$, which, by definition, means that $inside\left(w+\vec{c}\right) \subseteq inside(w')$. Then we have $\vec{x} \in \dom{\alpha'_O} \Rightarrow \vec{x} \in outside(w') \Rightarrow \vec{x} \not \in inside\left(w'\right) \Rightarrow \vec{x} \not \in inside\left(w+\vec{c}\right) \Rightarrow \vec{x} \not \in \dom{\left(\alpha_I + \vec{c}\right)}$. Thus, locations in $\alpha_I + \vec{c}$ only have tiles from $\alpha_I$ placed in them, and locations in $\alpha'_O$ only have tiles from $\alpha'_O$ placed in them.

So the assembly sequence of $\vec{\gamma}$ is valid, i.e., every single-tile addition in $\vec{\gamma}$ adds a tile to the assembly to form a new producible assembly. Since we have a valid assembly sequence, as argued above, the resulting producible assembly is~$\alpha'_O \cup \left(\alpha_I + \vec{c}\right)$.

\end{proof}

\section{Scaled pier fractals do not strictly self-assemble in the aTAM}

In this section, we first define some notation and establish
preliminary results. Then we prove our main result, namely
that no scaled pier fractal self-assembles in the
aTAM. Finally, we prove corollaries of our main result, including the
fact that no scaled tree fractal self-assembles in the aTAM.

\subsection{Preliminaries}

Recall that each stage $X_s$ ($s>1$) of a $g$-dssf (scaled by a factor
$c$) is made up of copies of the previous stage $X_{s-1}$, each of
which is a square of size $cg^{s-1}$.

In the proof of our main result, we will need to refer to one of
the squares of size $cg^{s-2}$ inside the copies of stage $X_{s-1}$, leading to
the following notation.

\begin{notation}\label{smallwindows}
Let $c \in \Z^+$, $1<s \in \N$ and $1<g \in
\N$. Let $e, f, p, q \in \N_g$. We use
$S_s^c(e,f,p,q)$ to denote $\{0,1,\ldots,cg^{s-2}-1\}^2+ cg^{s-1}(e,f)
+ cg^{s-2}(p,q)$ and
$W_s^c(e,f,p,q)$ to denote the square-shaped, closed window whose inside
is $S_s^c(e,f,p,q)$.
\end{notation}

In Figure~\ref{fig:case2example} below, the bottom and top (circular) magnifications show
the windows $W_2^1(0,1,3,2)$ and $W_3^1(0,1,3,2)$, respectively.

Next, we will need to translate a small window to a position inside a
larger window. These two windows will correspond to squares at the
same relative position in different stages $i$ and $j$ of a $g$-dssf.

\begin{notation}\label{translations}
Let $c \in \Z^+$,  $1<i \in \N$, $1<j \in \N$, with $i<j$, and $e, f, xsp, q\in\N_g$.
We use $\vec{t}^c_{i\rightarrow j}(e,f,p,q)$ to denote
the vector joining the southwest corner of $W_i^c(e,f,p,q)$ to the
southwest corner of $W_j^c(e,f,p,q)$. In other words, $\vec{t}^c_{i\rightarrow
  j}(e,f,p,q) = \left(c\left(g^{j-1}-g^{i-1}\right)e+c\left(g^{j-2}-g^{i-2}\right)p,
c\left(g^{j-1}-g^{i-1}\right)f+c\left(g^{j-2}-g^{i-2}\right)q\right)$.
\end{notation}

For example,
in Figure~\ref{fig:case2example} below,
$\vec{t}_{2\rightarrow 3}^1(0,1,3,2) = (9,18)$.

To apply Lemma~\ref{lem:wml}, we will need the bond-forming submovies
to line up. Therefore, once the two square windows share their
southwest corner after using the translation defined above, we will
need to further translate the smallest one either up or to the right,
or both, depending on which side of the windows contains the
bond-forming glues, which, in the case of scaled pier fractals, always
form a straight (vertical or horizontal) line of length $c$. We will
compute the coordinates of this second translation in our main
proof. For now, we establish an upper bound on these coordinates that
will ensure that the translated window will remain enclosed in the
larger window.

\begin{lemma}\label{lem:enclosure2}
Let $c \in \mathbb{Z}^+$, $1<i \in \N$, $1<j \in \N$, with $i<j$,
$e, f,p,q \in \mathbb{N}_g$, and $x,y \in \mathbb{N}$. Let $m =
c(g^{j-2}-g^{i-2})$. If $x\leq m$ and $y\leq m$, then the window
$W^c_i(e,f,p,q) +\vec{t}^c_{i\rightarrow j}(e,f,p,q)+(x,y)$ is
enclosed in the window $W^c_j(e,f,p,q)$.
\end{lemma}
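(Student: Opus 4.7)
The plan is a direct computation: expand the definition of $inside$ for each window, apply the translations symbolically, and check the coordinate-wise containment under the hypotheses $x\le m$ and $y\le m$.

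First I would note that by definition,
\[
inside(W_i^c(e,f,p,q)) = S_i^c(e,f,p,q) = \{0,1,\ldots,cg^{i-2}-1\}^2 + cg^{i-1}(e,f) + cg^{i-2}(p,q),
\]
and analogously for $j$. So showing enclosure amounts to showing that every point of the form
\[
(a,b) + cg^{i-1}(e,f) + cg^{i-2}(p,q) + \vec{t}^c_{i\rightarrow j}(e,f,p,q) + (x,y),
\]
with $0\le a,b \le cg^{i-2}-1$, lies in $S_j^c(e,f,p,q)$.

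Next I would simplify the constant offset. Using the explicit formula
\[
\vec{t}^c_{i\rightarrow j}(e,f,p,q) = \bigl(c(g^{j-1}-g^{i-1})e + c(g^{j-2}-g^{i-2})p,\ c(g^{j-1}-g^{i-1})f + c(g^{j-2}-g^{i-2})q\bigr),
\]
the term $cg^{i-1}(e,f) + cg^{i-2}(p,q) + \vec{t}^c_{i\rightarrow j}(e,f,p,q)$ telescopes neatly to $cg^{j-1}(e,f) + cg^{j-2}(p,q)$, which is exactly the southwest corner of $S_j^c(e,f,p,q)$. This is the main (though routine) calculation; no step is really an obstacle, but it is important to be careful with the bookkeeping of the $g^{i-1}$ vs.\ $g^{i-2}$ factors.

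After this simplification, the translated inside equals
\[
\{(a+x,\,b+y) : 0 \le a,b \le cg^{i-2}-1\} + cg^{j-1}(e,f) + cg^{j-2}(p,q).
\]
For each such point to lie in $S_j^c(e,f,p,q) = \{0,1,\ldots,cg^{j-2}-1\}^2 + cg^{j-1}(e,f) + cg^{j-2}(p,q)$, it suffices that $0 \le a+x \le cg^{j-2}-1$ and $0 \le b+y \le cg^{j-2}-1$ for all admissible $a,b$. Since $x,y \ge 0$ the lower bounds are immediate, and the upper bounds follow from the worst case $a = b = cg^{i-2}-1$: we need $cg^{i-2}-1+x \le cg^{j-2}-1$ and likewise for $y$, i.e., $x \le c(g^{j-2}-g^{i-2}) = m$ and $y \le m$, which hold by hypothesis. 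Hence $inside(W_i^c(e,f,p,q)+\vec{t}^c_{i\rightarrow j}(e,f,p,q)+(x,y)) \subseteq inside(W_j^c(e,f,p,q))$, which by definition means the translated window is enclosed in $W_j^c(e,f,p,q)$, completing the proof.
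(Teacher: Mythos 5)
Your proof is correct and follows essentially the same route as the paper's: both arguments rest on the observation that $\vec{t}^c_{i\rightarrow j}(e,f,p,q)$ aligns the southwest corners of the two square windows (which the paper takes directly from the definition of $\vec{t}$, while you verify it by the telescoping computation) and then bound the additional shift $(x,y)$ by the difference $m$ of the window sizes. Your version just makes the paper's informal geometric argument explicit in coordinates, which is fine.
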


\begin{proof}
Let $W$ and $w$ denote $W^c_j(e,f,p,q)$ and
$W^c_i(e,f,p,q)+\vec{t}^c_{i\rightarrow j}(e,f,p,q)$,
respectively. Since $W$ and $w$ are square windows that have the same
southwest corner and whose respective sizes are $cg^{j-2}$ and
$cg^{i-2}$, $W$ encloses $w$. The eastern side of $w + (x,0)$ still
lies within $W$, because the maximum value of $x$ is equal to the
difference between the size of $W$ and the size of $w$. The same
reasoning applies to a northward translation of $w$ by $(0,y)$. In
conclusion, $w+(x,y)$ must be enclosed in $W$, as long as neither $x$
nor $y$ exceeds $m$.
\end{proof}

Finally, in our main result, we will use the fact that, for any scaled
pier fractal $\mathbf{P}^c $, we can find an infinite number
of closed windows that all cut the fractal along a single line of
glues (see Lemma~\ref{lem:piers} below), the proof of which uses the
following three intermediate lemmas.

\begin{lemma} \label{lem:piers1}
If $\mathbf{P}$ is any pier fractal with generator $G$,
then $G$ contains at least one pier that is not a double-bridge pier.
\end{lemma}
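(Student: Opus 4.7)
The strategy is proof by contradiction: assume every pier of $G$ is a double-bridge pier. My first key observation is a uniqueness statement. Since $G$ has exactly one h-bridge, at some row $y_h$, and exactly one v-bridge, at some column $x_v$, a point lying on both bridges must be a common endpoint, i.e., lie at $(x_v, y_h)$ with $x_v \in \{l_G, r_G\}$ and $y_h \in \{b_G, t_G\}$. Hence $G$ admits at most one double-bridge pier, and that point sits at a corner of the bounding box. Condition (c) of Definition~\ref{def:pier-fractal} then forces $G$ to have exactly one pier, at that corner; by the obvious $90^\circ$-rotational symmetry of the hypotheses I may take this pier to be $(l_G, b_G)$ with $x_v = l_G$ and $y_h = b_G$.

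Since $(l_G, b_G)$ is the only pier, every other point of $G$ has at least two grid-neighbors in $G$. I will apply this fact to the other two bridge endpoints, which also happen to lie at corners of the bounding box: $(r_G, b_G)$ (the other end of the h-bridge) and $(l_G, t_G)$ (the other end of the v-bridge). Each such corner cell has only two candidate grid-neighbors inside the bounding box, so both of those candidates must actually lie in $G$. In particular, $(r_G, b_G+1) \in G$ and $(l_G+1, t_G) \in G$.

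Finally, I case-split on the direction of the pier's unique neighbor, which must be either $(l_G, b_G+1)$ (north) or $(l_G+1, b_G)$ (east). In the north case, combined with $(r_G, b_G+1)\in G$ established above, the row $y = b_G+1$ contains both the leftmost column and the rightmost column of $G$ and is therefore a second h-bridge, contradicting $nhb_G = 1$. In the east case, combined with $(l_G+1, t_G)\in G$, the column $x = l_G+1$ is a second v-bridge, contradicting $nvb_G = 1$. Either way we reach a contradiction, so some pier of $G$ is not a double-bridge pier. The main obstacle to finding this argument (rather than writing it) was spotting that the ``forced extra bridge'' trick closes both sub-cases in a single step, using only the two non-pier bridge endpoints as witnesses and requiring no traversal of the interior of $G$.
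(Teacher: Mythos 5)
Your proof is correct and follows essentially the same route as the paper's: reduce to a unique double-bridge pier located at a corner, then use the fact that the other two bridge endpoints are corner points that are not piers, so their inward neighbors must lie in $G$, and play this off against $nhb_G = nvb_G = 1$. The only (cosmetic) difference is where the contradiction lands: you pair the pier's unique neighbor with a forced point to exhibit a second h- or v-bridge, while the paper uses the single-bridge conditions to exclude both potential neighbors of the pier and then contradicts connectivity of $G$.
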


\begin{proof}
For the sake of obtaining a contradiction, assume that $G$ contains
exactly one pier, say $(p,q)$, and that $(p,q)$ is a double-bridge
pier.  Note that any double-bridge pier must be positioned at one of
the corners of $G$, that is, $(p,q) \in
\{(0,0),(g-1,0),(0,g-1),(g-1,g-1)\}$.  Without loss of generality,
assume that $(p,q) = (g-1,0)$, as in Figure~\ref{fig:piers}
above. Since $(p,q)$ is a double-bridge pier, $(0,0)$ must be the
other point in the h-bridge and $(g-1,g-1)$ must be the other point in
the v-bridge. Thus, $(0,0)\in G$ (this is also true by definition of
$G$) and $(g-1,g-1)\in G$. Since $(p,q)$ is the only pier in $G$,
$(0,0)$ cannot be north-free (nor east-free), which implies that
$(0,1)\in G$. Therefore, $(g-1,1)\not\in G$ (otherwise, $G$ would
contain a second h-bridge). Similarly, since $(p,q)$ is the only pier
in $G$, $(g-1,g-1)$ cannot be west-free (nor south-free), which
implies that $(g-2,g-1)\in G$. Therefore, $(g-2,0)\not\in G$
(otherwise, $G$ would contain a second v-bridge). In conclusion, the
point $(p,q)=(g-1,0)$ is in $G$ but it is not connected to the rest of
$G$, which contradicts the definition of $\mathbf{P}$, whose generator
must be connected.  
\end{proof}

\begin{lemma} \label{lem:piers2}
Let $\mathbf{P}$ be any pier fractal with generator $G$
such that $(p,q)\in G$ is a parallel single-bridge pier. If $c \in
\mathbb{Z}^+$, then it is always possible to pick one point $(e,f)$ in
$G$ such that, for $1<s \in \N$, $W^c_s(e,f,p,q)$
encloses a configuration that is connected to $\mathbf{P}^c$ via a
single connected line of glues of length $c$.
\end{lemma}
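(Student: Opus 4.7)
My plan is to split on the direction in which the pier $(p, q)$ points. By the swap-axes symmetry $(x, y) \mapsto (y, x)$ that interchanges v- and h-bridges of pier fractals, I may assume $(p, q)$ lies on the v-bridge of $G$, so $q \in \{0, g-1\}$.

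When $(p, q)$ is S-pointing (i.e., $q = 0$ and $(p, 1) \in G$), I will choose $(e, f) = (p, 0)$. Then $W^c_s(e, f, p, q)$ has its southern edge on row $0$ of $\N^2$, automatically free since $\mathbf{P}^c \subseteq \N^2$, and its northern edge internal to the $(p, 0)$ macro-block, separating the sub-blocks $(p, 0)$ and $(p, 1)$. A short induction using $nvb_G = 1$ shows that the top and bottom rows of $X_k$ share exactly one column, namely $p(g^k - 1)/(g-1)$, so the northern internal edge carries exactly one adjacency in $\mathbf{P}$ and therefore contributes the required single line of $c$ glues in $\mathbf{P}^c$. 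The east and west edges are free either by the pier's E- and W-freeness (if internal, since $(p \pm 1, 0) \notin G$) or by the $\N^2$ boundary or the out-of-range macro-positions $(e \pm 1, 0)$ when $p \in \{0, g-1\}$.

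When $(p, q)$ is N-pointing ($q = g - 1$ and $(p, g - 2) \in G$), the naive choice $(e, f) = (p, q)$ can fail: if $(0, 1) \in G$, then the stage-$(s+1)$ copy of the fractal at macro-position $(0, 1)$ produces a second line of $c$ glues on the window's northern side at the v-bridge column. I will instead pick $(e, f) \in G$ with $f < g - 1$ and $(e, f + 1) \notin G$, together with analogous constraints on $e$ when $p \in \{0, g - 1\}$. Such $(e, f)$ exists by the following argument: if every non-top-row point of $G$ had a north neighbor in $G$, then every non-empty column of $G$ would extend all the way up to the top row; but $(p \pm 1, g - 1) \notin G$ (by the pier's E- and W-freeness) would force one of the valid columns $p - 1$ or $p + 1$ to be empty, contradicting the requirement that $G$ has at least one point in every column. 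With $(e, f)$ so chosen, the southern edge contributes exactly $c$ glues by the same $X_{s-2}$ v-bridge induction, while the northern edge at row $cg^{s-1}(f + 1) < cg^s$ lies strictly inside $X_s^c$, so that $(e, f + 1) \notin G$ suffices to exclude any adjacency there; the east and west edges are handled by the pier property (if internal) or by the chosen conditions on $e$ (if external).

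The main obstacle I expect is the corner sub-case $p \in \{0, g - 1\}$ of the N-pointing scenario, where one must simultaneously achieve N-freeness and the matching lateral freeness for $(e, f)$. This requires a more delicate combinatorial argument combining the pier property, $nhb_G = 1$, and the row/column non-emptiness hypothesis of a pier fractal. The E- and W-pointing h-bridge subcases then follow from the v-bridge subcases by swapping the coordinate axes.
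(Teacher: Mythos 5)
Your overall skeleton (case split on the pier's direction, an anchor point $(e,f)$ chosen to be N-free below the top row, and the count of contacts via uniqueness of the v-bridge in each stage $X_k$) matches the paper's, and your S-pointing and interior-$p$ N-pointing cases do go through. But there are two genuine problems. First, the N-pointing corner cases $p\in\{0,g-1\}$ are exactly the crux of the lemma, and you explicitly leave them unproved ("requires a more delicate combinatorial argument"). They are also the only place where the hypothesis that the pier is a \emph{single}-bridge pier is needed, and your sketch never uses it. The paper closes these cases as follows: single-bridgeness forces $(g-1,g-1)\notin G$ when $p=0$ (resp. $(0,g-1)\notin G$ when $p=g-1$), so the window's outer lateral neighbor sub-square is empty no matter which copy sits next door; and $(e,f)$ is taken to be the topmost point of the column adjacent to $p$ (column $1$ if $p=0$, column $p-1$ otherwise), which exists below the top row and is automatically N-free precisely because the pier's lateral freeness empties that column's top cell. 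Without some such argument, an arbitrary N-free $(e,f)$ does not control the window side that lies on the boundary of the copy at $(e,f)$.

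Second, your freeness argument "by the out-of-range macro-positions $(e\pm1,0)$ when $p\in\{0,g-1\}$" in the S-pointing case is unsound as a principle: a window side lying on the boundary of the stage-$s$ square is not automatically free, because $\mathbf{P}$ does not stop at $X_s$ --- later stages attach further copies of $X_s$, and adjacency across that boundary occurs along the h-bridge row of $X_s$ (row $0$ here, since $(0,0)\in G$), which lies inside your window's row range. Concretely, for an S-pointing pier at $(g-1,0)$ with $(1,0)\in G$, your window $W^c_s(g-1,0,g-1,0)$ would receive a second line of $c$ glues on its east side. This configuration happens to be vacuous --- such a pier lies on both the v-bridge and the row-$0$ h-bridge, hence is a double-bridge pier and excluded by hypothesis --- but you neither observe this nor avoid relying on the faulty boundary reasoning, and the same reasoning cannot rescue the N-pointing corners. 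The paper sidesteps all of this by choosing $(e,f)$ so that every window side is either interior to the stage square with an adjacent sub-square of $G$ shown empty, leaving nothing to check beyond $X_s$ itself.
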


\begin{proof}
Without loss of generality, assume that the pier $(p,q)$ is pointing
north, that it belongs to a v-bridge and that $q=g-1$ (a similar
reasoning holds if $q=0$ and the pier points south, or if the pier
belongs to an h-bridge and points either west or east). Now, we must
pick a point $(e,f)$ such that any window of the form $W^c_s(e,f,p,q)$
has exactly three free sides. We distinguish two cases.
\begin{enumerate}
\item If $p=0$, that is, the pier is in the leftmost column of $G$,
  then $(1,g-1) \not\in G$, since $(0,g-1)$ is a north-pointing
  pier. Therefore, there must exist at least one point in $G \cap
  (\{1\} \times \N_{g-1})$, say $(1,y)$, with $0 \leq y < g-1$, that
  is north-free. In this case, we pick $(e,f)$ to be equal to $(1,y)$.
  Now, consider any window $w$ of the form $W^c_s(e,f,p,q)$. The north
  side of $w$ is free (since $q=g-1$, $(e,f)$ is north-free in $G$
  and $f=y<g-1$), the east side of $w$ is free (since $(1,g-1)\not\in
  G$), and the west side of $w$ is free (since the facts that
  $(0,0)\in G$, $(0,g-1)\in G$ and $(0,g-1)$ is a single-bridge pier
  together imply that $(g-1,g-1) \not\in G$). Furthermore, since
  $(0,g-1)$ is a north-pointing pier, $S(0,g-1) \in G$.
\item If $p>0$, then $(p-1,g-1)\not\in G$. Therefore, there must exist
  at least one point in $G \cap (\{p-1\} \times \N_{g-1})$, say
  $(p-1,y)$, with $0 \leq y < g-1$, that is north-free. In this case,
  we pick $(e,f)$ to be equal to $(p-1,y)$.  Now, consider any window
  $w$ of the form $W^c_s(e,f,p,q)$. The north side of $w$ is free
  (since $q=g-1$, $(e,f)$ is north-free in $G$ and $f=y<g-1$), the west side of
  $w$ is free (because $(p-1,g-1)\not\in G$), and the east side of $w$
  is free (since, either $p<g-1$ and $(p+1,g-1)\not\in G$, or $p=g-1$,
  in which case the facts that $(g-1,g-1)\in G$, $(g-1,0)\in G$ and
  $(g-1,g-1)$ is a single-bridge pier together imply that $(0,g-1)
  \not\in G$). Furthermore, since $(p,g-1)$ is a north-pointing pier,
  $S(p,g-1) \in G$.
\end{enumerate}
Therefore, in both cases, $W^c_s(e,f,p,q)$ has exactly three free
sides and encloses a configuration that is connected
to $\mathbf{P}^c$ via a single connected horizontal line
of glues of length $c$ positioned on the south side of the window.

\end{proof}

\begin{lemma} \label{lem:piers3}
Let $\mathbf{P}$ be any pier fractal with generator $G$
such that $(p,q)\in G$ is an orthogonal  single-bridge pier. If $c \in
\mathbb{Z}^+$, then it is always possible to pick one point $(e,f)$ in
$G$ such that, for $1<s \in \N$, $W^c_s(e,f,p,q)$
encloses a configuration that is connected to $\mathbf{P}^c$ via a
single connected line of glues of length $c$.
\end{lemma}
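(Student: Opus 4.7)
The plan is to mirror the proof of Lemma~\ref{lem:piers2}. Without loss of generality, I assume $(p, q)$ lies on the h-bridge, points north, and equals $(0, y_h)$; the other symmetric cases (pier at the east end of the h-bridge, pier pointing south, or pier on the v-bridge) are handled by analogous choices of $(e, f)$ after the obvious reflections. From the pier hypotheses, $(0, y_h - 1) \in G$, $(0, y_h + 1), (1, y_h) \not\in G$, and $(g-1, y_h) \in G$. My aim is to choose $(e, f) \in G$ so that $W^c_s(e, f, 0, y_h)$ has exactly three free sides, with the connection being a south-facing line of $c$ glues joining the enclosed $X_{s-2}^c$ copy at sub-position $(0, y_h)$ to its southern neighbor at sub-position $(0, y_h - 1)$ within $X_{s-1}^c$ at $(e, f)$; the length $c$ of this line follows by recursively using $nvb_G = 1$, just as in Lemma~\ref{lem:piers2}.

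The one delicate point is to rule out $y_h = g-1$, and this is precisely where the single-bridge hypothesis enters. If $y_h = g-1$, then both $(0, 0)$ and $(0, g-1) = (p, q)$ lie in $G$, which makes $x = 0$ a v-bridge column of $G$; since $nvb_G = 1$, this forces $x_v = 0$, and then the pier $(0, g-1)$ would lie on the v-bridge as well as on the h-bridge, contradicting the assumption that it is a single-bridge pier. Hence $y_h \in \{1, \ldots, g-2\}$, and I take $(e, f) = (0, 0) \in G$.

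With this choice I verify the four sides of $W^c_s(0, 0, 0, y_h)$. The west side is free because the window abuts the western boundary of $\N^2$ at $x = 0$, so nothing can lie to its west. The east side is free because the cell immediately east of the window lies at sub-position $(1, y_h)$ of $X_{s-1}^c$ at $(0, 0)$ and $(1, y_h) \not\in G$, while $p = 0$ keeps the window strictly inside its enclosing $X_{s-1}^c$ from the east, ruling out higher-level east connections. The north side is free because $(0, y_h + 1) \not\in G$ eliminates the internal adjacency, and both $q = y_h < g-1$ and $f = 0 < g-1$ keep the window strictly below the top of $X_s^c$, precluding any higher-level north contribution. Finally, the south side carries the desired internal connection to the $X_{s-2}^c$ copy at sub-position $(0, y_h - 1)$, which is a line of $c$ glues as noted above. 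The main anticipated obstacle is exactly the case $y_h = g-1$, which the single-bridge hypothesis eliminates; once that is cleared, the verification is direct, and the remaining symmetric sub-cases follow by the same reasoning (for instance, a pier $(g-1, g-1)$ can arise in the reflected setting, and there one picks $(e, f) = (e_0, 0)$ with $e_0 = \max\{x : (x, 0) \in G\}$, using that single-bridgeness then forces $(g-1, 0) \not\in G$, so $e_0 \leq g-2$ and the analogous argument goes through).
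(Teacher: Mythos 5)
Your fully argued case (north-pointing pier at the west end of the h-bridge) is correct and follows the same strategy as the paper's proof: choose $(e,f)$ so that $W^c_s(e,f,p,q)$ has exactly three free sides, with the pier's unique neighbor supplying the single line of $c$ glues (you use the boundary of $\N^2$ via $(e,f)=(0,0)$ where the paper instead selects a suitable north-free point of $G$). The gap is in your reduction of the remaining orientations. Transposing $x$ and $y$ is a legitimate symmetry of the class of generators, but horizontal and vertical reflections are not, because Definition~\ref{def:dssf} anchors every generator at $(0,0)$; and your argument uses that anchor essentially (to exclude $y_h=g-1$, to guarantee $(e,f)=(0,0)\in G$, and to get the west side free from the boundary of $\N^2$). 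So there are no ``obvious reflections'': the reflected choice of $(e,f)$ need not lie in $G$, and the fallback $(e_0,0)$ with $e_0=\max\{x:(x,0)\in G\}$ is justified in your write-up only for the pier $(g-1,g-1)$. For an east-end north-pointing pier with $1\le y_h\le g-2$, the needed bound $e_0\le g-2$ does not follow from single-bridgeness at all; it follows from $nhb_G=1$ together with $(0,0)\in G$ (otherwise row $0$ would be a second h-bridge) --- a different argument that you never make.

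Worse, for an east-end \emph{south}-pointing pier the recipe $(e_0,0)$ can genuinely fail. Take $g=4$ and $G=\{(0,0),(1,0),(3,0),(1,1),(2,1),(3,1),(0,2),(1,2),(0,3)\}$: this is a pier fractal, and $(p,q)=(3,0)$ is an orthogonal single-bridge pier (south-pointing, on the h-bridge). Here $e_0=g-1$, so $(e,f)=(3,0)$ puts the window at the southeast corner of the stage-$s$ square; since $(1,0)\in G$, the copy of $X_s$ immediately to the east in stage $s+1$ abuts the window along part of its east side (both squares contain their corner blocks, as $(0,0),(3,0)\in G$), so the enclosed configuration is attached to $\mathbf{P}^c$ along the east side in addition to the intended north-side line, and the conclusion of the lemma fails for that choice. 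A correct choice here must be based on a different row (in this example $(e,f)=(0,3)$ works), and showing that such a choice always exists in the remaining cases requires its own small arguments (of the kind the paper supplies in its two-case analysis for its chosen orientation). As written, your proposal asserts these cases follow ``by the same reasoning,'' and that assertion is where the proof is incomplete.
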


\begin{proof}
Without loss of generality,
assume that the pier $(p,q)$ is pointing east, that it belongs to a v-bridge
and that $q=g-1$ (a similar reasoning holds if $q=0$, or if the pier
points west, or if the pier belongs to an h-bridge and points either
north or south). Note that, in this case, $g$ must be strictly greater
than 2, since $(p,g-1)\in G$, $(p,0)\in G$ but $(p,g-2)\not\in
G$. Now, we must pick a point $(e,f)$ such that any window of the form
$W^c_s(e,f,p,q)$ has exactly three free sides. We distinguish two
cases.
\begin{enumerate}
\item If $p<g-1$, then $(p,g-2)\not\in G$, since $(p,g-1)$ is an
  east-pointing pier. Therefore, there must exist at least one point
  in $G \cap (\{p\} \times \N_{g-2})$, say $(p,y)$, with $0 \leq y <
  g-2$, that is north-free. In this case, we pick $(e,f)$ to be equal
  to $(p,y)$.  Now, consider any window $w$ of the form
  $W^c_s(e,f,p,q)$. The north side of $w$ is free (since $q=g-1$, $(e,f)$ is
  north-free in $G$ and $f=y<g-2<g-1$), the east side of $w$ is free
  (since $p<g-1$ and $(p+1,g-1)\not\in G$), and the south side of $w$
  is free (since $(p,g-2)\not\in G$). Furthermore, since $(p,g-1)$ is
  an east-pointing pier, $W(p,g-1) \in G$.
\item If $p=g-1$, that is, the pier is in the rightmost column of $G$, then
the facts that $(g-1,0) \in G$, $(g-1,g-1)\in G$ and $(g-1,g-1)$ is a
single-bridge pier together imply that $(0,g-1)\not\in G$. This,
together with the fact that $(0,0)\in G$, implies that there must
exist at least one point in $G \cap (\{0\} \times \N_{g-1})$, say
$(0,y)$, with $0 \leq y < g-1$, that is north-free. In this case, we
pick $(e,f)$ to be equal to $(0,y)$. Now, consider any window $w$ of
the form $W^c_s(e,f,p,q)$. The north side of $w$ is free (since $q=g-1$,
$(e,f)$ is north-free in $G$ and $f=y<g-1$), the east side of $w$ is free
(since $(0,g-1)\not\in G$), and the south side of $w$ is free (since
$(p,g-2)\not\in G$). Furthermore, since
$(g-1,g-1)$ is an east-pointing pier, $W(g-1,g-1) \in G$.
\end{enumerate}
Therefore, in both cases, $W^c_s(e,f,p,q)$ has exactly three free
sides and encloses a configuration that is connected
to $\mathbf{P}^c$ via a single connected horizontal line
of glues of length $c$ positioned on the west side of the window.

\end{proof}

\begin{lemma} \label{lem:piers}
Let $\mathbf{P}$ be any pier fractal with generator $G$. If $c \in
\mathbb{Z}^+$, then it is always possible to pick one pier $(p,q)$
and one point $(e,f)$, both in $G$, such that, for $1<s \in \N$, $W^c_s(e,f,p,q)$ encloses a configuration that is connected
to $\mathbf{P}^c$ via a single connected (horizontal or vertical) line
of glues of length $c$.
\end{lemma}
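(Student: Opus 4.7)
The plan is to combine Lemma~\ref{lem:piers1} with Lemmas~\ref{lem:piers2} and \ref{lem:piers3} via a three-way case analysis on the type of pier produced by Lemma~\ref{lem:piers1}. First I would invoke Lemma~\ref{lem:piers1} to pick a pier $(p,q)\in G$ that is not a double-bridge pier, so $(p,q)$ must fall into exactly one of three categories: a parallel single-bridge pier, an orthogonal single-bridge pier, or a real pier. The first two cases are immediately discharged by Lemmas~\ref{lem:piers2} and \ref{lem:piers3}, respectively, which directly supply the desired $(e,f)\in G$. So the entire proof reduces to the real-pier case.

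For a real pier I would collapse to a canonical direction by symmetry, assuming without loss of generality that $(p,q)$ points east, so that $E(p,q)\in G$, $N(p,q), S(p,q), W(p,q)\notin G$, and $p\le g-2$. Because the generator has exactly one h-bridge (and hence, at every stage, horizontally adjacent sub-copies within one copy of $X_{s-1}$ are joined by exactly one vertical line of $c$ glues), the east side of $W_s^c(e,f,p,q)$ automatically carries the required single glue line of length $c$ coming from the sub-copy at $(p+1,q)$, regardless of which $(e,f)\in G$ is chosen. The task therefore reduces to selecting $(e,f)\in G$ so that the north, south, and west sides of the window are free.

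The key observation is that, for each direction $D\in\{N,S,W\}$ in which $(p,q)$ is free, the corresponding window side is automatically free unless $(p,q)$ lies on the $D$-extreme row/column of $G$ (namely $q=g-1$ for $N$, $q=0$ for $S$, $p=0$ for $W$). In each such boundary sub-case the window side coincides with the boundary of the copy at $(e,f)$, and so we need $(e,f)$ to be $D$-free in $G$ to rule out a connection with the neighboring copy at $D(e,f)$. Because a pier has only three free directions and $g>1$ prevents $q$ from being both $0$ and $g-1$, at most two such constraints arise simultaneously and they are never opposite. For each resulting combination I would exhibit an explicit valid $(e,f)$: the point $(0,0)\in G$ handles every combination that is a subset of $\{W,S\}$ (and the empty combination), while the requirement from Definition~\ref{def:dssf} that every row and column of $G$ is nonempty lets me locate, for example, the topmost point of $G$ in its leftmost column to serve as $(e,f)$ for the $\{W,N\}$ combination, and analogously for $\{N\}$ on its own, and for any other pattern produced by the other three pier directions under symmetry.

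The main obstacle is ensuring that the real-pier case analysis is exhaustive and produces a concrete witness $(e,f)$ for every boundary pattern. The hardest sub-cases are those where $(p,q)$ lies simultaneously on two adjacent boundaries of $G$, such as $p=0$ and $q=g-1$: here $(0,0)$ is no longer free in the required directions, and one must appeal to the nonemptiness of every row and column of $G$ to produce a single $(e,f)\in G$ that is free in two non-opposite directions at once.
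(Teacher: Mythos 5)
Your overall decomposition is the same as the paper's: Lemma~\ref{lem:piers1} removes double-bridge piers, Lemmas~\ref{lem:piers2} and~\ref{lem:piers3} dispose of the two single-bridge types, and only the real-pier case needs a direct argument. Where you diverge is in that last case, and there your argument has a genuine gap. (For reference, the paper handles it with no case analysis at all: for a real pier it simply sets $(e,f)=(p,q)$.)

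The gap is in your ``key observation.'' Requiring $(e,f)$ to be $D$-free in $G$ rules out glues across the window's $D$ side only when $D(e,f)$ is still a position of $\N_g^2$, i.e., when the would-be neighboring copy lies inside the same copy of $X_s$. If the chosen $(e,f)$ itself sits in the $D$-extreme row or column, $D$-freeness holds vacuously, and for $D=N$ or $D=E$ the window's $D$ side then lies on the boundary of $\{0,\ldots,cg^s-1\}^2$, where glues can a priori arrive from copies of $X_s$ placed at stage $s+1$ (e.g., from the copy at $cg^s(0,1)$ when $(0,1)\in G$). Your own witnesses land exactly there: the topmost point of the leftmost column may be $(0,g-1)$, so in the $\{N\}$ sub-case your stated criterion does not justify the freeness of the north side. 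Moreover, your bookkeeping of the two-constraint corner is off: since $(0,0)\in G$ and every row and column of $G$ is occupied, the corners $(0,g-1)$ and $(g-1,0)$ are automatically bridge points, so your designated ``hardest sub-case'' ($p=0$, $q=g-1$) is vacuous for real piers; the two-constraint corner that genuinely occurs is a real pier at $(g-1,g-1)$ whose unique neighbor is to the west or south, which your symmetry sweep must cover and for which you give no witness (and a point that is both $N$- and $E$-free off the top row and rightmost column is not shown to exist). What actually closes all of these boundary sub-cases---and what makes the paper's uniform choice $(e,f)=(p,q)$ work---is the real-pier hypothesis itself: the pier is not a bridge point, so the matching point in the opposite extreme row or column of $G$ is absent (e.g., $(p,0)\notin G$ for a pier at $(p,g-1)$, and $(0,q)\notin G$ for a pier at $(g-1,q)$), and by the base-$g$ structure of the extreme rows/columns of the stages this is precisely what prevents any adjacency between the window and the next stage's copies across the top or right boundary. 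Your proposal never invokes this property, so the real-pier case is incomplete as written. (A minor point: your ``points east'' is the reverse of the paper's convention, under which a $D$-pointing pier satisfies $D^{-1}(p,q)\in G$; harmless, but worth aligning.)
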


\begin{proof}
Let $\mathbf{P}$ be any pier fractal with generator
$G$. Let $c \in \mathbb{Z}^+$ and $1<s \in \N$. By
definition of a pier  fractal, $G$ contains at least one
pier. We will pick one of these piers carefully.

According to Lemma~\ref{lem:piers1}, it is always possible to choose a
pier in $G$ that is not a double-bridge pier. Therefore, we can always
choose either a real pier or a single-bridge pier. We now consider
the three possible cases.

First, if $G$ contains one or more real piers, we can
simply choose one of them as $(p,q)$. In this case, we pick
$(e,f)=(p,q)$, since any window of the form $W^c_s(p,q,p,q)$, where
$(p,q)$ is a real pier in $G$, must have exactly three free
sides. Therefore, $W^c_s(p,q,p,q)$ must enclose a configuration that
is connected to $\mathbf{P}^c$ via a single line of glues of length
$c$, namely on its non-free side.

Second, if $G$ does not contain any real piers, $G$ must contain at least
one single-bridge pier. So we wrap up this proof by considering the
two types of single-bridge piers.

If $G$ contains at least one parallel single-bridge pier, according to
Lemma~\ref{lem:piers2}, it is always possible to choose one pier
$(p,q)$ and one point $(e,f)$, both in $G$, such that, for $1<s \in
\N$, $W^c_s(e,f,p,q)$ encloses a configuration that is connected to
$\mathbf{P}^c$ via a single connected line of glues of length $c$.

Finally, if $G$ contains at least one orthogonal single-bridge pier,
according to Lemma~\ref{lem:piers3}, it is always possible to choose
one pier $(p,q)$ and one point $(e,f)$, both in $G$, such that, for
$1< s \in \N$, $W^c_s(e,f,p,q)$ encloses a configuration that is
connected to $\mathbf{P}^c$ via a single connected line of glues of
length $c$.  
\end{proof}

\subsection{Main result}\label{sec:main-result}

We are now ready to prove our main result.

\begin{theorem}\label{thm:main}
Let $\mathbf{P}$ be any pier fractal. If $c \in
\Z^+$, then $\mathbf{P}^c$ does not strictly self-assemble in
the aTAM.
\end{theorem}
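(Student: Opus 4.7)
The plan is to argue by contradiction via the Closed Window Movie Lemma (Lemma~\ref{lem:wml}). Assume that some TAS $\mathcal{T}=(T,\sigma,\tau)$ strictly self-assembles $\mathbf{P}^c$, fix an assembly sequence $\vec{\alpha}$ producing a terminal assembly $\alpha$ with $\dom\alpha=\mathbf{P}^c$, and apply Lemma~\ref{lem:piers} to choose a pier $(p,q)\in G$ together with a point $(e,f)\in G$ such that, for every $s>1$, the window $W^c_s(e,f,p,q)$ has three free sides and is joined to the rest of $\mathbf{P}^c$ through a single connected line of exactly $c$ glue-edges along its fourth side.

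For each $s>1$, the bond-forming submovie $\mathcal{B}(M_{\vec{\alpha},W^c_s(e,f,p,q)})$ records at most $c$ placements along those $c$ consecutive edges. Up to translation, such a submovie is completely determined by the permutation giving the order in which the edges are filled together with the glue type attached to each, so there are only finitely many translation-equivalence classes of submovies. Pigeonhole applied to the infinitely many $s$ therefore yields $i<j$ and a vector $\vec{t}$ with $\mathcal{B}(M_{\vec{\alpha},W^c_i(e,f,p,q)})+\vec{t}=\mathcal{B}(M_{\vec{\alpha},W^c_j(e,f,p,q)})$.

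To invoke Lemma~\ref{lem:wml} I next verify that $\vec{t}\neq(0,0)$ and that $W^c_i(e,f,p,q)+\vec{t}$ is enclosed in $W^c_j(e,f,p,q)$. Self-similarity (together with $nvb_G=nhb_G=1$) pins the connecting line in any stage $s$ to the $c$ edges at the unique bridge coordinate of $X_{s-2}$, namely $x_{vb}(g^{s-2}-1)/(g-1)$ (or the analogous h-bridge value), where $x_{vb}\in\{0,\ldots,g-1\}$ is the bridge column of $G$. Consequently $\vec{t}=\vec{t}^c_{i\rightarrow j}(e,f,p,q)+(x,y)$, where the correction $(x,y)$ along the line's direction satisfies $|x|,|y|\leq c(g^{j-2}-g^{i-2})=m$, so Lemma~\ref{lem:enclosure2} certifies the enclosure, while a short per-case check (for example, in the parallel-pier case $q=g-1$ forces the $y$-component of $\vec{t}^c_{i\rightarrow j}$ to be strictly positive) gives $\vec{t}\neq(0,0)$. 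Lemma~\ref{lem:wml} therefore produces the $\mathcal{T}$-producible assembly $\beta:=\alpha'_O\cup(\alpha_I+\vec{t})$.

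The last, and hardest, step is to deduce a contradiction from $\beta$. Iterating $X_{k+1}=X_k+g^kG$ to obtain $X_{j-2}=X_{i-2}+\sum_{k=i-2}^{j-3}g^kG$ and picking the summand $(x_{vb},0)\in G$ at every level shows the translated copy of $X_{i-2}^c$ sits inside $X_{j-2}^c$, so $\dom\beta\subseteq\mathbf{P}^c$; strictness follows from $|X_{i-2}|<|X_{j-2}|$, and hence $\beta$ is a producible assembly missing tiles of $\mathbf{P}^c$ throughout the annulus $S^c_j(e,f,p,q)\setminus\bigl(S^c_i(e,f,p,q)+\vec{t}\bigr)$. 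The crux of the argument is to show that no extension of $\beta$ can fill that annulus: the only tiled neighbors of any annulus position come from the three free sides of the translated smaller window, because $\alpha'_O$'s contact with the interior of $W^c_j(e,f,p,q)$ runs entirely through the already-filled connecting line. The outward-facing glues on those free sides are translates of glues that, in the original $\alpha$, abutted positions outside $\mathbf{P}^c$; since $\alpha$ is terminal and strict self-assembly forbids any producible assembly from ever placing a tile in $\Z^2\setminus\mathbf{P}^c$, those glues cannot trigger a new tile attachment, either alone or in cooperation at a corner of the translated smaller window. Hence every extension of $\beta$ is terminal with $\dom\subsetneq\mathbf{P}^c$, contradicting the assumed strict self-assembly of $\mathbf{P}^c$ by $\mathcal{T}$.
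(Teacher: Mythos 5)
Your proposal is correct and follows essentially the same route as the paper's proof: pick the windows $W^c_s(e,f,p,q)$ via Lemma~\ref{lem:piers}, pigeonhole on bond-forming submovies, align the two windows with $\vec{t}^c_{i\rightarrow j}(e,f,p,q)$ plus a correction $(x,y)$ bounded so that Lemma~\ref{lem:enclosure2} gives enclosure, and splice with the Closed Window Movie Lemma to obtain a producible assembly whose inside of $w_j$ is the translated inside of $w_i$. The differences are minor: you work out the concluding ``the missing annulus can never be tiled'' argument (and the fact that the spliced domain stays inside $\mathbf{P}^c$) in more detail than the paper, which states this step tersely, while you omit the paper's short remark that $i<j$ should be chosen so that the seed lies in both windows or in neither --- a routine condition that is always satisfiable.
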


\begin{proof}

Let $\mathbf{P}$ be any pier fractal with a $g\times g$
generator $G$, where $1<g \in \N$. Let $c$ be any
positive integer.  For the sake of obtaining a contradiction, assume
that $\mathbf{P}^c$ does strictly self-assemble in some TAS
$\mathcal{T} = (T,\sigma,\tau)$. Further assume that $\vec{\alpha}$ is
some assembly sequence in $\mathcal{T}$ whose result is $\alpha$, such
that $\dom \alpha = \mathbf{P}^c$.

\begin{figure}[htp]
\centering
 \includegraphics[width=0.7\textwidth]{./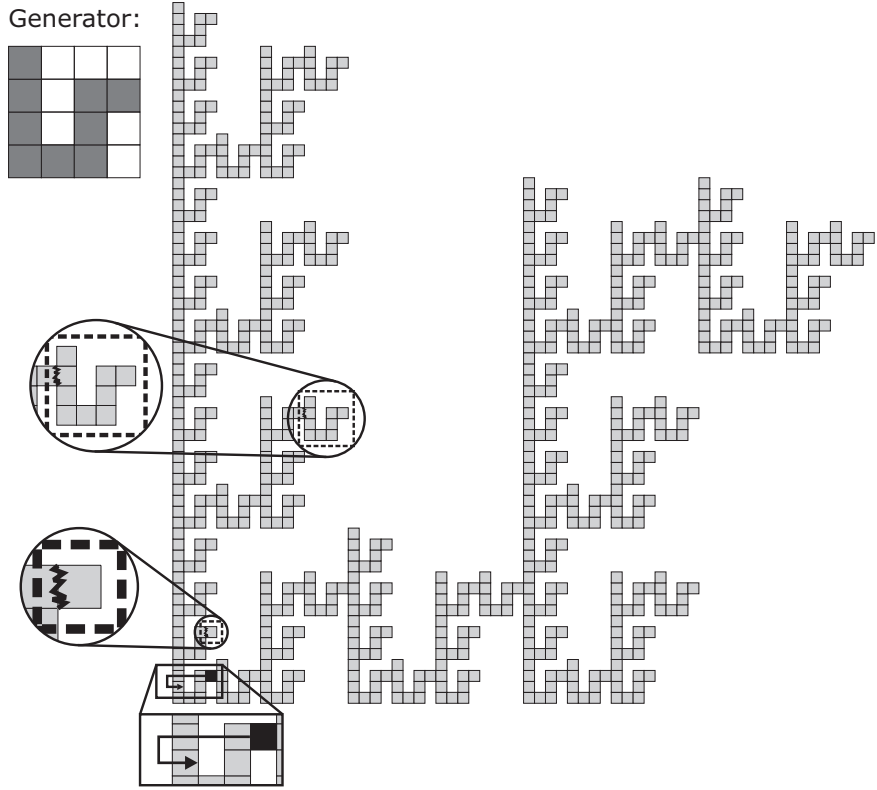}
\caption{First three stages ($s=1,2,3$) of an unscaled ($c=1$) pier fractal with an
  east-pointing pier at position $(3,2)$. The
  east-free point $(0,1)$ is at the tip of the arrow (see the rectangular magnification box). In other
  words,  $g=4$, $(p,q)=(3,2)$, and $(e,f)=(0,1)$.}
\label{fig:case2example}
\end{figure}

\begin{figure}
\centering

\renewcommand{\arraystretch}{1.5}
\begin{tabular}{|c|ll|}\cline{2-3}
\multicolumn{1}{c|}{} & \multicolumn{2}{c|}{\bf Translation Formulas for $(x,y)$}\\\hline


\raisebox{0.5\totalheight}{
  \begin{tabular}{c}
    {\bf North-pointing Pier} \\ \\
    Vertical Bridge\\
    with Southern \\
    Bridge Point at $(a,0)$\\
  \end{tabular}
}
&
      \includegraphics[height=1.2in]{./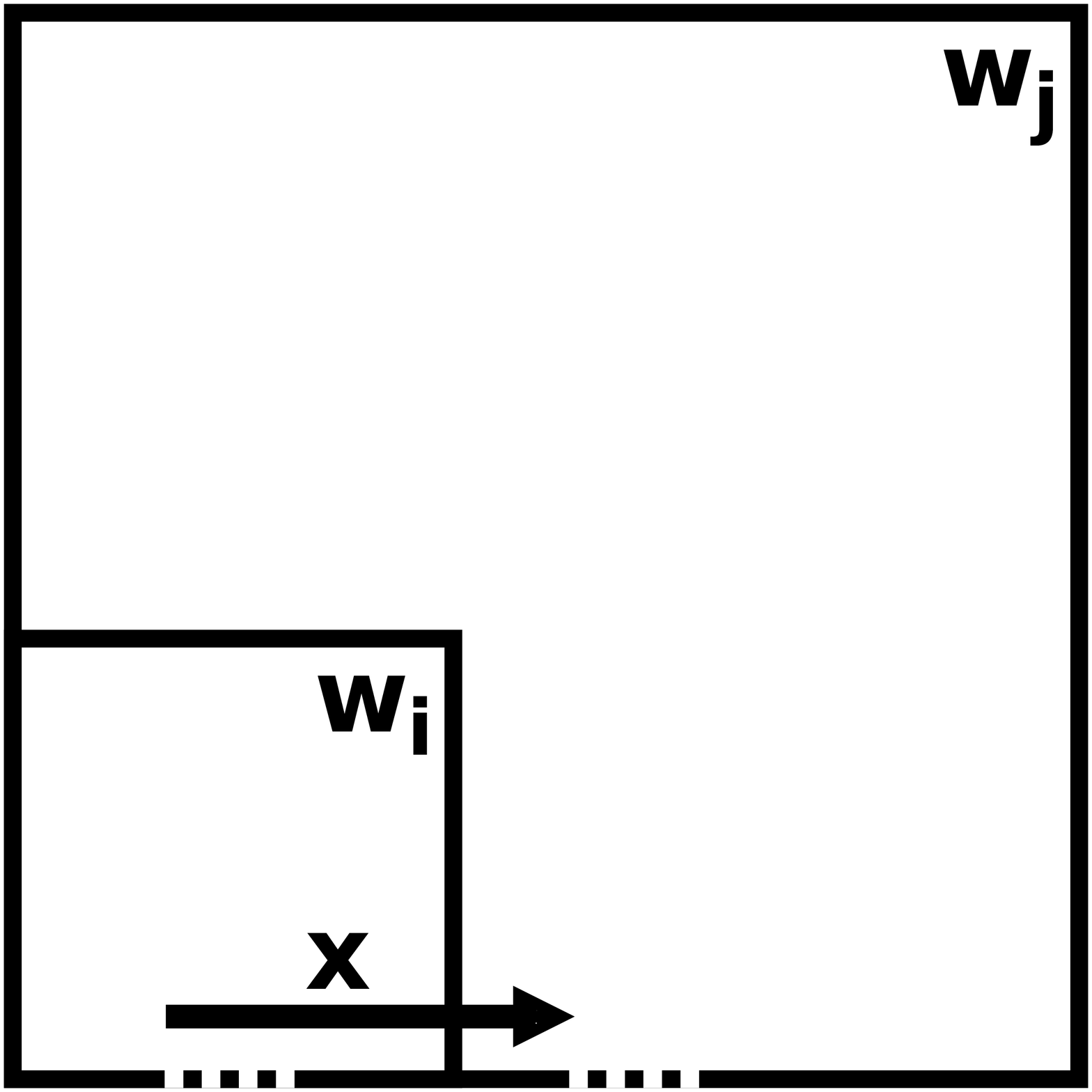}
&
\raisebox{1\totalheight}{
   \begin{tabular}{l}
             $x = ac \Sigma_{k=i-2}^{j-3}g^k$
        \\ \\
         $y = 0$        \\
     \end{tabular}
}\\\hline


\raisebox{0.5\totalheight}{
  \begin{tabular}{c}
    {\bf East-pointing Pier} \\ \\
    Horizontal Bridge\\
    with Western \\
    Bridge Point at $(0,b)$\\
  \end{tabular}
}
&
      \includegraphics[height=1.2in]{./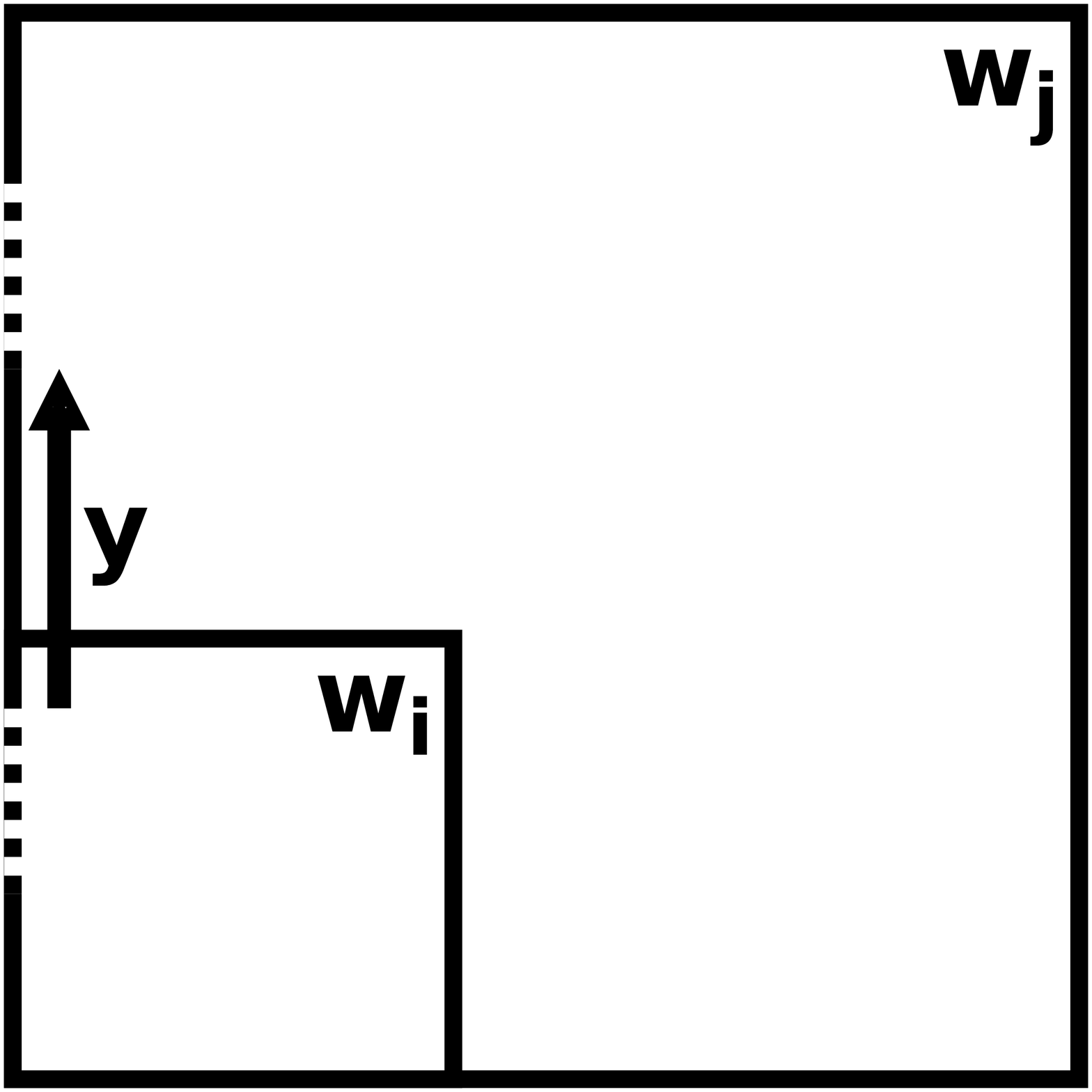}
&
\raisebox{1\totalheight}{
   \begin{tabular}{l}
             $x = 0$
        \\ \\
         $y= bc\Sigma_{k=i-2}^{j-3}g^k$        \\
     \end{tabular}
}\\\hline


\raisebox{0.5\totalheight}{
  \begin{tabular}{c}
    {\bf South-pointing Pier} \\ \\
    Vertical Bridge\\
    with Northern \\
    Bridge Point at $(a,g-1)$\\
  \end{tabular}
}
&
      \includegraphics[height=1.25in]{./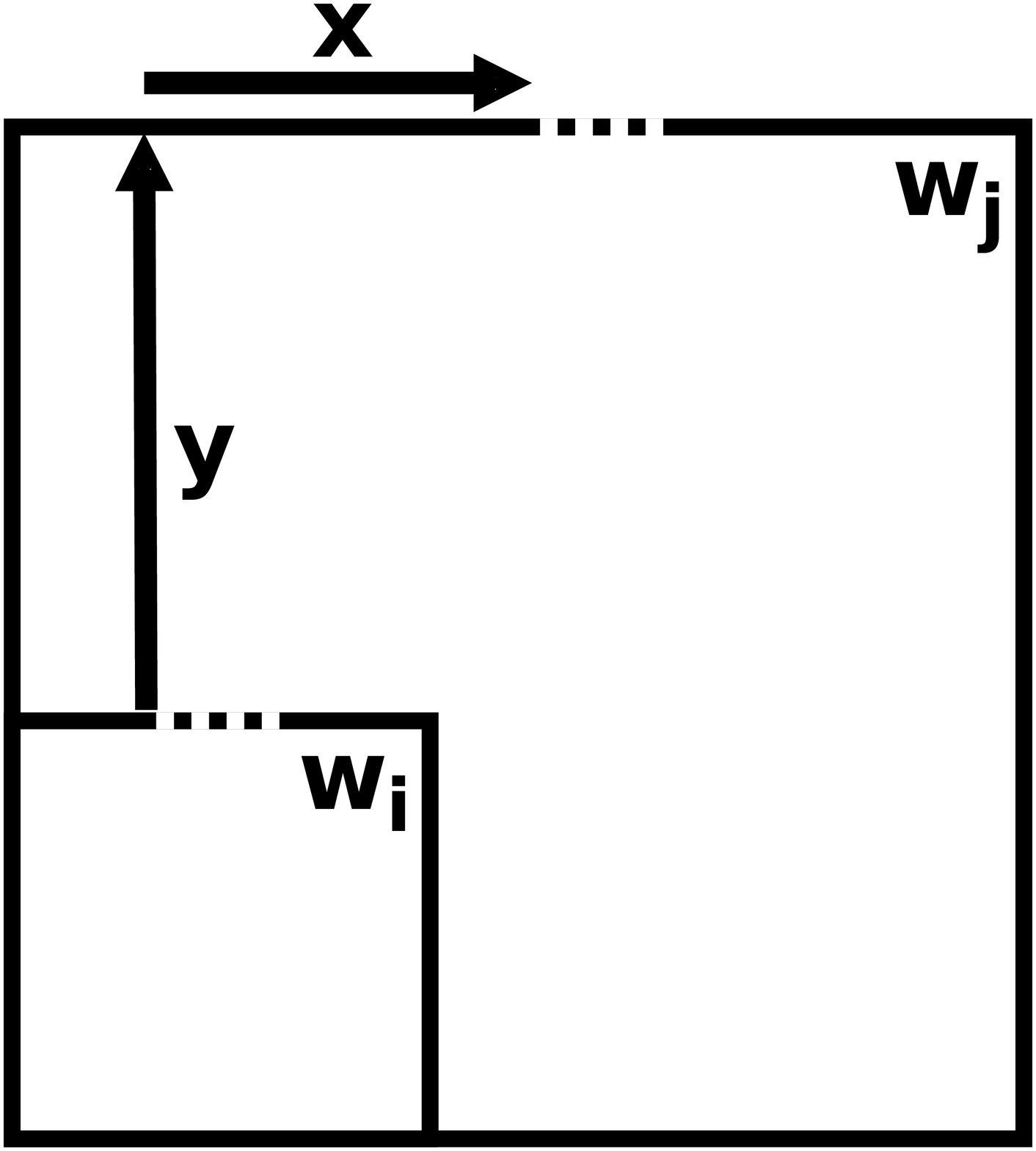}
&
\raisebox{1\totalheight}{
   \begin{tabular}{l}
             $x = ac\Sigma_{k=i-2}^{j-3}g^k$
        \\ \\
         $ y=c(g^{j-2}-g^{i-2})$
     \end{tabular}
}\\\hline


\raisebox{0.5\totalheight}{
  \begin{tabular}{c}
    {\bf West-pointing Pier} \\ \\
    Horizontal Bridge\\
    with Eastern \\
    Bridge Point at $(g-1,b)$\\
  \end{tabular}
}
&
      \includegraphics[height=1.2in]{./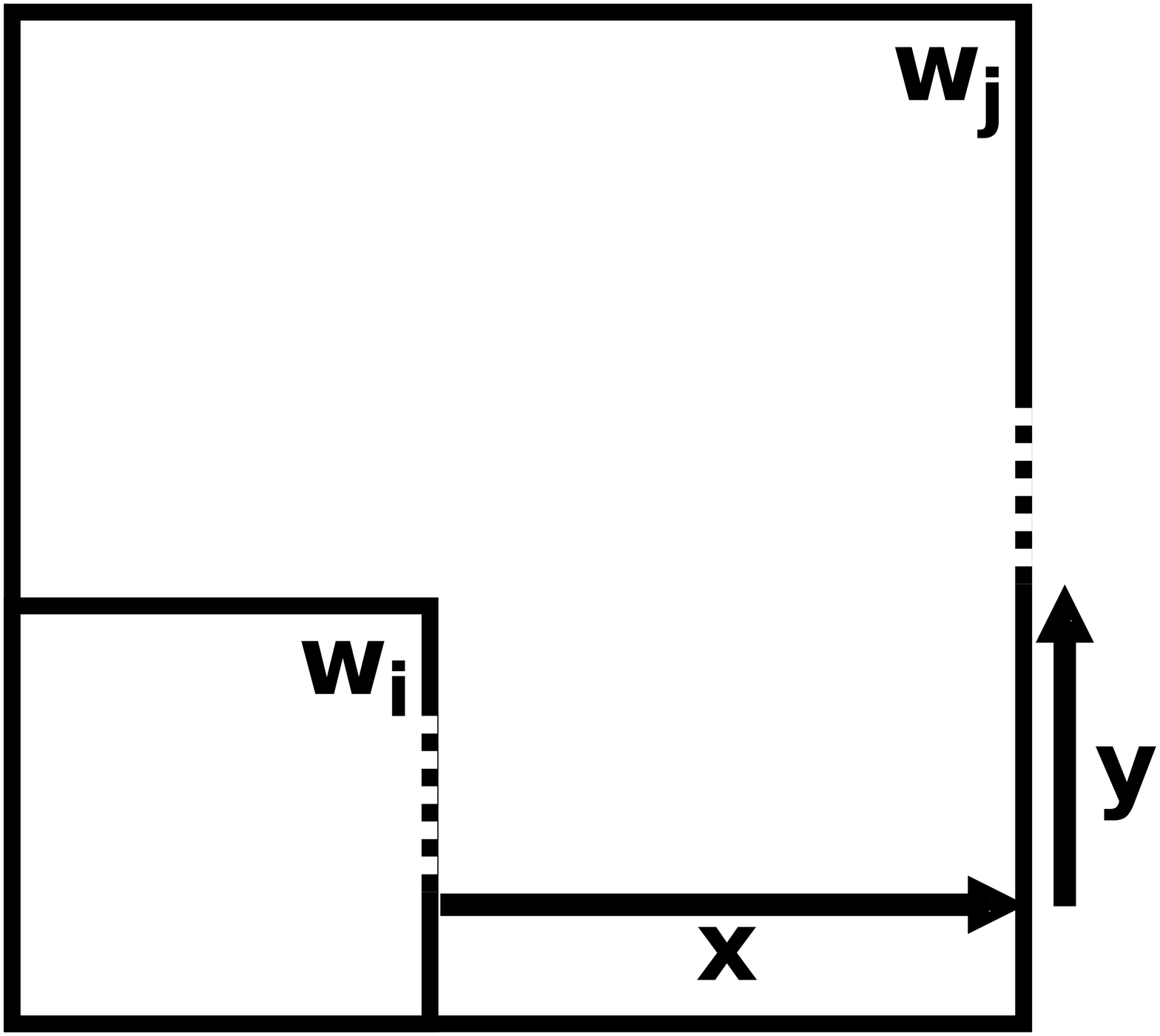}
&
\raisebox{1\totalheight}{
   \begin{tabular}{l}
             $x =c(g^{j-2}-g^{i-2})$
        \\ \\
         $ y= bc\Sigma_{k=i-2}^{j-3}g^k$
     \end{tabular}
}\\\hline
\end{tabular}
\caption{Computing the coordinates $(x,y)$ of the translation that
  aligns the bond-forming glues (shown as a dotted line) of the
  windows $w_i$ and $w_j$. Note that $(a,b)$ with $a\in\N_g$ and $b\in\N_g$
  are the coordinates of the point in $G$ within the (horizontal or
  vertical) bridge that determines the position of the bond-forming glues.}
\label{fig:translationcases}
\end{figure}

\begin{figure}[htp]
\centering
 \includegraphics[width=0.95\textwidth]{./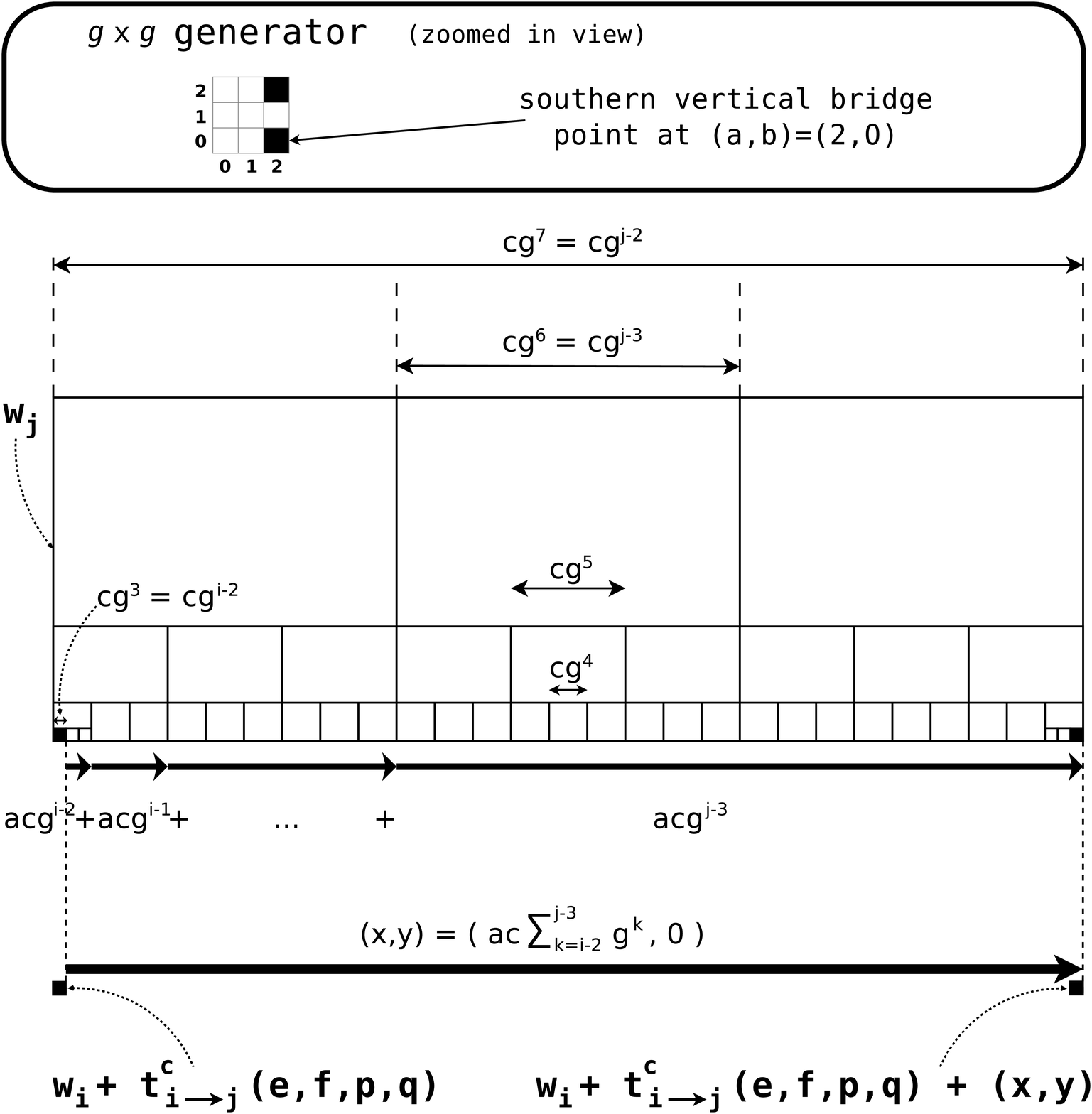}
\caption{$(x,y)$ translation needed to align $w_i$ and $w_j$ on their
  east side once their southwest corners already match. Example with a
  north-pointing pier (not shown) and $g=3$, $i=5$, $j=9$,
  and southern vertical bridge point at location $(a,b)=(2,0)$.}
\label{fig:translation}
\vspace{-10pt}
\end{figure}

According to Lemma~\ref{lem:piers}, we can always pick one pier
$(p,q)$ and a point $(e,f)$, both in $G$, such that, for $1<s \in \N$,
the window $W^c_s(e,f,p,q)$, which we will abbreviate $w_s$, encloses
a configuration that is connected to $\mathbf{P}^c$ via a single line
of glues of length $c$.\footnote{Without loss of generality, we will
  assume that this line of glues is positioned on the western side of
  the windows and is thus vertical (see the jagged lines in
  Figure~\ref{fig:case2example}, where $s=2$ and $s=3$ for the small
  and large windows, respectively, and $(p,q) = (3,2)$ and
  $(e,f)=(0,1)$), because the chosen pier in our example points
  east. A similar reasoning holds for piers pointing north, south or
  west.}  The maximum number of distinct combinations and orderings of
glue positionings along this line of glues is
finite.\footnote{\label{note:combinatorics}This number is bounded
  above by $T_{glue}^{2c}\cdot(2c)!$, where $T_{glue}$ is the total
  number of distinct glue types in $T$.}  By the generalized
pigeonhole principle, since
$\left|\left\{w_s~\left|~1<s\in\N\right.\right\}\right|$ is infinite,
there must be at least one bond-forming submovie such that an infinite
number of windows generate this submovie (up to translation). Let us
pick two such windows, say, $w_i$ and $w_j$ with $i<j$, such that
$\mathcal{B}(M_{\vec{\alpha},w_i})$ and
$\mathcal{B}(M_{\vec{\alpha},w_j})$ are equal (up to translation). We
must pick these windows carefully, since as stated in the proof of
Lemma~\ref{lem:wml}, the seed of $\alpha$ must be either in both
windows or in neither. This condition can always be satisfied. The
only case where the seed is in more than one window is when it is at
position $(0,0)$ and $e=f=p=q=0$, which implies that all windows
include the origin. So, in this case, any choice of $i$ and $j>i$ will
do. In all other cases, none of the windows overlap. So, if the seed
belongs to one of them, say $w_k$, then we can pick any $i$ greater
than $k$ (and $j>i$). Finally, if the seed does not belong to any
windows, then any choice of $i$ and $j>i$ will do.

We will now prove that $w_i$ and $w_j$ satisfy the two
conditions of Lemma~\ref{lem:wml}.

First, we compute $\vec{c}$ such that
$\mathcal{B}(M_{\vec{\alpha},w_i}) +\vec{c} =
\mathcal{B}(M_{\vec{\alpha},w_j})$. We know that $w_i
+\vec{t}^c_{i\rightarrow j}(e,f,p,q)$ and $w_j$ share their southwest
corner.  We need to perform one more translation $(x,y)$ to align the
bond-forming glues of $w_i$ and $w_j$. The values of $x$ and $y$
depend on the direction in which the chosen pier is pointing. The
formulas corresponding to all four directions are given in
Figure~\ref{fig:translationcases}. Furthermore, a justification for
the recurring summation in the formulas of
Figure~\ref{fig:translationcases} is provided in
Figure~\ref{fig:translation}. In that figure, the chosen case is a
north-pointing pier. However, our example in
Figure~\ref{fig:case2example} above uses an east-pointing pier. We now
complete the proof for this case.
To align the bond-forming glues of $w_i$ and $w_j$, we must translate
$w_i +\vec{t}^c_{i\rightarrow j}(e,f,p,q)$ by $(x,y) =
\left(0,bc\sum_{k=i-2}^{j-3}g^k\right)$, with $b\leq g-1$.  Since $x=0\leq m$ (as
defined in Lemma~\ref{lem:enclosure2}) and $y=bc\sum_{k=i-2}^{j-3}g^k
\leq (g-1)c\sum_{k=i-2}^{j-3}g^k =
c\left(\sum_{k=i-1}^{j-2}g^k-\sum_{k=i-2}^{j-3}g^k\right)
=c\left(g^{j-2}-g^{i-2}\right)=m$, we can apply
Lemma~\ref{lem:enclosure2} to infer that, with $\vec{c} =
\vec{t}^c_{i\rightarrow j}(e,f,p,q) + (x,y)$, $w_i+\vec{c}$ is
enclosed in $w_j$. Therefore, the second condition of
Lemma~\ref{lem:wml} holds.

Second, by construction, $\mathcal{B}(M_{\vec{\alpha},w_i}) +\vec{c} =
\mathcal{B}(M_{\vec{\alpha},w_j})$.  Therefore, the first condition of
Lemma~\ref{lem:wml} holds.

In conclusion, the two
conditions of Lemma~\ref{lem:wml} are satisfied, with $\alpha_I$ and
$\alpha_O'$ defined as the intersection of $\mathbf{P}^c$ with the inside
of $w_i$ and the outside of $w_j$, respectively. We can thus conclude
that the assembly $\alpha_O' \cup (\alpha_I +\vec{c})$ is producible
in $\mathcal{T}$. Note that this assembly is identical (up to
translation) to $\mathbf{P}^c$, except that the interior of the large
window $w_j$ is replaced by the interior of the small window
$w_i$. Since the configurations in these two windows cannot be
identical, we have proved that $\mathcal{T}$ does not strictly
self-assemble $\mathbf{P}^c$, which is a contradiction.

\end{proof}

\subsection{Corollaries of our main result}

In this section, we discuss both special cases and generalizations of
our main result.

\subsubsection{Specializations of our main result}

In~\cite{ScaledTreeFractals}, we proved that no scaled tree fractal strictly
self-assembles in the aTAM, where a \emph{tree fractal} is a discrete
self-similar fractal whose underlying graph is a tree. In this
section, we start by proving a new characterization of tree fractals
in terms of simple connectivity properties of their generator.

\begin{theorem}\label{thm:tree-fractal}
$\displaystyle\mathbf{T} = \bigcup_{i=1}^{\infty}{T_i}$ is a $g$-discrete self-similar tree fractal, for some $g > 1$, with generator $G$ if and only if \\
\hspace*{1cm}a. $G$ is a tree and\\
\hspace*{1cm}b. $nhb_G = nvb_G =1.$
\end{theorem}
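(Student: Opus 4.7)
The plan is to prove each direction of the biconditional separately, leveraging the supporting lemmas already established (the ``free point'' lemmas, Lemma~\ref{lem:connected-implies-v-and-h-bridges}, and Lemma~\ref{lem:inductive-step}).

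For the forward direction, assume $\mathbf{T}$ is a $g$-dssf tree fractal with generator $G$. Since $G = T_1 \subseteq \mathbf{T}$ and $\mathbf{T}$ is acyclic, $G$ must be acyclic. Also, if $nhb_G = 0$ or $nvb_G = 0$, then even $T_2$ fails to be connected across copies of $G$ aligned horizontally or vertically, contradicting the connectedness of $\mathbf{T}$; so $nhb_G \geq 1$ and $nvb_G \geq 1$. On the other hand, if $nhb_G \geq 2$, then two horizontally adjacent copies of $G$ in $T_2$ would be joined along two distinct h-bridges, producing a cycle; similarly $nvb_G \leq 1$. Combined, $nhb_G = nvb_G = 1$.

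The hard part of the forward direction is showing that $G$ itself is connected. The plan is to argue by contradiction: assume $G$ is disconnected, and pick a connected component $C \subset G$ that does not contain the origin (so $C$ can touch at most two sides of $G$, as Lemma~\ref{lem:connected-implies-v-and-h-bridges} forces the unique connected h-bridge and v-bridge into a single other component). I would then do a case analysis on which combination of sides $C$ touches: (i) the interior case where $C$ touches no side is immediate since $C$ is already disconnected from the origin within $G$; (ii) the single-side or two-adjacent-side cases (NE, NW, SE, SW, N, E, S, W) are the substantive ones, each handled by invoking the appropriate ``free point'' lemma (Lemma~\ref{lem:north-free-point}, Lemma~\ref{lem:north-east-free-point}, Lemma~\ref{lem:east-free-point}, and their symmetric counterparts) to locate a point in $G$ where a translated copy of $C$ in $T_2$ (or $T_3$) is surrounded on all free sides by empty positions in $\mathbf{T}$, forcing a component disconnected from the origin; (iii) the opposite-side cases (NS, EW) are handled by observing that any such $C$ would absorb all connected bridges, forcing some other component to fall into an earlier case or isolating the origin.

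For the reverse direction, assume $G$ is a tree with $nhb_G = nvb_G = 1$. Let $P_{\mathbf{T}}(s)$ denote the property ``$T_s$ is a tree and $nhb_{T_s} = nvb_{T_s} = 1$.'' The base case $P_{\mathbf{T}}(1)$ holds by hypothesis since $T_1 = G$, and the inductive step is exactly Lemma~\ref{lem:inductive-step}. Hence every $T_s$ is a tree with a unique h-bridge and v-bridge. To lift this from the stages to $\mathbf{T}$, I would observe that $\mathbf{T} = \bigcup_i T_i$ is a nested union of connected sets all containing the origin, hence connected; and any simple cycle in $\mathbf{T}$ would be finite, hence fully contained in some $T_s$, contradicting that $T_s$ is a tree. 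Therefore $\mathbf{T}$ is connected and acyclic, i.e., a tree fractal.

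The main obstacle will be organizing the case analysis in the forward direction cleanly, since the different positions a disconnected component can occupy relative to the generator's boundary each require invoking a different ``free point'' lemma and checking that the translated copy at the appropriate stage is indeed isolated from the origin in $\mathbf{T}$. The symmetries among the cardinal directions will let me reduce the bookkeeping, but the SW (interior) case and the opposite-side (NS, EW) cases require slightly different arguments than the generic corner cases and must be singled out.
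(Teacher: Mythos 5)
Your proposal follows essentially the same route as the paper's proof: acyclicity of $G$ from $G\subseteq\mathbf{T}$, existence of at least one h- and v-bridge from connectedness, connectivity of $G$ by contradiction via a case analysis on which sides of $G$ a stray component touches (using Lemma~\ref{lem:connected-implies-v-and-h-bridges} and the free-point lemmas), and the converse by induction on stages via Lemma~\ref{lem:inductive-step} followed by the nested-union and finite-cycle observations.

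The one step that does not hold as you state it is the early claim that $nhb_G\geq 2$ already forces a cycle because ``two horizontally adjacent copies of $G$ in $T_2$ would be joined along two distinct h-bridges.'' At that point of your argument you have not yet shown $G$ is connected, and two parallel crossing edges between adjacent copies do not by themselves create a cycle: to close a cycle you need, inside each copy, a path joining the endpoints of the two bridge rows that avoids the crossing edges, and if $G$ is disconnected such paths need not exist (in the tree $\mathbf{T}$, the unique path between the two left-hand endpoints may itself run through the bridge edges, so no contradiction arises). Moreover, the existence of two horizontally adjacent copies of $G$ in $T_2$ itself needs justification (it requires two horizontally adjacent points of $G$, which you only get from connectivity of $G$, or of $\mathbf{T}$ with a further argument). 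The fix is simply to reorder, as the paper does: first prove $G$ is connected (hence a tree), and only then conclude $nhb_G=nvb_G=1$, since two bridges together with within-copy connectivity would give a genuine cycle in $\mathbf{T}$. A smaller slip: a component avoiding the origin can still touch three or four sides of $G$, so your parenthetical ``so $C$ can touch at most two sides'' is not automatic; this is harmless only because, like the paper, you separately handle the opposite-side (NS/EW) cases by passing to another component or to the origin's component.
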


The proof of this theorem is in the appendix. Next, the following
observation follows from the fact that a tree with more than one
vertex must contain at least two leaf nodes.

\begin{observation}
\label{obs:piers}
If $G$ is the generator of any discrete self-similar fractal and $G$
is a tree, then it must contain at least two piers.
\end{observation}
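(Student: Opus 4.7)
The plan is to translate the statement into pure graph-theoretic language and then invoke the elementary fact that any tree with at least two vertices has at least two leaves. The key observation is that a pier of $G$, defined as a point that is $D$-free for exactly three of the four cardinal directions, is precisely a vertex of degree $1$ in the full grid graph $\fullgridgraph_G$.

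First, I would establish that $G$ has at least two vertices. By Definition~\ref{def:dssf}, the generator $G$ of a $g$-discrete self-similar fractal satisfies $g > 1$ and must contain at least one point in every row and every column of $\mathbb{N}_g^2$. Thus $|G| \geq g \geq 2$.

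Second, since $G$ is a tree in the sense of the paper, $\fullgridgraph_G$ is a tree on at least two vertices. A standard result from graph theory then guarantees that $\fullgridgraph_G$ has at least two leaves, i.e., at least two vertices of degree $1$. Finally, I would unwind the definitions: a vertex $(x,y) \in G$ has degree $1$ in $\fullgridgraph_G$ if and only if exactly one of its four grid-neighbors belongs to $G$, which is precisely the condition that $(x,y)$ is $D$-free for exactly three directions $D \in \mathcal{D}$. Hence each such leaf is a pier, and $G$ contains at least two piers.

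There is essentially no obstacle here; the observation is a one-line consequence of the leaf-counting lemma for trees once the definitions are matched up, which is likely why the authors state it as an observation without proof.
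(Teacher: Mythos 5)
Your proof is correct and matches the paper's reasoning: the paper justifies this observation with exactly the same leaf-counting argument (``a tree with more than one vertex must contain at least two leaf nodes''), and your identification of piers with degree-one vertices of $\fullgridgraph_G$, together with the remark that $|G|\geq g\geq 2$, simply makes that one-line justification fully explicit.
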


Finally, we can recast the main result
in~\cite{ScaledTreeFractals} as a special case of our
main result.

\begin{corollary}\label{cor:tree}[From \cite{ScaledTreeFractals}]
Let $\mathbf{T}$ be any tree fractal. If $c \in
\Z^+$, then $\mathbf{T}^c$ does not strictly self-assemble in
the aTAM.
\end{corollary}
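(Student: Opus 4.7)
The plan is to show that every tree fractal is a pier fractal, and then Corollary \ref{cor:tree} follows immediately from Theorem \ref{thm:main}. First I would invoke the new characterization of tree fractals given by Theorem \ref{thm:tree-fractal}: if $\mathbf{T}$ is a tree fractal with generator $G$, then $G$ is itself a tree and satisfies $nhb_G = nvb_G = 1$. This handles two of the three conditions in Definition \ref{def:pier-fractal} directly: condition (b) is exactly the bridge count, and condition (a) holds because a tree is by definition connected, so the full grid graph of $G$ is connected.

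Next, to verify condition (c) of Definition \ref{def:pier-fractal}, I would appeal to Observation \ref{obs:piers}, which guarantees that whenever the generator of a discrete self-similar fractal is a tree, it contains at least two piers (in fact a leaf of the tree, when not a degenerate one-point graph, yields a point that is $D$-free in exactly three of the four directions, hence a pier). In particular, $G$ contains at least one pier, so condition (c) is met.

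Having established all three conditions of Definition \ref{def:pier-fractal} for the generator of an arbitrary tree fractal $\mathbf{T}$, I conclude that $\mathbf{T}$ is a pier fractal. Then for any scaling factor $c \in \Z^+$, Theorem \ref{thm:main} applies directly to $\mathbf{T}$, and hence $\mathbf{T}^c$ does not strictly self-assemble in the aTAM. There is no real obstacle in the argument; the substantive work has been pushed into Theorem \ref{thm:tree-fractal} (whose proof is deferred to the appendix) and into the main theorem itself. The only thing to be careful about is making sure the Definition \ref{def:pier-fractal} conditions are stated and checked in the correct order, and that the use of Observation \ref{obs:piers} is justified by the fact that a generator of a nontrivial dssf necessarily has more than one vertex (since $g > 1$ and $G$ contains at least one point per row and column).
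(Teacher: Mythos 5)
Your proposal is correct and follows essentially the same route as the paper's own proof: invoke Theorem~\ref{thm:tree-fractal} to get connectivity of the generator and $nhb_G = nvb_G = 1$, invoke Observation~\ref{obs:piers} to get a pier, conclude $\mathbf{T}$ is a pier fractal, and apply Theorem~\ref{thm:main}. No gaps.
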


\begin{proof}
Let $\mathbf{T}$ be any tree fractal with generator $G$.  According to
Theorem~\ref{thm:tree-fractal}, the full grid graph of $G$
is a tree and is thus connected, and $nhb_G = nvb_G
=1$. Furthermore, according to Observation~\ref{obs:piers}, $G$ must
contain at least one pier. Therefore, $\mathbf{T}$ is a
pier fractal and $\mathbf{T}^c$ does not strictly
self-assemble in the aTAM.  
\end{proof}

We now turn our attention to a second specialization of our main result by
considering ``pinch-point fractals,'' which are defined in
\cite{jSADSSF} as follows.

\begin{definition} \label{def:pinch-point}
Let $\mathbf{X} \subset \N^2$ be a $g$-discrete self-similar fractal
with generator $G$. We say that $\mathbf{X}$ is a \emph{pinch-point
  discrete self-similar fractal} if $G$ satisfies the following four
conditions:
\begin{enumerate}
\item $\{(0, 0), (0, g - 1), (g - 1, 0)\} \subseteq G$.
\item $G \cap (\{1,\ldots,g-1\} \times \{g -1\}) = \emptyset$.
\item $G \cap (\{g -1\} \times \{1,\ldots,g-1\} ) = \emptyset$.
\item The full grid graph of  $G$ is connected.
\end{enumerate}
\end{definition}

Theorem~3.12 in \cite{jSADSSF} establishes that no pinch-point fractal
strictly self-assembles in the aTAM. We can now generalize this result
as follows.

\begin{corollary}\label{cor:pinch_point}
Let $\mathbf{X}$ be any pinch-point discrete self-similar fractal. If
$c \in \Z^+$, then $\mathbf{X}^c$ does not strictly self-assemble in
the aTAM.
\end{corollary}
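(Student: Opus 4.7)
The plan is to show that every pinch-point discrete self-similar fractal is a pier fractal in the sense of Definition~\ref{def:pier-fractal}; once this is done, Theorem~\ref{thm:main} immediately yields that $\mathbf{X}^c$ does not strictly self-assemble in the aTAM. So the entire corollary reduces to verifying the three defining conditions (a), (b), (c) of a pier fractal for the generator $G$ of an arbitrary pinch-point fractal.

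First I would dispose of condition (a) trivially: the full grid graph of $G$ is connected by condition 4 of Definition~\ref{def:pinch-point}. Next, for condition (b), I would use the pinch-point constraints to pin down the unique bridges. Since $G \subseteq \mathbb{N}_g^2$, we have $l_G = 0$, $r_G \leq g-1$, $b_G = 0$, $t_G \leq g-1$, and condition 1 forces $r_G = g-1$ and $t_G = g-1$. An h-bridge has the form $\{(0,y),(g-1,y)\}$, and condition 3 says the only point of $G$ in column $g-1$ is $(g-1,0)$, so the only candidate is $y=0$, which is in $G$ by condition 1. Symmetrically, condition 2 forces the unique v-bridge to be $\{(0,0),(0,g-1)\}$. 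Thus $nhb_G = nvb_G = 1$.

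For condition (c), I would exhibit an explicit pier, say $\vec{p} = (g-1,0)$. By conditions 2 and 3 together with $r_G = g-1$ and $b_G = 0$, the points $N(\vec{p})=(g-1,1)$, $E(\vec{p})=(g,0)$, and $S(\vec{p})=(g-1,-1)$ are all outside $G$, making $\vec{p}$ free in three directions. Its western neighbor $(g-2,0)$ must belong to $G$, since otherwise $\vec{p}$ would be an isolated vertex in the full grid graph of $G$, contradicting condition 4. (Note that even in the degenerate case $g=2$, this point is simply the origin, which is in $G$ by condition 1.) Hence $\vec{p}$ is a pier, so condition (c) holds. With all three conditions verified, $\mathbf{X}$ is a pier fractal, and Theorem~\ref{thm:main} finishes the proof.

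I do not anticipate any serious obstacle: the argument is purely a combinatorial check against the four clauses of Definition~\ref{def:pinch-point}. The only mildly delicate point is making sure the existence of a pier follows from connectivity alone (rather than needing some additional density assumption on $G$), which is why I would spell out the isolation argument for $(g-1,0)$ explicitly.
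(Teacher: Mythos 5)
Your proposal is correct and follows essentially the same route as the paper: verify the three conditions of Definition~\ref{def:pier-fractal} for the generator of a pinch-point fractal and then invoke Theorem~\ref{thm:main}. The only (immaterial) difference is that you exhibit the pier $(g-1,0)$ while the paper uses the symmetric point $(0,g-1)$, and you spell out the connectivity argument for why the fourth neighbor lies in $G$, which the paper leaves implicit.
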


\begin{proof}
Let $\mathbf{X}$ be any pinch-point discrete self-similar fractal with
generator $G\subset \N_g^2$, for some $g>1$. First, by definition of a
pinch-point fractal, the full grid graph of $G$ is connected. Second,
since the point $(g-1,0)$ is the only point of $G$ that belongs to
$\{g-1\}\times \N$ and the point $(0,0)$ also belongs to $G$, $nhb_G =
1$. Similarly, since the point $(0,g-1)$ is the only point of $G$ that
belongs to $\N \times \{g-1\}$ and the point $(0,0)$ also belongs to
$G$, $nvb_G = 1$. Third, since the point $(0,g-1)$ is a pier in $G$,
$G$ contains at least one pier. Therefore, $\mathbf{X}$ is a
pier fractal. In conclusion, if $c \in \Z^+$, then
$\mathbf{X}^c$ does not strictly self-assemble in the aTAM.
  
\end{proof}

\subsubsection{Generalizations of our main result}\label{sec:generalizations}

We now discuss how to extend our main result to different classes of
fractals. More specifically, we will relax the last two conditions in
the definition of pier  fractals and still be able to use
the same reasoning as we did in the proof of our main result.

\begin{figure}[htp]
\centering
\includegraphics[width=0.9\textwidth]{./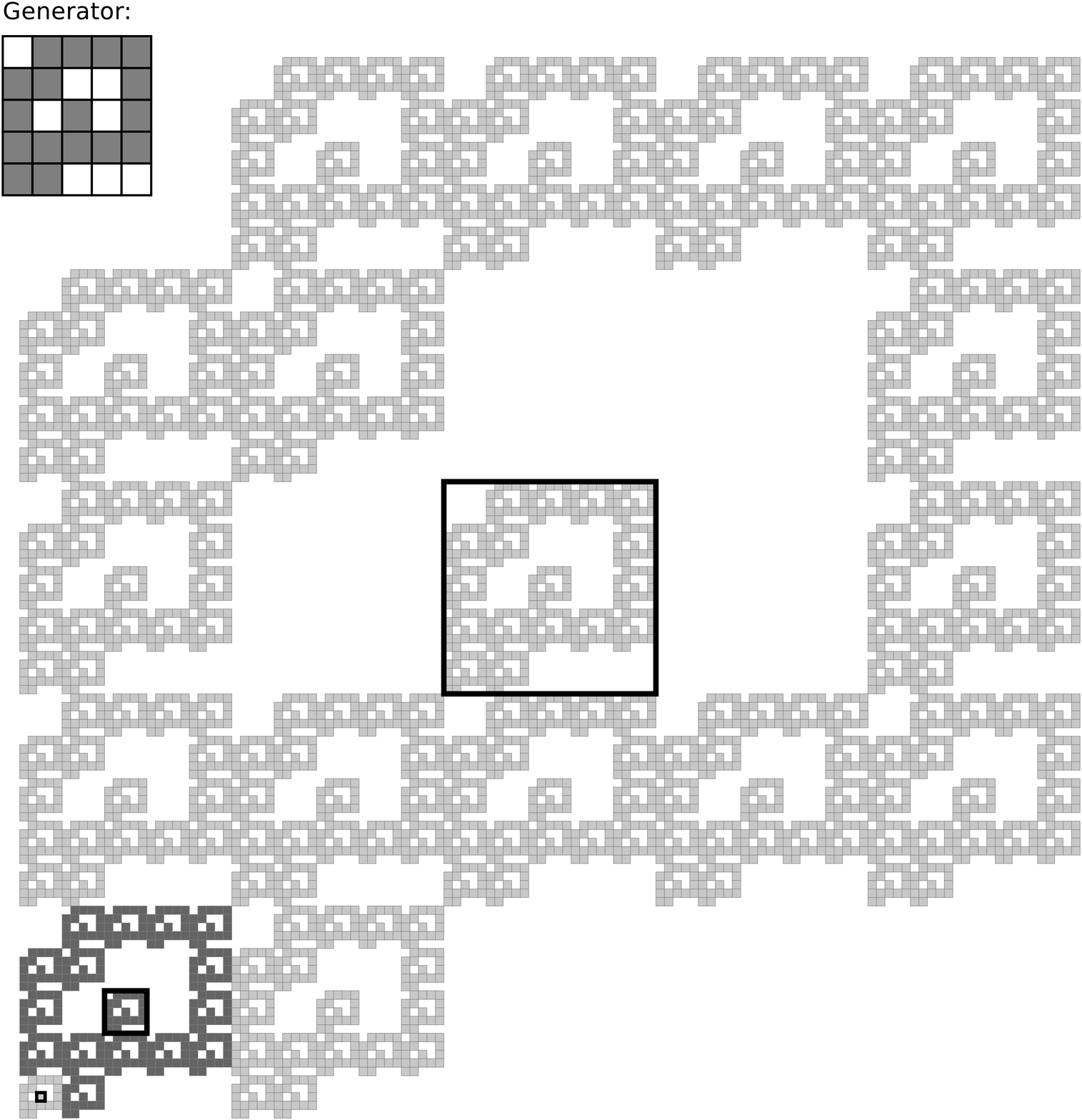}
\caption{First three stages ($s=1,2,3$) of an unscaled ($c=1$) 5-discrete
  self-similar fractal with a north-pointing pier,
  $nhb_G=3$, $nvb_G=1$, $(p,q)=(2,2)$, and $(e,f)=(0,0)$.}
\label{fig:corollary3}
\end{figure}

First, our proof of Theorem~\ref{thm:main} uses the fact that there
exist an infinite collection of square windows, each of which encloses
a sub-configuration of the fractal that is attached to the rest of the
fractal at a single point (or single line of points). In other words,
each window in the collection has three free sides. If, for example,
the east and west sides of each window are free, then the number of
horizontal bridges in the generator $G$ does not matter. Even if
$nhb_G>1$, our construction for the windows still
works. Figure~\ref{fig:corollary3} is one example of such a fractal to
which our main result generalizes, with the first three windows shown
as thick, black squares. In this case, our proof technique still
works, even though the generator contains three horizontal bridges.
Here is a precise statement of the corollary.

\begin{corollary}\label{cor:multiple_bridges}
Let $\mathbf{F}$ be a discrete self-similar fractal with generator $G$
such that the full grid graph of $G$ is connected, $nvb_G = 1$, and
$G$ contains at least one north-pointing pier or one south-pointing
pier. If $c \in \Z^+$, then $\mathbf{F}^c$ does not strictly
self-assemble in the aTAM.
\end{corollary}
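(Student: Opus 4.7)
The plan is to reuse the architecture of the proof of Theorem~\ref{thm:main}, replacing only the appeal to Lemma~\ref{lem:piers} with a direct construction of $(e,f)\in G$ that copes with the case $nhb_G>1$. I assume for contradiction that $\mathbf{F}^c$ strictly self-assembles in some TAS $\mathcal{T}$, and fix the given pier $(p,q)$, WLOG north-pointing (so $q\geq 1$, $(p,q-1)\in G$, and $(p,q+1),(p-1,q),(p+1,q)\notin G$). The goal is to pick $(e,f)\in G$ so that, for every $s>1$, the window $w_s=W_s^c(e,f,p,q)$ has exactly three free sides (north, east, west) and is attached to the rest of $\mathbf{F}^c$ along a single horizontal line of exactly $c$ glues on its south side. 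Once such $(e,f)$ is in hand, the rest of the proof is a direct transcription of Theorem~\ref{thm:main}: pigeonholing on the finitely many bond-forming submovies of width $c$ gives two windows $w_i$ and $w_j$ with $i<j$ whose submovies match under $\vec{c}=\vec{t}_{i\to j}^c(e,f,p,q)+(x,0)$ with $x=ac\sum_{k=i-2}^{j-3}g^k$ (the north-pointing entry of Figure~\ref{fig:translationcases}); Lemma~\ref{lem:enclosure2} certifies that $w_i+\vec{c}$ is enclosed in $w_j$; and Lemma~\ref{lem:wml} then yields a producible assembly of $\mathcal{T}$ whose domain strictly differs from $\mathbf{F}^c$ (since the insides of $w_i$ and $w_j$ have different sizes), contradicting strict self-assembly.

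The crux is therefore the choice of $(e,f)$, which rests on a bond-counting analysis that uses $nvb_G=1$ but not $nhb_G=1$. I would first show by induction on the stage $t$ that $nvb_G=1$ forces every stage $t$ to have a unique v-bridge at column $a\sum_{k=0}^{t-1}g^k$ (where $a$ is the v-bridge column of $G$), while the h-bridge rows of stage $t$ all take the form $g^{t-1}b_1+\cdots+b_t$ with $b_1,\ldots,b_t$ h-bridge rows of $G$. Combining these recursive facts with the pier's local freeness yields the following classification of bonds crossing the boundary of $w_s$: the south side always contributes precisely $c$ bonds (from the unique v-bridge of the nested stage $s-2$ structure); the east (resp.\ west) side contributes bonds only if $p=g-1$ (resp.\ $p=0$), $(e+1,f)\in G$ (resp.\ $(e-1,f)\in G$), \emph{and} $q$ is an h-bridge row of $G$; and the north side contributes bonds only if $q=g-1$, $(e,f+1)\in G$, \emph{and} $p=a$.

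Given this classification, the choice of $(e,f)$ reduces to a brief case analysis on where the pier sits in $G$. When $(p,q)$ is strictly interior ($0<p<g-1$ and $q<g-1$), any $(e,f)\in G$ works. When the pier lies on the top row or a side column of $G$, the bond-killing conditions $(e\pm 1,f)\notin G$ or $(e,f+1)\notin G$ can be secured by picking $(e,f)$ from the corresponding extreme row/column of $G$; in particular, choosing $(e,f)$ to be the topmost point of the rightmost column of $G$ simultaneously kills the east and north sides whenever $p=g-1$ and/or $q=g-1$. The requirement in Definition~\ref{def:dssf} that $G$ contain a point in every row and column guarantees that these extreme choices are non-empty. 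The main technical obstacle is therefore the inductive bond-counting of the second paragraph; the edge-case bookkeeping above is routine, and everything else is inherited unchanged from Theorem~\ref{thm:main}.
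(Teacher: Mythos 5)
Your overall architecture (reuse Theorem~\ref{thm:main}, replacing Lemma~\ref{lem:piers} by a direct choice of $(e,f)$ tailored to a north-pointing pier, using only $nvb_G=1$) is the intended one, and your inductive facts about the bridges of the stages are correct. The gap is in your bond classification, and it propagates into your choice of $(e,f)$. Your classification only accounts for adjacencies between the window and \emph{sibling} sub-squares inside the same copy of $X_s$ (hence the conditions $(e\pm1,f)\in G$, $(e,f+1)\in G$). But the assembly has domain all of $\mathbf{F}^c$, and when the window lies on the north or east boundary of $X_s$ (i.e., $f=q=g-1$, or $e=p=g-1$) it can also abut later-stage structure: the copy of $X_s$ immediately to the north (present iff $(0,1)\in G$) meets the origin copy along the unique v-bridge column of $X_s$, namely column $a\sum_{k=0}^{s-1}g^k$, which lies inside the window exactly when $e=p=a$; similarly the copy of $X_s$ to the east (present iff $(1,0)\in G$) meets it along h-bridge rows of $X_s$, which lie inside the window when $f$ and $q$ are both h-bridge rows of $G$ --- and with $nhb_G>1$ this is not exotic. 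Your recipe of taking $(e,f)$ on the extreme row/column (e.g., the topmost point of the rightmost column when $p=g-1$ and/or $q=g-1$) is precisely what pushes the window onto the boundary of $X_s$, where these missed adjacencies occur.

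A concrete failure: take $g=3$ and $G=\{(0,0),(0,1),(1,0),(2,0),(2,1),(2,2)\}$. Then the full grid graph of $G$ is connected, $nvb_G=1$ (column $a=2$), $nhb_G=2$, and $(2,2)$ is a north-pointing pier, so the corollary applies. Your recipe picks $(e,f)=(2,2)=(p,q)$, and every window $W^c_s(2,2,2,2)$ sits in the northeast corner of $X_s$; since $a=g-1$ and $(0,1)\in G$, it is attached to $\mathbf{F}^c$ both on its south side (v-bridge column of $X_{s-2}$) \emph{and} on its north side (v-bridge column of $X_s$), two glue lines whose relative offset grows with $s$. Consequently the bond-forming submovies of $w_i$ and $w_j$ need not agree under any single translation, the pigeonhole step and Lemma~\ref{lem:wml} no longer apply as you describe, and the ``exactly three free sides'' claim fails. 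The corollary is still true for this example --- e.g.\ $(e,f)=(1,0)$ gives windows with a single southern glue line --- which is exactly the point of the paper's construction (compare Lemmas~\ref{lem:piers2} and~\ref{lem:piers3}, which insist on choices such as $f<g-1$ and use freeness \emph{inside} $G$ rather than placing the window on the stage boundary). So your argument needs the classification extended to cross-stage adjacencies and a correspondingly more careful case analysis for $(e,f)$; as written, the central lemma is false and the chosen windows can fail the hypotheses of the Closed Window Movie Lemma.
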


Symmetrically, a similar result holds for fractals whose generator $G$
contains either at least one west-pointing pier or at least one
east-pointing pier, and such that $nhb_G=1$.

Second, having relaxed the second condition (part $b$) of the
definition of pier fractals, we can now relax the third
condition (part $c$) as well. To apply our Closed Window Movie Lemma,
a pier is not strictly needed. Instead, the generator only need
contain a \emph{pier-like sub-configuration}, that is, a sub-configuration of
one or more tiles that is attached to the rest of the fractal at a
single point.  Figure~\ref{fig:pier_like}
gives one example of such a fractal with the first two windows shown
as thick, solid, black squares. In this case, our proof technique still works, even though the
generator contains five horizontal bridges and no pier.
Here is a precise statement of the corollary.

\begin{figure}[htp]
\centering
 \includegraphics[width=0.8\textwidth]{./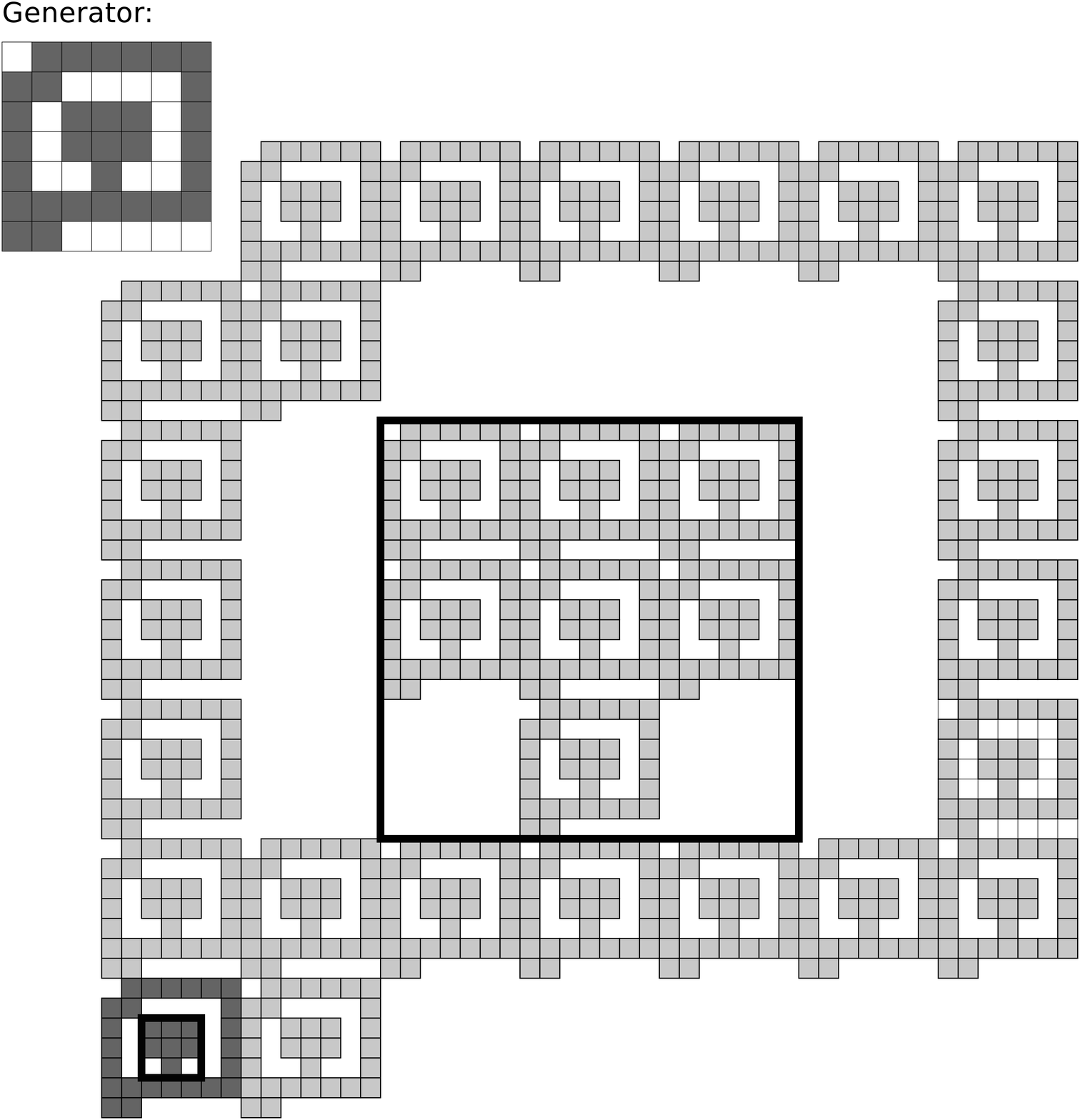}
\caption{First two stages ($s=1,2$) of an unscaled ($c=1$) 7-discrete
  self-similar fractal with a north-pointing pier-like sub-configuration,
  $nhb_G=5$ and $nvb_G=1$.}
\label{fig:pier_like}
\end{figure}

\begin{corollary}\label{cor:no_pier}
Let $\mathbf{F}$ be a discrete self-similar fractal with generator $G$
such that the full grid graph of $G$ is connected, $nvb_G = 1$, and
$G$ contains at least one north-pointing pier-like sub-configuration
or at least one south-pointing pier-like sub-configuration. If $c
\in \Z^+$, then $\mathbf{F}^c$ does not strictly self-assemble in the
aTAM.
\end{corollary}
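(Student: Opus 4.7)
The plan is to mimic the proof of Theorem~\ref{thm:main} (and its generalization Corollary~\ref{cor:multiple_bridges}), with the single modification that the enclosing windows are chosen to surround an entire pier-like sub-configuration rather than a single pier point. Concretely, assume for contradiction that $\mathbf{F}^c$ strictly self-assembles in some TAS $\mathcal{T}=(T,\sigma,\tau)$ via an assembly sequence $\vec{\alpha}$ producing $\alpha$ with $\dom\alpha = \mathbf{F}^c$. Let $P\subseteq G$ denote a north-pointing pier-like sub-configuration (the south-pointing case is symmetric), attached to $G\setminus P$ by a single edge; let $(p,q)\in P$ be the unique point whose southern neighbor lies in $G\setminus P$.

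The first step is to adapt the window construction of Notation~\ref{smallwindows} and Lemma~\ref{lem:piers} to this setting. Because $P$ may span more than one cell of $G$, I would enlarge the window shape so that, in each stage $X_s$, it contains all $cg^{s-2}\times cg^{s-2}$ sub-squares that together cover the copy of $P$ situated at the relative position of some chosen point $(e,f)\in G$ (choosing $(e,f)$ below $P$ so that its north side of the containing window lies outside the fractal). Since $P$ is attached to the rest of $G$ at a single lattice point, after $c$-scaling the corresponding windows $w_s$ (for $1<s\in\N$) each enclose a connected sub-configuration of $\mathbf{F}^c$ joined to the exterior by a single horizontal line of glues of length exactly $c$ along the south side. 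The hypothesis $nvb_G = 1$ is used precisely as in Corollary~\ref{cor:multiple_bridges}: it guarantees that the east and west sides of each $w_s$ are free (no ``extra'' vertical bridges slice through them), while the north side is automatically free because $P$ points north and $w_s$ extends above the top of $P$'s copy up to the top of the enclosing square.

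The second step is standard pigeonholing: the set of possible bond-forming submovies along the bottom horizontal cut of length $c$ is finite (bounded by $T_{\text{glue}}^{2c}\cdot(2c)!$ as in the footnote of Theorem~\ref{thm:main}), whereas we have infinitely many $w_s$. So there exist $i<j$ with $\mathcal{B}(M_{\vec{\alpha},w_i})$ and $\mathcal{B}(M_{\vec{\alpha},w_j})$ equal up to some translation $\vec c$. As in the main theorem, $i$ and $j$ can be chosen so the seed lies either outside every $w_s$ or in the common outer region of both $w_i$ and $w_j$. An analogue of Notation~\ref{translations} and Lemma~\ref{lem:enclosure2} then shows that $w_i+\vec c$ is enclosed in $w_j$: the additional $(x,y)$-shift needed to align the single bond line on the south side is bounded by $c(g^{j-2}-g^{i-2})$, exactly as in the proof of Theorem~\ref{thm:main}.

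The final step is to invoke the Closed Window Movie Lemma (Lemma~\ref{lem:wml}) with $\alpha_I$ and $\alpha_O'$ the portions of $\alpha$ inside $w_i$ and outside $w_j$ respectively; the conclusion is that $\alpha_O'\cup(\alpha_I+\vec c)$ is producible in $\mathcal{T}$. But this assembly replaces the tiles inside $w_j$ with a scaled copy of what was inside $w_i$, and since $w_i$ and $w_j$ correspond to copies of $P$ at different stages, their interiors differ as subsets of $\Z^2$; hence the new terminal assembly has $\dom\neq\mathbf{F}^c$, contradicting strict self-assembly. The main obstacle is the bookkeeping in the first step, namely verifying that the enlarged windows (surrounding multi-cell pier-like sub-configurations) still have three free sides in every stage and still cut $\mathbf{F}^c$ along a single connected line of length $c$; once this geometric lemma is in hand, the pigeonhole and Closed WML machinery of Theorem~\ref{thm:main} transports verbatim.
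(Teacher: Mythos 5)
Your proposal is correct and follows essentially the same route as the paper, which proves this corollary only by the informal observation that the windows of Theorem~\ref{thm:main} can be enlarged to enclose an entire pier-like sub-configuration (attached at a single point, hence cut by a single length-$c$ line of glues), after which the pigeonhole argument and the Closed Window Movie Lemma apply unchanged. Your bookkeeping of the free sides, the seed placement, and the enclosure bound matches the paper's intended adaptation, and is if anything spelled out in more detail than the paper's own sketch.
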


Symmetrically, a similar result holds for fractals whose generator $G$
contains either at least one west-pointing pier-like
sub-configuration or at least one east-pointing pier-like
sub-configuration, and such that $nhb_G=1$.

Finally, the Closed Window Movie Lemma may be applicable even when the
generator does not contain any pier-like sub-configuration. The key
requirement in the proof of our main result is to be able to find at
least two windows that share a common bond-forming window movie but
whose insides contain different sub-configurations. This requirement
can be met even when the sub-configuration contained in each window is
attached to the rest of the fractal at more than one point. Figure~\ref{fig:same-column} illustrates such a
situation. For a general characterization, we need some
definitions.

If $G$ is a $g \times g$ generator, then a \emph{column} is any set $G
\cap (\{x\} \times \N_g$), where $x \in \N_g$ is the index of the
column. Therefore, columns are indexed from left to right starting at
0. Two columns are \emph{equivalent} if they contain the same number
of points and, for each point in one column, there is a point in
the other column with the same $y$ coordinate.  A \emph{vertical cut}
is any set of edges connecting two adjacent columns of $G$. Two
vertical cuts are \emph{equivalent} if they contain the same number of
edges and, for each edge in one cut, there is an edge in the other
cut with the same $y$ coordinate. Figure~\ref{fig:same-column} depicts
a $5\times 5$ generator $G$ in which columns 2 and 3 are
equivalent. Furthermore, in this example, cuts 2 and 3 are also
equivalent.\footnote{The index of a vertical cut is given by the index
  of the leftmost of the two columns that its edges connect.} Note
that the fact that two columns $i$ and $j$ are equivalent does not
imply that the vertical cuts $i$ and $j$ are also equivalent. That
both facts hold is just a coincidence in this example. If, for
example, the point $(4,1)$ were removed from the generator in
Figure~\ref{fig:same-column}, then columns 2 and 3 would still be
equivalent, but vertical cuts 2 and 3 would no longer be
equivalent.\footnote{We included point $(4,1)$ in the generator to
  exclude piers from the generator in this example.} In general, there is no
correlation between the indices of equivalent columns and the indices
of equivalent vertical cuts. However, the co-existence of equivalent
columns and equivalent vertical cuts in the same generator may render
the Window Movie Lemma applicable.

\begin{figure}[htp]
\centering
 \includegraphics[width=\textwidth]{./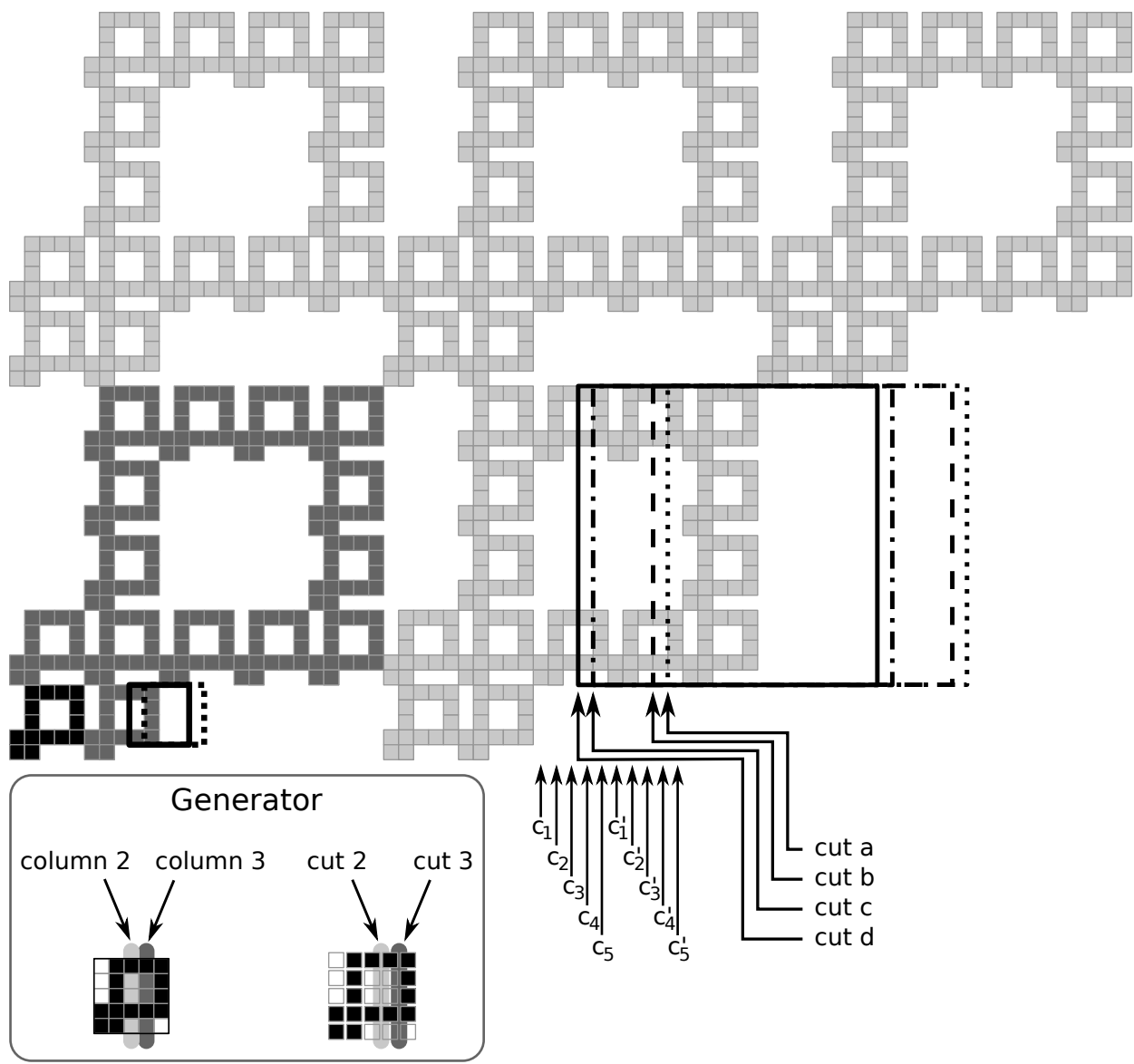}
\caption{First two stages (and part of the third stage) of an
unscaled ($c=1$) 5-discrete self-similar fractal with two equivalent
columns and two equivalent vertical cuts.}
\label{fig:same-column}
\end{figure}

In the example of Figure~\ref{fig:same-column}, vertical cut 2 is to
the east of the vertical bridge (which is a subset of column
1). Therefore, if we can find an east-free point in $G$, e.g., the
point $(1,0)$ in our running example, we will be able to position a
closed window that only cuts the fractal on one side (here, its
western side), e.g., the smallest of the two solid windows in
Figure~\ref{fig:same-column}. Similarly, we can position another
closed window of the same size that cuts the generator through
vertical cut 3, e.g., the dotted window that overlaps the small solid
window. By construction, the window movies corresponding to these two
windows have the same length and contain exactly the same positions
(up to translation). Of course, these window movies may not be equal
up to translation because the glues in their respective positions may
not match. But this is where we can take advantage of the existence of
two equivalent columns. By self-similarity, these two columns will, in
the next stage of the fractal, become two 5-wide sets of columns of
height 20 that are pairwise equivalent, that is, columns $c_1$ and
$c'_1$ are equivalent, columns $c_2$ and $c'_2$ are equivalent,
\ldots, and columns $c_5$ and $c'_5$ are equivalent. More importantly,
the 20-high cuts labeled a, b, c, and d in
Figure~\ref{fig:same-column} are all pairwise equivalent. Therefore,
at this stage of the fractal, we can build four larger square windows,
as shown in Figure~\ref{fig:same-column}. Furthermore, at each
successive stage of the fractal, we will be able to build
twicef\footnote{Note that, in this example, cut 1 is also equivalent
  to cuts 2 and 3. So we can actually build three windows for each
  one of the equivalent columns in the generator. Therefore, we could
  could have drawn 3, 6, 12, etc. windows for stages 2, 3, 4, etc.,
  respectively.  However, we chose to use only two of the three
  equivalent cuts in our discussion in order to keep the figure as
  legible as possible.}  as many square windows that all generate
window movies of the same length and with positions that are equal up
to translation. Since the number of window movies grows without bound
as the stage number increases, but the number of distinct combinations
and orderings of glue positionings is finite (following a reasoning
similar to the one in Footnote~\ref{note:combinatorics}), there is
always a stage (in fact, an infinite number of them) that contains two
bond-forming window movies that are identical up to translation. The
sub-configurations inside the two corresponding windows cannot be
equivalent because of the way the windows overlap. Additionally, since
the two windows have exactly the same shape and size, the translation
of the eastmost one is enclosed in (in fact, equal to) the other
one. Therefore, we can apply the Closed Window Movie Lemma and
conclude the proof by contradiction.  Here is a precise statement of
the corollary that covers the class of similar situations.

\begin{corollary}\label{cor:same_column}
Let $\mathbf{F}$ be a discrete self-similar fractal with generator $G$
such that the full grid graph of $G$ is connected, $G$
contains two equivalent columns, and $G$ contains two equivalent
vertical cuts that are positioned on the same side of all vertical
bridges. If $c \in \Z^+$, then $\mathbf{F}^c$ does not strictly
self-assemble in the aTAM.
\end{corollary}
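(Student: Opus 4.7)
The plan is to mimic the proof of Theorem~\ref{thm:main}: I will assume for contradiction that $\mathbf{F}^c$ strictly self-assembles in some TAS $\mathcal{T}=(T,\sigma,\tau)$ with assembly sequence $\vec{\alpha}$ producing $\alpha$, then use the two structural hypotheses on $G$ to build an infinite family of closed windows whose bond-forming submovies all have positions that agree up to translation, apply the pigeonhole principle to extract two windows satisfying the hypotheses of the Closed Window Movie Lemma, and derive a contradiction with strict self-assembly.

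First, I will use the equivalent-vertical-cuts hypothesis to locate a single point $\vec{y}\in G$ playing the role that the carefully chosen pier played in Lemma~\ref{lem:piers}. Let $\kappa_1 < \kappa_2$ denote the indices of the two equivalent vertical cuts, which by assumption lie on the same side, say east, of every v-bridge of $G$. A short case analysis, in the style of Lemmas~\ref{lem:piers1}--\ref{lem:piers3}, shows that $G$ must contain an east-free point $\vec{y}$ positioned east of every v-bridge but strictly west of column $\kappa_1+1$. Using $\vec{y}$ as an anchor, for every $s>1$ I can position two square closed windows of side length $cg^{s-2}$ whose western edges align with the scaled-up copies of cuts $\kappa_1$ and $\kappa_2$ inside a stage-$(s-1)$ sub-copy of $\mathbf{F}^c$ hanging off $\vec{y}$. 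The east-freeness of $\vec{y}$ guarantees that the north, south, and east sides of each window carry no glues of $\alpha$, so the only bond-forming glues crossing each window lie along its western edge; moreover, these two western edges carry glues in identical relative positions because the cuts are equivalent.

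Second, I will amplify this construction via the equivalent-columns hypothesis and the self-similarity of $\mathbf{F}^c$. At each stage $s$, the two equivalent columns of $G$ propagate into two equivalent super-columns of width $cg^{s-2}$, each of which offers a same-size pair of windows constructed as above. This yields an unbounded number of windows across all stages, all producing bond-forming submovies whose positions agree up to translation. Since, for each fixed window size, the number of distinct bond-forming submovies along a vertical segment is finite (cf. Footnote~\ref{note:combinatorics}), the generalized pigeonhole principle yields, at some sufficiently large stage, two distinct windows $w$ and $w'$ of the same size whose bond-forming submovies are equal up to a nonzero translation $\vec{c}$. As in the proof of Theorem~\ref{thm:main}, I will select $w$ and $w'$ so that the seed of $\alpha$ lies either inside both or outside both. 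Because $w$ and $w'$ are square windows of the same size, $w + \vec{c} = w'$, so the enclosure condition is trivially met, and the Closed Window Movie Lemma yields a producible assembly $\alpha'_O \cup (\alpha_I + \vec{c})$, which differs from $\alpha$ inside $w'$ because $w$ and $w'$ enclose distinct sub-configurations of $\mathbf{F}^c$, contradicting the strict self-assembly of $\mathbf{F}^c$ by $\mathcal{T}$.

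The main obstacle I expect is the first step: extracting the east-free point $\vec{y}$ from the equivalent-cut hypothesis and confirming that, for every $s>1$, a same-size pair of windows with three free sides can actually be positioned. The phrase ``equivalent vertical cuts on the same side of all vertical bridges'' must be parsed carefully, and some care is needed to rule out degenerate cases in which the two cuts are too close to the boundary of the stage copies or straddle a vertical bridge in the scaled-up picture. Once that combinatorial lemma is pinned down, the pigeonhole step and the application of the Closed Window Movie Lemma proceed essentially as in the proof of Theorem~\ref{thm:main}.
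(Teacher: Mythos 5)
Your proposal takes essentially the same route as the paper's own justification of this corollary, which is given only informally in the discussion preceding its statement: position congruent closed windows with three free sides whose western edges run along scaled-up copies of the two equivalent vertical cuts east of the vertical bridge, use the two equivalent columns together with self-similarity to multiply the number of such windows from stage to stage, extract by pigeonhole two windows whose bond-forming submovies agree up to a nonzero translation, and apply the Closed Window Movie Lemma (enclosure being trivial since the windows have the same shape and size, which is also why the paper remarks the standard WML would suffice here) to contradict strict self-assembly. Your handling of the east-free anchor point and of the pigeonhole count is stated at the same level of informality as the paper's own discussion, so your argument neither departs from nor falls short of what the paper itself does for this corollary.
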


Symmetrically, a similar result holds for fractals with equivalent
rows and equivalent horizontal cuts.

To conclude this section, we note that Corollary~\ref{cor:same_column}
could have been proved using the standard Window Movie Lemma
introduced in \cite{WindowMovieLemma}, since the windows used in the
proof have exactly the same shape and size. In the next section, we motivate
our introduction of the Closed Window Movie Lemma as a more
convenient tool in the study of scaled pier fractals.

\section{Discussion}

A fair question for one to ask is: why not simply prove
Theorem~\ref{thm:main} using the standard Window Movie Lemma from
\cite{WindowMovieLemma}? Our response is that we currently do not know
that we cannot.

For the sake of discussion, the statement of the standard WML, restricted to bond-forming submovies, is as follows.

\begin{lemma}[Standard Window Movie Lemma \cite{WindowMovieLemma}]\label{lem:swml}
Let $\vec{\alpha}~=~(\alpha_i~|~0\leq~i<~l)$ and
$\vec{\beta}~=~(\beta_i~|~0\leq~i<~m)$, with $l,m \in \mathbb{Z}^+
\cup \{\infty\}$, be assembly sequences in some TAS $\mathcal{T}$ with
results $\alpha$ and $\beta$, respectively. Let $w$ be a window that
partitions $\alpha$ into two configurations $\alpha_L$ and $\alpha_R$,
and, for some $\vec{c}\neq(0,0)$, $w' = w + \vec{c}$ be a translation
of $w$ that partitions $\beta$ into two configurations $\beta_L$ and
$\beta_R$.  Furthermore, define $M_{\vec{\alpha},w}$ and
$M_{\vec{\beta},w'}$ to be the respective window movies for
$\vec{\alpha},w$ and $\vec{\beta},w'$ and define $\alpha_L, \beta_L$
to be the sub-configurations of $\alpha$ and $\beta$ containing the
seed tiles of $\alpha$ and $\beta$, respectively. Then, if
$\mathcal{B}\left(M_{\vec{\alpha},w}\right) + \vec{c} =
\mathcal{B}\left(M_{\vec{\beta},w'}\right)$, it is the case that the
following two assemblies are also producible: (1) the assembly
$\alpha_L\beta'_R = \alpha_L \cup \beta'_R$ and (2) the assembly
$\beta'_L\alpha_R = \beta'_L \cup \alpha_R$, where $\beta'_L = \beta_L
- \vec{c}$ and $\beta'_R = \beta_R - \vec{c}$.
\end{lemma}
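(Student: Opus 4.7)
The plan is to construct, for each of the two target assemblies, a valid assembly sequence by interleaving $\vec{\alpha}$ with a $-\vec{c}$-translated copy of $\vec{\beta}$ — essentially the same strategy used in the proof of the Closed Window Movie Lemma, but without the enclosure hypothesis. For the first conclusion, since $\alpha_L$ contains the seed of $\alpha$, my constructed sequence $\vec{\gamma}$ begins with that seed and then interleaves steps drawn from $\vec{\alpha}$ (whenever the next tile lands in $\alpha_L$) with steps drawn from $\vec{\beta}$ shifted by $-\vec{c}$ (whenever the shifted next tile lands in $\beta'_R$), discarding the steps of either sequence that land on the ``wrong'' side of the window. The second conclusion follows by the symmetric construction that swaps the roles of $\vec{\alpha}$ and $\vec{\beta}$ and starts from the translated seed of $\vec{\beta}$.

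The interleaving is driven by a joint walk through the bond-forming window movies $M = \mathcal{B}(M_{\vec{\alpha},w})$ and $M' = \mathcal{B}(M_{\vec{\beta},w'})$, very much like the algorithm in Figure~\ref{fig:algo-seq}. Between successive events of $M$, I flush all pending steps of $\vec{\alpha}$ whose positions lie in $\dom{\alpha_L}$ (advancing a pointer $i$ into $\vec{\alpha}$), stopping just before the step that would place the next cross-window glue recorded in $M$; in parallel, I flush pending steps of $\vec{\beta}-\vec{c}$ whose shifted positions lie in $\dom{\beta'_R}$ (advancing a pointer $j$ into $\vec{\beta}$), stopping just before the corresponding event in $M'$. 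Because $M+\vec{c}=M'$, the two pointers advance in lock-step at cross-window events, so the two partial sequences always agree on which cross-window glues are present at any moment during the construction of $\vec{\gamma}$.

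The main obstacle, as usual with window-movie arguments, is verifying that every single-tile addition in $\vec{\gamma}$ is $\tau$-stable against the current result. The argument is the same in spirit as in Lemma~\ref{lem:wml}: a tile placed on the $L$-side at position $\vec{x}$ has its same-side neighbors equal to the tiles of $\alpha_L$ that preceded it in $\vec{\alpha}$ (and therefore, by construction, in $\vec{\gamma}$); its cross-window neighbors, if any, contribute only glues recorded in $M$, and by the lock-step property those glues have been reproduced (translated) by the $R'$-side placements already appended to $\vec{\gamma}$. A mirrored argument works for tiles placed on the $R'$-side using $M'$. The subtlety that the Closed WML in this paper removes — and that must be handled separately here — is that $\alpha_L$ and $\beta'_R$ need to occupy disjoint positions in order for $\alpha_L\cup\beta'_R$ to be a well-defined assembly; in the standard statement this is an implicit structural requirement of the conclusion, whereas in the Closed WML the enclosure hypothesis on $w+\vec{c}\subseteq w'$ forces the analogous non-overlap conclusion automatically, which is precisely why the modified version is more convenient for the pier-fractal application.
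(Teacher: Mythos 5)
The paper itself does not prove Lemma~\ref{lem:swml}: it is quoted from \cite{WindowMovieLemma} in the Discussion section for comparison, and the only window-movie proof given in this paper is that of the Closed Window Movie Lemma (Lemma~\ref{lem:wml}). Your outline is exactly the standard interleaving argument --- the same strategy as the algorithm of Figure~\ref{fig:algo-seq} and the original proof in \cite{WindowMovieLemma} --- and it is correct as an outline: walk the bond-forming movies $M$ and $M'$ in lock-step, flush the intervening same-side placements of $\vec{\alpha}$ (respectively of $\vec{\beta}-\vec{c}$), and justify each single-tile addition by (i) same-side predecessors, which occur earlier in the original sequence and hence in $\vec{\gamma}$, and (ii) cross-window glues, which by $\mathcal{B}(M_{\vec{\alpha},w})+\vec{c}=\mathcal{B}(M_{\vec{\beta},w'})$ have already been supplied by the other side.

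One quibble: your closing paragraph has the comparison essentially backwards. In the standard WML no separate handling of overlaps is needed, because $w'=w+\vec{c}$ is an exact translate of $w$; after translating $\beta$ by $-\vec{c}$, the configurations $\alpha_L$ and $\beta'_R$ lie on opposite sides of the \emph{same} cut $w$, so $\dom\alpha_L$ and $\dom\beta'_R$ are disjoint by construction and $\alpha_L\cup\beta'_R$ is automatically well defined. It is the Closed WML --- where $w$ and $w'$ are different windows, not translates of each other --- in which non-overlap genuinely fails without an extra assumption; that is precisely what the enclosure hypothesis $inside(w+\vec{c})\subseteq inside(w')$ is used for in Case~2 of the paper's proof of Lemma~\ref{lem:wml}. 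Also note that conclusion (2) yields an assembly whose seed is the seed of $\beta$ translated by $-\vec{c}$, so producibility there is to be read up to translation (equivalently, your symmetric construction builds $\beta_L\cup(\alpha_R+\vec{c})$ starting from the seed of $\beta$), which is how the original statement is intended. Neither point affects the soundness of your overall strategy.
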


\begin{figure}[htp]
\centering
 \includegraphics[width=0.6\textwidth]{./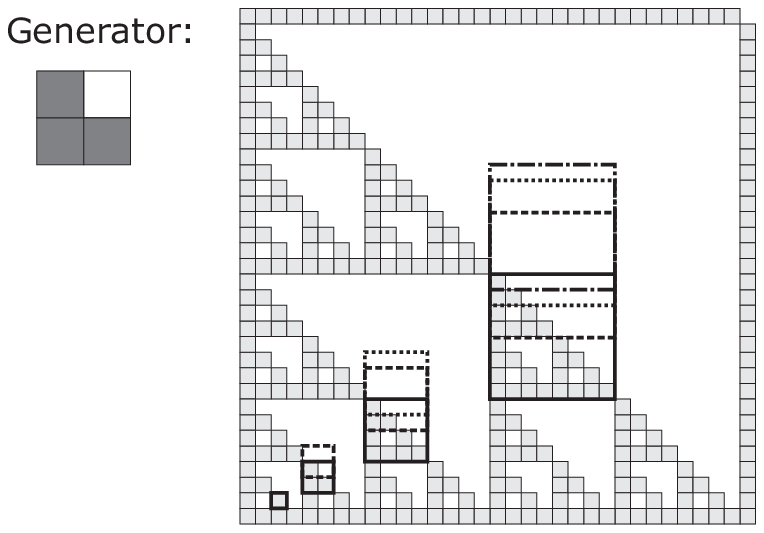}
\caption{In each stage of the Sierpinski triangle, it is possible to define a sequence of closed-rectangular window movies, with the following properties: the number of window movies in the sequence is proportional to the stage number and the set of points contained in each window is unique. }
\label{fig:sierpinski_standard_wml}
\end{figure}

Basically, the reason we do not use the standard WML to prove
Theorem~\ref{thm:main} is because we simply are not able to devise a
unified strategy for finding two closed-rectangular window movies in a
pier-fractal-shaped assembly that (1) have equivalent (up to
translation) bond-forming submovies and (2) contain different
sub-shapes of the assembly. On the one hand, it is trivial to find two
such closed-rectangular window movies in a pier-fractal-shaped
assembly whose sub-shapes are equal. But this does not help us derive
the contradiction that we need to prove Theorem~\ref{thm:main}. On the
other hand, it is also trivial to find two closed-rectangular window
movies that contain different sub-shapes of the assembly, but, as a
result of the self-similarity of pier fractals, do not have equivalent
(up to translation) bond-forming submovies, at which point the
conditions of the hypothesis of the standard WML are no longer
satisfied.

\begin{figure}[htp]
\centering
 \includegraphics[width=4.5in]{./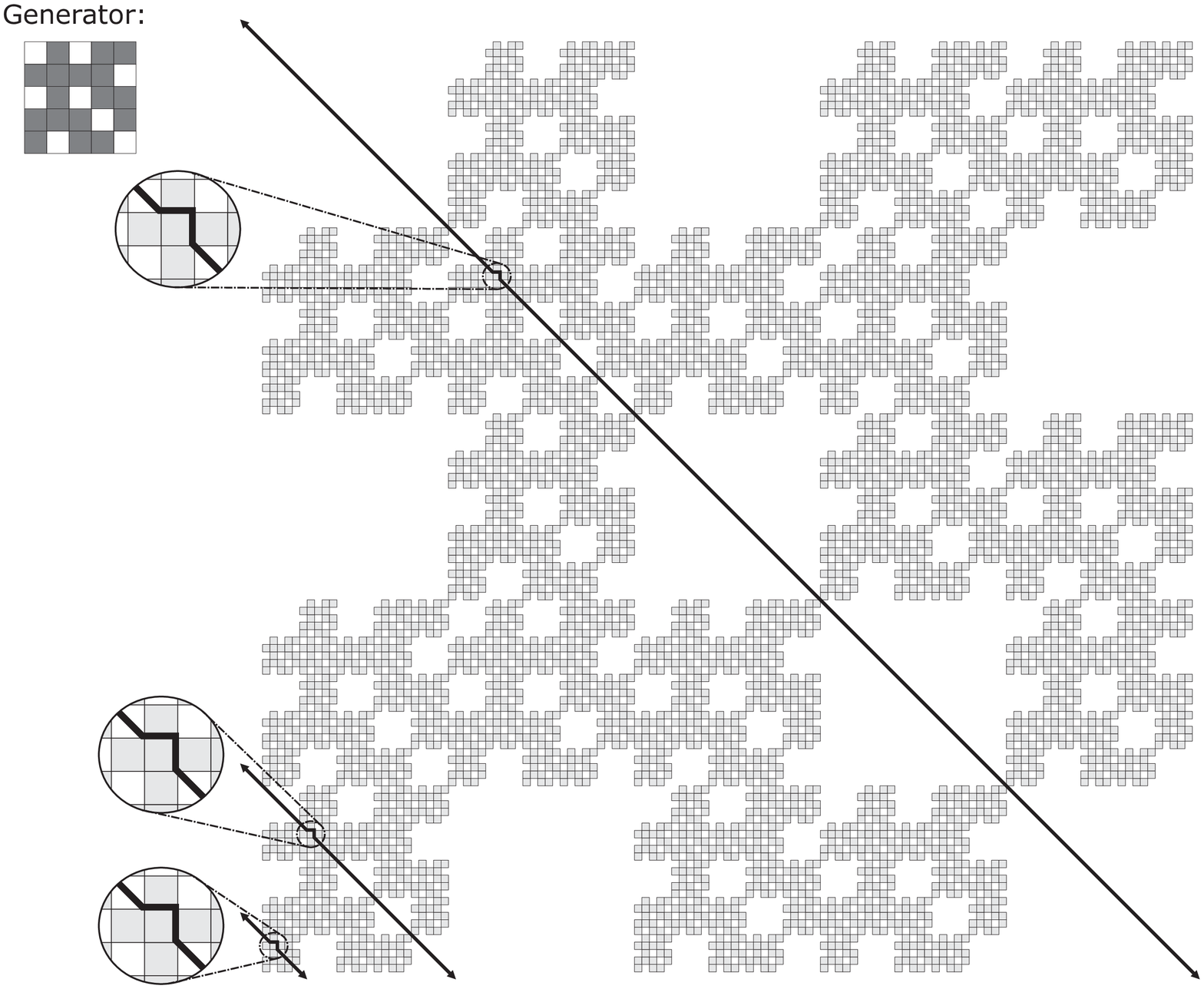}
\caption{A generator for a pier fractal and the first three stages of
  an unscaled version of it. Note that it is possible to apply the
  standard WML to this pier fractal using infinite-open windows. }
\label{fig:interesting_open_window}
\end{figure}

In our attempts to resolve this dilemma, we investigated the use of an
infinite-open window strategy, as opposed to a closed-rectangular
window strategy. But this approach has its own set of technical
challenges.
Fortunately, these challenges can be dismissed! One must simply
observe that, in order to prove Theorem~\ref{thm:main}, one does not
need the ``two-way-assembly-replacement'' power offered by the
conclusion of the standard WML. In fact, in order to derive a
contradiction to prove Theorem~\ref{thm:main}, one merely needs to be
able to replace one portion of a tile assembly with another portion in
a strictly ``one-way'' fashion, i.e., the part of the tile assembly
being used to replace another part does not need to be able to be
replaced by the part of the tile assembly it is replacing. Thus, we
weaken the conclusion and strengthen the hypothesis of the standard
WML to get the Closed WML, which turns out to be much more
accommodating to a unified, closed-rectangular window proof technique
for pier fractals.

It appears that the hypothesis of the standard WML, unlike that of the
Closed WML, is too strong to be able to ``handle'' all pier fractals
under a unified closed-rectangular window proof technique. However, it
is worthy of note that in some special cases, it is possible to use
the standard WML to prove that certain pier fractals do not strictly
self-assemble. For example, it is possible to prove that the
Sierpinski triangle does not strictly self-assemble at any positive
scale factor (see Figure~\ref{fig:sierpinski_standard_wml} for the
proof idea). Next, consider the tree fractal defined by the generator
given in Figure~\ref{fig:interesting_open_window}. In this case, it is
possible to apply the standard WML using an open-infinite window proof
technique (informally depicted in
Figure~\ref{fig:interesting_open_window}). Unfortunately, depending on
the geometry of the particular fractal, neither of the previous two
applications of the standard WML, either with closed-rectangular or
open-infinite windows, immediately generalizes to even the set of all
tree fractals, which is a strict sub-class of pier fractals. Even more
troubling, we suspect that, for the pier fractal whose generator is
shown in Figure~\ref{fig:probably_impossible_generator}, it is not
possible to apply the standard WML, with windows of any shape, to prove that it
does not strictly self-assemble at any positive scale factor.

\begin{figure}[htp]
\centering
 \includegraphics[width=0.75in]{./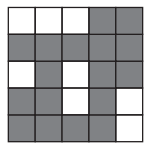}
\caption{How can one apply the standard WML to prove that any scaled
  version of the pier fractal with this generator does not
  strictly self-assemble? }
\label{fig:probably_impossible_generator}
\end{figure}

\section{Conclusion}\label{sec:conclusion}

In this paper, we made three contributions. First, we gave a new
characterization of tree fractals in terms of simple geometric
properties of their generator (see
Section~\ref{sec:appendix}). Second, we proved a new variant of the
Window Movie Lemma in \cite{WindowMovieLemma}, which we call the
``Closed Window Movie Lemma'' (see
Section~\ref{sec:window-movie-lemma}). Third, we proved that no
scaled-up version of any discrete self-similar pier fractal strictly
self-assembles in the aTAM (see Section~\ref{sec:main-result}).

As we pointed out in Section~\ref{sec:generalizations}, the scope of
applicability of the Closed Window Movie Lemma is much wider than the
class of pier fractals. Recall that Corollary~\ref{cor:same_column}
applies the Closed Window Movie Lemma to discrete self-similar
fractals with no pier-like sub-configurations and an arbitrary number
of vertical and horizontal bridges. In future work, we would like to
provide a characterization of the class of all fractals to which the
Closed Window Movie Lemma applies, that is, a strict
super-class of the class of pier fractals. In addition, it would be
satisfying to find a crisp characterization of the differences (if
any) between the scope of applicability of the standard WML and that
of the Closed WML. For instance, we would like to prove our conjecture
that it is not possible to use the standard WML to prove that any
scaled version of the pier fractal whose generator is shown in
Figure~\ref{fig:probably_impossible_generator} does not strictly
self-assemble in the aTAM.

\vspace*{-2mm}\section*{Acknowledgments}
We would like to thank Kimberly Barth and Paul Totzke for contributing to
an earlier proof of Corollary~\ref{cor:tree} that appeared in
\cite{ScaledTreeFractals} and that we generalized to obtain the proof of
our main result in this work.\vspace*{-2mm}

\bibliographystyle{amsplain}
\bibliography{tam}

\ifabstract
\newpage
\section{Appendix}\label{sec:appendix}

\magicappendix
\fi

\end{document}